\def\eqref#1{equation~\ref{#1}}
\def\1{\bm{1}}
\DeclareMathAlphabet{\mathsfit}{\encodingdefault}{\sfdefault}{m}{sl}
\SetMathAlphabet{\mathsfit}{bold}{\encodingdefault}{\sfdefault}{bx}{n}
\newcommand{\R}{\mathbb{R}}
\titlespacing*{\section}{0pt}{0.75\baselineskip}{0.35\baselineskip}
\titlespacing*{\subsection}{0pt}{0.45\baselineskip}{0.15\baselineskip}
\theoremstyle{plain} 
\newtheorem{definition}{Definition}[section]
\newtheorem{lemma}{Lemma}[section]
\newtheorem{theorem}{Theorem}[section]
\newtheorem{corollary}{Corollary}[section]
\newcommand{\capturetheflag}{Multi-goal-capture\xspace}
\newcommand{\heterogeneitygap}{heterogeneity gain\xspace}
\newcommand{\Heterogeneitygap}{Heterogeneity gain\xspace}
\newcommand{\heterogeneitygaps}{heterogeneity gains\xspace}
\newcommand{\HeterogeneityGap}{Heterogeneity Gain\xspace}
\def\blfootnote{\gdef\@thefnmark{}\@footnotetext}
\DeclareMathOperator*{\inneragg}{\bigoplus}
\DeclareMathOperator*{\outeragg}{\bigoplus}
\newcommand{\headline}[1]{\noindent\textbf{#1.}}
\title{When Is Diversity Rewarded in Cooperative Multi-Agent Learning?}
\author{Michael Amir\thanks{Equal contribution, listed alphabetically.} \quad Matteo Bettini$^{*}$ \quad Amanda Prorok \\
 Department of Computer Science and Technology\\
  University of Cambridge\\
  \texttt{\{ma2151,mb2389,asp45\}@cl.cam.ac.uk} \\
}
\begin{document}

\maketitle

\blfootnote{\texttt{Website and code:} \url{\websiteurl}}

\begin{figure}[!h] 
    \centering  
    \includegraphics[width=1.0\textwidth]{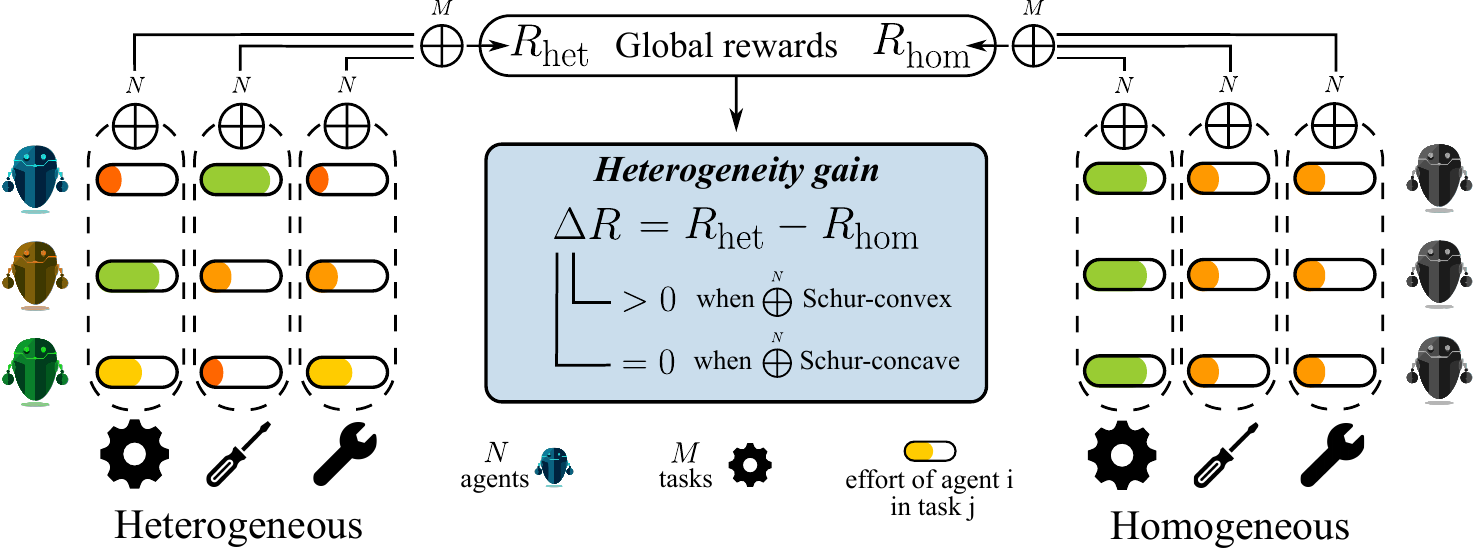}  
    \caption{We study and categorize what reward structures lead to the need for behavioral heterogeneity in multi-agent multi-task environments.}  
    \label{fig:hero}  
\end{figure}

\begin{abstract}
The success of teams in robotics, nature, and society often depends on the division of labor among diverse specialists; however, a principled explanation for \emph{when} such diversity surpasses a homogeneous team is still missing. Focusing on multi-agent task allocation problems, we study this question from the perspective of reward design: what kinds of objectives are best suited for heterogeneous teams? We first consider an instantaneous, non-spatial setting where the global reward is built by two generalized aggregation operators: an \emph{inner} operator that maps the \(N\) agents’ effort allocations on individual tasks to a task score, and an \emph{outer} operator that merges the \(M\) task scores into the global team reward. We prove that the curvature of these operators determines whether heterogeneity can increase reward, and that for broad reward families this collapses to a simple convexity test. Next, we ask what incentivizes heterogeneity to \textit{emerge} when embodied, time-extended agents must \emph{learn} an effort allocation policy. To study heterogeneity in such settings, we use multi-agent reinforcement learning (MARL) as our computational paradigm, and introduce \emph{Heterogeneity Gain Parameter Search (HetGPS)}, a gradient-based algorithm that optimizes the parameter space of underspecified MARL environments to find scenarios where heterogeneity is advantageous.  Across different environments, we show that HetGPS rediscovers the reward regimes predicted by our theory to maximize the advantage of heterogeneity, both validating HetGPS and connecting our theoretical insights to reward design in MARL. Together, these results help us understand when behavioral diversity delivers a measurable benefit.
\end{abstract}

\section{Introduction}
\label{sec:intro}

Collective systems, from robot fleets to insect colonies, tend to adopt one of two structures: a uniform shared blueprint or a set of distinct, specialized roles. In multi-agent learning, this is reflected in the choice between behavioral homogeneity (all agents behave identically) and behavioral heterogeneity (agents specialize) \citep{bettini2023hetgppo,bettini2023snd,rudolph2021desperate}. Such behavioral diversity can be achieved, e.g., via distinct policies (neural heterogeneity) or shared policies conditioning on diverse inputs, such as agent roles \citep{leibo2019malthusian}. Although diversity unlocks role specialization and asymmetric information use, it also introduces extra coordination cost, representation overhead, and learning complexity \citep{chenghao2021celebrating}. This trade-off leads us to ask: under what conditions will heterogeneous agents outperform the best homogeneous baseline?


A natural setting to study this question in is \emph{multi-agent task allocation}, where $N$ agents allocate effort across $M$ concurrent tasks. Here, we define \textit{effort} as an \textit{abstract quantity} representing the agent's contribution to a given task (e.g., proximity to a goal, or quantity of a task-specific resource the agent gathered), computed in an environment-specific manner. The focus of our work is \textit{behavioral, outcome-based} heterogeneity, defined through these efforts: a homogeneous team is one where all agents have the same effort allocations (e.g., every agent allocates $0.75$ of its effort to task A and $0.25$ to task B), whereas a heterogeneous team allows agents to achieve specialized allocations.
 We relate this abstract effort to environmental rewards in many diverse environments, including cooperative navigation, tag, football (\autoref{sec:experiments}), Colonel Blotto games, and level-based foraging (\autoref{appendix:examples_of_marl_environments})  \citep{Roberson2006,noel2022reinforcementcolonelblotto,papoudakis2021benchmarking_multilevelforaging,terry2021pettingzoo}. We ask: what effort-based reward functions require heterogeneous behaviors to be maximized?


\headline{Theoretical Insights} We first study a pure, non-spatial and instantaneous variant of multi-agent task allocation: each agent commits its effort allocation once, and the team is rewarded immediately (\autoref{sec:problemsetting}). 
We start from the observation that team reward in many effort–allocation problems can be expressed as 
\(
R(\mathbf A)=U\bigl(T_1(\mathbf a_1),\dots,T_M(\mathbf a_M)\bigr),
\)
where $\mathbf A=(r_{ij})$ is the $N \times M$ matrix of agent effort allocations, and $a_i$ is the effort allocation vector of agent $i$. The inner, task-level operator~$T_i$ assigns a score  corresponding to the $N$ agents’ efforts on the $i$th task and the outer operator~$U$ combines the resulting $M$ task scores into a scalar global reward.  Choosing $T$ and $U$ to be the sum operator $\sum$ recovers the  $\sum_{j}\sum_{i} r_{ij}$ reward common in RL, whereas alternatives such as \textsc{max}, \textsc{min}, power means, or soft-max encode very different effort–reward relationships. Assuming such a reward structure, we compare the optimal heterogeneous reward, $R_{\mathrm{het}}$, with the best reward attainable under a homogeneous allocation, $R_{\mathrm{hom}}$, and define their difference as the \emph{\heterogeneitygap} $\Delta R = R_{\mathrm{het}}-R_{\mathrm{hom}}$ (\autoref{fig:hero}).  
Our main insight is that $\Delta R$ is determined by the \emph{curvature} of $T$ and $U$: specifically, whether they are \emph{Schur-convex} or \emph{Schur-concave}. These criteria immediately enable us to characterize the heterogeneity gain of broad families of reward functions (Table~\ref{tab:param-agg-extended}); for instance, the soft-max operator switches from Schur-concave to Schur-convex as its temperature increases. We also find exact expressions for $\Delta R$ in several important cases.  These results help explain, for example, why a reward structure that involves a $\min$ operator (usually used to enforce that only one agent should pursue a goal) will require behavioral diversity from the agents~\citep{bettini2024controlling}.  We relate our findings to multi-agent reinforcement learning (MARL), where environments may be embodied and time-extended, by setting \(R(\mathbf{A_t})\)~as~the~stepwise~reward~over an~allocation~sequence~$(A_t)_{t=1,\ldots,T}$.


\headline{Algorithmic Search}  To study heterogeneity in MARL settings not covered by our theoretical  analysis, we develop \emph{Heterogeneity Gain Parameter Search} (HetGPS), a gradient-based algorithm that optimizes parameters~$\theta$ of underspecified, differentiable MARL environments via backpropagation to find configurations that maximize or minimize the empirical $\Delta R$ (we assume differentiability for training efficiency, but consider non-differentiable environments in \autoref{app:stability_of_bilevel_optimization}). While HetGPS can in principle optimize any differentiable environment feature to influence $\Delta R$, we use it here  to explore reward structures, as a means of verifying and extending our theoretical insights.
Maximizing the heterogeneity gain allows us to discover reward functions where behavioral diversity is essential. Minimizing the gain leads us to settings where homogeneous policies are sufficient.



\headline{Experiments} We validate our theoretical insights, and HetGPS, in simulation, by evaluating in both single-shot and long-horizon reinforcement learning environments whose reward structure instantiates the kinds of aggregation operators studied. 
First, in a continuous and a discrete  matrix game, we test reward structures based on all nine possible combinations of $\{\min,\mathrm{mean},\max\}$, and find that the \heterogeneitygaps that result from the agents' learned policies match our theoretical predictions: concave outer operators and convex inner operators benefit heterogeneous teams. Next, we test the same operators in embodied, partially observable environments: \capturetheflag, Tag, and Football. We find that our theory also transfers to such long-horizon MARL settings, and show that reward structures that maximize heterogeneity are meaningful and practically useful. Finally, we find that the empirical \heterogeneitygap disappears as the richness of agents' observations is increased, recovering the finding that rich observations allow agents with~identical~policy~networks~to~be~behaviorally~heterogeneous~\citep{bettini2023hetgppo,leibo2019malthusian}. 


We then turn to HetGPS.  Across two parameterizable families of operators (Softmax and Power-Sum), we show that, despite running on embodied environments, HetGPS rediscovers the reward regimes predicted by our curvature theory to maximize the heterogeneity gain, validating both HetGPS and the connection between our theoretical insights and MARL reward design (\autoref{sec:experiments}).

\subsection{Related Works}

\headline{Behavioral Diversity in MARL}
Behavioral heterogeneity, where capability-identical agents learn distinct policies, can markedly improve exploration, robustness, and reward \citep{bettini2023hetgppo}. Yet heterogeneity reduces parameter sharing and thus sample-efficiency, so a core practical question is \emph{when} its benefits outweigh that cost. Existing MARL methods typically adopt one of two poles: endowing each agent with its own network, or enforcing parameter sharing so all agents follow a single policy~\citep{gupta2017cooperative,rashid2018qmix,foerster2018counterfactual,kortvelesy2022qgnn,sukhbaatar2016learning}. A large body of work explores the efficiency–diversity trade-off~\citep{christianos2021scaling,fu2022revisiting} by interpolating between these extremes: e.g., injecting agent IDs into the observation~\citep{foerster2016learning,gupta2017cooperative}, masking different subsets of shared weights~\citep{NEURIPS2024_274d0146}, sharing only selected layers~\citep{chenghao2021celebrating}, pruning a shared network into agent-specific sub-graphs~\citep{kim2023parameter}, or producing per-agent parameters with a hypernetwork~\citep{tessera2024hypermarl}. Further, several methods for promoting behavioral diversity in MARL have been proposed, such as:
conditioning agents' policies on a latent representation~\citep{wang2020roma}, 
decomposing and clustering action spaces~\citep{wang2021rode}, dynamically grouping agents to share parameters~\citep{yang2022ldsa},
applying structural constraints to the agents' policies~\citep{bettini2024controlling},
or by intrinsic rewards that maximize diversity~\citep{chenghao2021celebrating,jaques2019social,wang2019influence,jiang2021emergence,mahajan2019maven,liu2023contrastive,liu2024interaction,li2024learning}. While these studies demonstrate \textit{how} to obtain diversity, they presume tasks where heterogeneity is advantageous. Our work addresses the orthogonal question of \textit{when} diversity is beneficial, giving a principled characterization of which reward structures create that incentive in the first place.


\headline{Task Allocation}
Classic resource–allocation settings, in which a team must divide finite effort among simultaneous objectives, are a central proving ground for cooperative MARL. In robotics, potential-field and market-based learning are the dominant tools for coverage, exploration, and load-balancing tasks \citep{Gupta2017,Lowe2017}. Game-theoretic analysis and, recently, MARL, play the same role in discrete counterparts such as Colonel-Blotto contests, where players decide how to spread forces over several ``battlefields'' \citep{Roberson2006,noel2022reinforcementcolonelblotto}. Embodied benchmarks like level-based foraging are heavily studied in MARL, and expose the tension between uniform and specialized effort allocations \citep{papoudakis2021benchmarking_multilevelforaging}. The survey of \citep{BasarOverview} highlights how cooperative performance is governed by the shape of the shared reward and the equilibria it induces. Our contribution sharpens this perspective: we prove that the \emph{curvature} of nested aggregation operators characterizes when heterogeneous allocations dominate homogeneous ones, and introduce algorithmic tools for further exploring settings where diversity is needed. 

\headline{Environment Co-design}
Co-design is a paradigm where agent policies \textit{and} their mission or environment are simultaneously optimized \citep{gao2024codesign,amir2025recodereinforcementlearningbaseddynamic}. Our HetGPS algorithm is related to PAIRED~\citep{dennis2020emergent}, a method which automatically designs environments in a curriculum such that an \textit{antagonist} agent succeeds while the \textit{protagonist} agent fails. This makes it so that resulting environments are challenging enough without being unsolvable. Similarly, HetGPS designs environments that are advantageous to heterogeneous teams, while disadvantaging homogeneous teams.
The key differences are: (1) the environment designer uses direct regret gradient backpropagation via a differentiable simulator instead of RL; this enables higher efficiency by directly leveraging all the environment gradient data available during collection while preventing RL-related issues identified in subsequent works~\citep{jiang2021replayguided,parker-holder2021that} such as exploration inefficiency and the need for a reward signal; and
(2), the protagonist and antagonist are independent multi-agent teams instead of single agents.  


\section{Problem Setting}
\label{sec:problemsetting}
Consider a set of \(N\) agents and \(M\) tasks. Each agent \( i \in \{1,\ldots,N\} \)
allocates \textit{effort} among the tasks according to the budget constraints:
\(
r_{i1}, r_{i2}, \ldots, r_{iM} \geq 0 
 \text{ with }  
\sum_{j=1}^{M} r_{ij} \leq 1
\), where $r_{ij}$ is \textit{defined} as the effort agent $i$ puts into task $j$. Here, ``effort'' $r_{ij}$ is a scalar input to the reward function representing the agent's contribution to the task, such as resource allocation (\autoref{appendix:examples_of_marl_environments}) or realized goal proximity (\autoref{sec:experiments}).
We can consider both continuous allocations ($r_{ij}$ can be any real number) and discrete allocations ($r_{ij}$ restricted to some finite set of options), with most results in this work focusing on the continuous case. We collect all
agents' allocations into an \(N \times M\) matrix:
\(
\mathbf{A} 
= 
[r_{ij}]
\)\footnote{All results in this work can be extended to the case where \(
r_{i1}, r_{i2}, \ldots, r_{iM} \geq B_{min} 
 \text{ and }  
\sum_{j=1}^{M} r_{ij} \leq B_{max}
\) for some arbitrary $B_{min}, B_{max} \in \mathbb{R}$.}.

For each task \(j\) let the \(j\)-th column of the effort matrix be \( \mathbf a_j=[r_{1j},\dots,r_{Nj}]^\top\).  A \emph{task-level aggregator} \(T_j:\R^{N}\!\to\!\R\) maps these efforts to a \textit{task score}, and an \emph{outer aggregator} \(U:\R^{M}\!\to\!\R\) combines the \(M\) scores into the team reward, \(R(\mathbf A)=U\bigl(T_1(\mathbf a_1),\dots,T_M(\mathbf a_M)\bigr)\).  Both \(T_j\) and \(U\) are \emph{generalised aggregators}: symmetric and coordinate-wise non-decreasing, mirroring the familiar properties of \(\sum\). When every task shares the same inner aggregator we simply drop the subscript and write \(T\). In Figure \ref{fig:hero}, to highlight fact that $R$ is aggregating rewards, we write \(R(\mathbf A)=\outeragg_{j=1}^{M}\inneragg_{i=1}^{N} r_{ij}\), where (in abuse of notation) the outer symbol \(\outeragg\) denotes \(U\) and the inner symbol \(\inneragg\) denotes \(T_j\).

\headline{Homogeneous vs. Heterogeneous Strategies} A \emph{homogeneous strategy} is one where all agents have the same allocation (i.e., devote the same amount of effort to a given task $j$):
\(
r_{ij} = c_j 
 \forall\, i, j
\). In this case, the allocation matrix \(\mathbf{A}\) consists of identical rows. We define
\(
R_{\mathrm{hom}}
=
\max_{(c_1,\ldots,c_M) \in \Delta_{\!\le}^{M-1}} 
\;
R\Bigl(\mathbf{A}\Bigr)
\)
where \( \Delta_{\!\le}^{M-1} = \{(c_1,\ldots,c_M)\mid c_j\ge 0,\; \sum_j c_j \leq 1\}\) is the closed unit simplex. A \emph{heterogeneous strategy} allows each agent \( i \) to choose any 
\((r_{i1}, \ldots, r_{iM}) \in \Delta_{\!\le}^{M-1}\) independently. Then
\(
R_{\mathrm{het}}
=
\max_{\mathbf{A} \in (\Delta_{\!\le}^{M-1})^N}
\;
R\Bigl(\mathbf{A}\Bigr).
\)
We define the \textit{\heterogeneitygap} as:
\(
\Delta R = R_{\mathrm{het}} - R_{\mathrm{hom}}.
\)
This quantity measures how much greater the overall reward can be when agents are
allowed to specialize differently across tasks, compared to when they must behave identically.  Characterizing when \(\Delta R > 0\) is our main focus in this work. 

\headline{MARL extension}
In MARL, the effort value $r_{ij}$ represents the contribution of agent $i$ to task $j$ \textit{as computed by the environment based on agent $i$'s actions}. The aggregate reward $R(A)$ can represent:
(i) the payoff of a one-shot effort-allocation game, (ii) the return or sparse terminal reward of an episode, or (iii) the stepwise reward, giving the discounted return \(\sum_{t= 0}^T\gamma^{t}\,R\bigl(\mathbf{A}_{t}\bigr)\) for a sequence \(\bigl(\mathbf{A}_{t}\bigr)_{t=1,\ldots T}\) of allocations\footnote{To extend this further, our theoretical results hold even if the reward function varies over time, $R_t(A_t)$.}. \(\Delta R>0\) implies that the best heterogeneous policies outperform the best homogeneous ones.
In practice, this is \textit{evidence of} an advantage~to~heterogeneity and not~a formal~guarantee, as learning agents may not always converge to optimal policies.


\headline{Examples} \autoref{appendix:parametrized_aggregator_table} contains examples of generalized aggregators. Our framework is flexible, and can be applied to many settings, including ones not ordinarily thought of as ``task allocation'': in \autoref{sec:experiments}, we apply it to one-shot allocation games, multi-agent navigation, tag, and football. Furthermore, in \autoref{appendix:examples_of_marl_environments}, we analyze the \heterogeneitygap of two well-known environments from the literature: Team Colonel Blotto games \citep{noel2022reinforcementcolonelblotto} and level-based foraging \citep{papoudakis2021benchmarking_multilevelforaging}.

\section{Analysis}

Focusing on continuous allocations, we ask what properties of aggregators guarantee $\Delta R > 0$. We draw  on the concept of Schur-convexity. Schur-convex functions can be understood as generalizing symmetric, convex aggregators: every convex and symmetric function is Schur-convex, but a Schur-convex function is not necessarily convex \citep{roberts1974convex, peajcariaac1992convex}. Proofs for all results are available in \autoref{appendix:formal_analysis}.

Since both the outer aggregator $U$ and the task-level aggregators $T_j$ are non-decreasing, an optimal effort allocation will always have each agents' efforts summing to $1$. Hence, from here on, we \textbf{assume} without loss of generality that $\sum_{j=1}^{M} r_{ij} = 1$. We call such allocations \textbf{admissible}.

\begin{definition}[Majorization]
\label{def:majorization}
Let \(x=(x_1,\dots,x_N)\) and \(y=(y_1,\dots,y_N)\) be two vectors in \(\mathbb{R}^N\) such that $\sum_{i=1}^N x_{(i)} = \sum_{i=1}^N y_{(i)}$. Let \(x_{(1)} \ge x_{(2)} \ge \cdots \ge x_{(N)}\) and \(y_{(1)} \ge y_{(2)} \ge \cdots \ge y_{(N)}\) be the components of \(x\) and \(y\) sorted in descending order. We say that \(x\) \emph{majorizes} \(y\) (written \(x \succ y\)) if
\(
\sum_{i=1}^k x_{(i)} \ge \sum_{i=1}^k y_{(i)}\) for \( k = 1,2,\dots, N-1,N.
\)
\end{definition}


\begin{definition}[Schur-Convex Function]
A symmetric function \(f:\mathbb{R}^N \to \mathbb{R}\) is \emph{Schur-convex} if for any two vectors \(x,y \in \mathbb{R}^N\) with \(x \succ y\), we have
\(
f(x) \ge f(y).
\)
If the inequality is strict whenever \(x\) and \(y\) are not permutations of each other, then \(f\) is said to be \emph{strictly Schur-convex}. Similarly, \(f\) is \emph{Schur-concave} if \(f(x) \le f(y)\) whenever \(x \succ y\). 
\end{definition}

Intuitively, $x \succ y$ means one can obtain $y$ from $x$ by repeatedly moving mass from larger to smaller coordinates, thereby making the vector more uniform. Schur-convexity is then a statement on a function's  \textit{curvature}: $f$ is \emph{Schur-convex} if it increases with inequality, or is \emph{Schur-concave} if it increases with uniformity. We show here a connection between Schur-convexity (concavity) and $\Delta R$.

Call an allocation matrix $\mathbf A$ \emph{trivial} if there exists a task $j^\star$ such that every agent allocates its entire budget to that task, i.e.\ $r_{ij^\star}=B_{\max}$ and $r_{ij}=0\ \forall i,\; \forall j\neq j^\star$; otherwise $A$ is \emph{non-trivial}. Then:

\begin{theorem}[Positive \HeterogeneityGap via Schur-convex Inner Aggregators]
\label{thm:heterogeneity-gap-schurconvex}
Let $N,M \ge 2$, and assume that (i) each \emph{task‐level aggregator} $T_j$ is strictly Schur-convex and (ii) the \emph{outer aggregator} $U$ is coordinate-wise strictly increasing.  Then either all admissible optimal homogeneous allocations are trivial, or $\Delta R > 0$. 

\end{theorem}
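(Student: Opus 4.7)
The plan is to prove the contrapositive: assume some admissible optimal homogeneous allocation $\mathbf{A}^\star$ is non-trivial, and construct an admissible heterogeneous allocation $\mathbf{A}'$ with $R(\mathbf{A}') > R(\mathbf{A}^\star) = R_{\mathrm{hom}}$. Write the common row of $\mathbf{A}^\star$ as $(c_1,\dots,c_M)$ with $\sum_j c_j = 1$. Non-triviality provides two task indices $j_1 \neq j_2$ with $c_{j_1}, c_{j_2} > 0$, and the hypotheses $N \ge 2$ and $M \ge 2$ make the two-agent, two-task perturbation below well-defined.

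The key construction is a pairwise swap between agents $1$ and $2$ across tasks $j_1$ and $j_2$. Fix $\epsilon \in (0, \min(c_{j_1}, c_{j_2}))$ and let $\mathbf{A}'$ agree with $\mathbf{A}^\star$ everywhere except
\[
r'_{1 j_1} = c_{j_1} + \epsilon,\quad r'_{1 j_2} = c_{j_2} - \epsilon,\quad r'_{2 j_1} = c_{j_1} - \epsilon,\quad r'_{2 j_2} = c_{j_2} + \epsilon.
\]
Each row still sums to $1$ and all entries remain non-negative, so $\mathbf{A}'$ is admissible; crucially, every column sum is preserved.

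The columns $\mathbf{a}'_{j_1}$ and $\mathbf{a}'_{j_2}$ of $\mathbf{A}'$ now have the same totals as the corresponding uniform columns of $\mathbf{A}^\star$ but are not themselves uniform. A direct partial-sum check shows that each strictly majorizes the uniform vector with the same sum (and is not a permutation of it). By strict Schur-convexity of $T_{j_1}$ and $T_{j_2}$, we obtain $T_{j_k}(\mathbf{a}'_{j_k}) > T_{j_k}(\mathbf{a}^\star_{j_k})$ for $k=1,2$, while the other task scores are unchanged. Applying the coordinate-wise strictly increasing outer aggregator $U$ then yields $R(\mathbf{A}') > R(\mathbf{A}^\star)$, so $R_{\mathrm{het}} \ge R(\mathbf{A}') > R_{\mathrm{hom}}$ and $\Delta R > 0$.

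The main subtlety is arranging a perturbation that is simultaneously admissible (row-sum-preserving and non-negativity-preserving) and produces strict majorization in at least one column; non-triviality is precisely what guarantees two tasks with positive common mass on which such a swap is feasible, and the bound on $\epsilon$ is what keeps the perturbation inside the simplex. Once this feasibility is secured, the Schur-convex curvature of the inner aggregators and the strict monotonicity of $U$ do all the remaining work, so no global optimization over $\mathbf{A}'$ is ever required.
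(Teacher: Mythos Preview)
Your proof is correct and takes a genuinely different route from the paper. The paper constructs a \emph{global} heterogeneous allocation by sending $\lfloor Nc_j\rfloor$ agents to put full effort on task $j$ and then distributing the fractional remainders greedily, arguing that the resulting allocation differs from $\mathbf A_{\mathrm{hom}}$ in at least one column whenever $\mathbf A_{\mathrm{hom}}$ is non-trivial. You instead use a \emph{local} two-agent, two-task $\epsilon$-swap that leaves every column sum intact while making exactly two columns non-uniform. Both arguments then appeal to the same mechanism: a non-uniform column with the same sum strictly majorizes the uniform one, so strict Schur-convexity of $T_j$ lifts the task scores, and strict coordinate-wise monotonicity of $U$ lifts the team reward. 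Your construction is more elementary and makes the role of non-triviality completely transparent (it is precisely what furnishes two positive coordinates on which to swap), and it sidesteps the bookkeeping of the integer/fractional decomposition; the paper's construction, on the other hand, exhibits a far more heterogeneous allocation, which may be useful if one later wants quantitative lower bounds on $\Delta R$ rather than just positivity.
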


If the task-level aggregator is instead Schur-concave, we can show there is no \heterogeneitygap:

\begin{theorem}[No \HeterogeneityGap via Schur-concave Inner Aggregators]
\label{thm:heterogeneity-gap-schurconcave}
Let $N,M \ge 2$. If each task‐level aggregator $T_j$ is \emph{Schur-concave} then 
\(
\Delta R 
   \;=\;
   0.
\)
\end{theorem}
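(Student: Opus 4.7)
The plan is to show $R_{\mathrm{hom}} \ge R_{\mathrm{het}}$ by constructing, from any admissible heterogeneous allocation, a homogeneous allocation that is at least as good under Schur-concavity. The reverse inequality $R_{\mathrm{hom}} \le R_{\mathrm{het}}$ is immediate, since any homogeneous allocation is itself an admissible heterogeneous allocation.

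First, fix an arbitrary admissible heterogeneous allocation $\mathbf{A} = [r_{ij}]$ (so $\sum_j r_{ij}=1$ for every $i$). I would define the column averages $c_j = \tfrac{1}{N}\sum_{i=1}^{N} r_{ij}$, and consider the homogeneous allocation $\mathbf{A}^{\mathrm{hom}}$ in which every row equals $(c_1,\dots,c_M)$. Admissibility is inherited from $\mathbf{A}$: $\sum_{j=1}^{M} c_j = \tfrac{1}{N}\sum_i \sum_j r_{ij} = 1$, and $c_j \ge 0$. Thus $\mathbf{A}^{\mathrm{hom}}$ is a feasible point in the homogeneous maximization.

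Next, I would invoke a standard majorization fact: for any nonnegative vector $\mathbf{a}_j=(r_{1j},\dots,r_{Nj})$ with mean $c_j$, one has $\mathbf{a}_j \succ (c_j,\dots,c_j)$, since the sums of the top $k$ sorted entries satisfy $\sum_{i=1}^{k} r_{(i)j} \ge k c_j$ (the top-$k$ average exceeds the global average) and the total sums match by construction. By Schur-concavity of $T_j$, this yields $T_j(c_j,\dots,c_j) \ge T_j(\mathbf{a}_j)$ for every $j$. Since $U$ is coordinate-wise non-decreasing (a property of generalized aggregators), applying $U$ to both sides componentwise gives
\[
R(\mathbf{A}^{\mathrm{hom}}) = U\bigl(T_1(c_1,\dots,c_1),\dots,T_M(c_M,\dots,c_M)\bigr) \;\ge\; U\bigl(T_1(\mathbf{a}_1),\dots,T_M(\mathbf{a}_M)\bigr) = R(\mathbf{A}).
\]
Taking a supremum over admissible $\mathbf{A}$ on the right and noting $\mathbf{A}^{\mathrm{hom}}$ is admissible homogeneous on the left yields $R_{\mathrm{hom}} \ge R_{\mathrm{het}}$, which combined with the trivial converse gives $\Delta R = 0$.

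There is no serious obstacle here; the only step deserving care is the majorization claim $\mathbf{a}_j \succ (c_j,\dots,c_j)$, which I would justify explicitly via the sorted-partial-sum inequality above rather than appealing to folklore. One small bookkeeping point: the definition of majorization requires equal total sums, which is why I first reduce to admissible allocations with $\sum_j r_{ij}=1$ (as justified in the paragraph preceding Definition~\ref{def:majorization}) so that each column of $\mathbf{A}$ and the corresponding uniform vector have matching sum $N c_j$.
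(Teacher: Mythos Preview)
Your proof is correct and follows essentially the same approach as the paper: replace each column of an arbitrary allocation by its uniform-average counterpart, use Schur-concavity of each $T_j$ to conclude the task scores can only increase, and then push through the monotone outer aggregator $U$. The only differences are cosmetic (you write $c_j=\tfrac{1}{N}\sum_i r_{ij}$ where the paper writes $s_j/N$, and you make the trivial inequality $R_{\mathrm{hom}}\le R_{\mathrm{het}}$ and the majorization step $\mathbf{a}_j\succ(c_j,\dots,c_j)$ more explicit).
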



We see that Schur-convexity of the inner aggregator produces \( \Delta R > 0 \), whereas Schur-concavity implies \( \Delta R = 0 \). Analyzing the outer aggregator \( U \) is trickier, because it acts on task-score vectors \(\bigl(T_{1}(\mathbf a_{1}), \dots, T_{M}(\mathbf a_{M})\bigr)\) whose sum \( \sum_{i=1}^{M} T_{i}(\mathbf a_{i}) \) may vary, so majorization is not directly applicable. However, we can extend our analysis to $U$ if our inner aggregators are \textit{normalized} to keep the sum constant: \(\sum_{i=1}^{M} T_{i}(\mathbf a_{i}) = C\) for any admissible allocation. Assuming this, we can  invoke majorization again, and the relationship between convexity and $\Delta R$ reverses: if the outer aggregator \( U \) is Schur-convex, the \heterogeneitygap vanishes. Let us prove this.

\begin{theorem}[No \HeterogeneityGap for Schur-Convex $U$ with Constant-Sum Task Scores]
\label{thm:no-gap-schurconvex-outer-detailed}
Let $N,M \ge 2$.  Suppose that for any admissible  allocation $\mathbf{A}$, (i) every task score is non-negative, and obeys $T_i(0, \ldots, 0) = 0$, and (ii)  the sum of task score is always
\(
  \sum_{j=1}^M T_j\bigl(\mathbf{a}_j\bigr) \;=\; C
\). If  $U$ is \emph{strictly Schur-convex} function, then 
\(
  \Delta R = 0.
\)
\end{theorem}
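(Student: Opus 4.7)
The plan is to show that, under the given hypotheses, the vector of task scores induced by any admissible allocation is majorized by the task-score vector of a well-chosen homogeneous allocation, and then invoke Schur-convexity of $U$ to conclude that no heterogeneous allocation can strictly beat the best homogeneous one.

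First I would extract a useful identity from the constant-sum hypothesis. For each task $k \in \{1,\dots,M\}$, consider the admissible allocation in which every agent puts all its effort on task $k$, i.e.\ $r_{ik} = 1$ and $r_{ij} = 0$ for $j \neq k$. This is homogeneous and admissible, so condition (ii) gives $T_k(1,\dots,1) + \sum_{j \neq k} T_j(0,\dots,0) = C$, and by (i) each term with zero input vanishes. Hence $T_k(1,\dots,1) = C$ for every $k$, and in particular the homogeneous ``all-on-task-$k$'' allocation produces the task-score vector $(0,\dots,0,C,0,\dots,0)$ with $C$ in the $k$-th coordinate.

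Next I would compare this with an arbitrary heterogeneous allocation $\mathbf{A} \in (\Delta_{\le}^{M-1})^N$. Let $v = (T_1(\mathbf{a}_1),\dots,T_M(\mathbf{a}_M))$. By hypothesis (i) each $v_j \ge 0$, and by (ii) $\sum_j v_j = C$. Among all vectors in $\mathbb{R}^M$ with nonnegative entries summing to $C$, any permutation of $(C,0,\dots,0)$ is maximal in the majorization order: for every $k$, the top-$k$ partial sum of $(C,0,\dots,0)$ equals $C \ge \sum_{i=1}^k v_{(i)}$. Therefore $(C,0,\dots,0) \succ v$.

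Applying the strict Schur-convexity of $U$ (which in particular implies Schur-convexity), I obtain $U(C,0,\dots,0) \ge U(v) = R(\mathbf{A})$. Since the left-hand side equals $R$ evaluated on the admissible homogeneous ``all-on-task-$1$'' allocation, it is a lower bound on $R_{\mathrm{hom}}$, giving $R_{\mathrm{hom}} \ge R(\mathbf{A})$ for \emph{every} admissible $\mathbf{A}$. Taking the supremum over $\mathbf{A}$ yields $R_{\mathrm{hom}} \ge R_{\mathrm{het}}$, and since heterogeneous strategies trivially include homogeneous ones we also have $R_{\mathrm{het}} \ge R_{\mathrm{hom}}$; therefore $\Delta R = 0$. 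The only nontrivial step is spotting that constant-sum plus $T_j(0)=0$ forces $T_k(1,\dots,1)=C$, which hands us a homogeneous extremal point of the majorization order for free; once that is in place, the Schur-convexity comparison is immediate.
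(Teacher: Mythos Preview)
Your proof is correct and follows essentially the same route as the paper: both observe that the all-on-one-task homogeneous allocation yields the task-score vector $(C,0,\dots,0)$ (using $T_j(0,\dots,0)=0$ and the constant-sum hypothesis), that this vector majorizes every admissible task-score vector, and that Schur-convexity of $U$ then makes this homogeneous allocation optimal. Your write-up is slightly more explicit in spelling out the majorization step and in first deriving $T_k(1,\dots,1)=C$ for all $k$, but the argument is the same.
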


It is crucial to note that this constant-sum assumption is specific to \autoref{thm:no-gap-schurconvex-outer-detailed} and is not required for our other results, which apply broadly.

\textbf{Sum-Form Aggregators.} In \autoref{appendix:sumform_aggregators}, we show that the above results reduce to a simple convexity test for \textit{sum-form aggregators}: a broad class of aggregators that describes most reward structures we consider in this work. This makes testing whether $\Delta R > 0$ a simple computation in many cases.

\headline{Parameterizable Families of Aggregators} \label{sec:parametrizedaggregators}
A core topic of this work is \emph{reward design}: how can we craft team objectives that either advantage or disadvantage behavioral diversity?  To do this, it is helpful to first identify an appropriate search space. Our theoretical analysis enables us to narrow down this search space, and focus on aggregators whose \textit{curvature} can be parametrized.  Many \emph{family of aggregator functions} 
\(\{\,f_t(\cdot)\}_{t \in \mathbb{R}}\) can be parametrized by a scalar 
\(t\) which controls whether the aggregation is \emph{Schur‐convex} or 
\emph{Schur‐concave}, and how strongly it penalizes (or favors) inequalities among the components. For example, the \textit{softmax aggregator} $\sum_{i=1}^N 
\frac{\exp\bigl(t \cdot r_{i j}\bigr)}{\sum_{\ell=1}^N \exp\bigl(t \cdot r_{\ell j}\bigr)}$ is parametrized by its temperature, $t$, transitioning from being strictly Schur-concave when $t < 0$ to strictly Schur-convex when $t > 0$. We can define a space of reward functions by selecting both the task scores and outer aggregator to be softmax functions: let \(
T_j(\mathbf{A})
=
\sum_{i=1}^N 
\frac{\exp\bigl(t \cdot r_{i j}\bigr)}{\sum_{\ell=1}^N \exp\bigl(t \cdot r_{\ell j}\bigr)}
\; r_{i j},
\) and let \(
U\bigl(T_1(\mathbf{a}_1) , \ldots T_M(\mathbf{a}_m) )
=
\sum_{j=1}^M 
\frac{\exp\bigl(\tau \cdot T_j(\mathbf{A})\bigr)}
     {\sum_{\ell=1}^M \exp\bigl(\tau \cdot T_\ell(\mathbf{A})\bigr)}
\; T_j(\mathbf{A})
\), where $t, \tau \in \mathbb{R}$ parametrize the inner and outer aggregators, respectively. $\Delta R$ is then dependent on $t$ and $\tau$. \autoref{fig:deltaR-vs-softmax} plots $\Delta R$ when $N = M =  2$. As a case study, we derive lower bounds on $\Delta R$ when $N = M$ in \autoref{thm:gap_NeqM_softmax_hetgap}. 


\begin{theorem}[Softmax \heterogeneitygap for $N=M$]
\label{thm:gap_NeqM_softmax_hetgap}
Assume $N = M \geq 2$, and let \(\sigma(t,N):=\frac{e^{t}}{e^{t}+N-1}
\). The \heterogeneitygap for softmax aggregators \textbf{(i)} equals $\Delta R(t,\tau;N)=0$ when $t \leq 0$; \textbf{(ii)} is bounded below by $\sigma(t,N)-\dfrac1N$ when $t>0, \tau\le 0$; and  \textbf{(iii)} is bounded below by $\max\!\bigl\{\sigma(t,N)-\sigma(\tau,N),0\bigr\}$ when $t>0, \tau\ge 0.$
\end{theorem}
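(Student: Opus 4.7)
The plan is to split into the three stated cases, using Theorem 3.2 directly for (i) and exhibiting a specific heterogeneous allocation together with an explicit characterization of the homogeneous optimum for (ii) and (iii). Throughout I rely on the fact, asserted in the paragraph on parameterizable aggregators, that the softmax aggregator $S_t(x)=\sum_i \tfrac{e^{tx_i}}{\sum_\ell e^{tx_\ell}}\,x_i$ is Schur-concave when $t\le 0$ and Schur-convex when $t\ge 0$. Case (i) is then immediate: every task-level aggregator $T_j$ equals $S_t$ applied to the column $\mathbf{a}_j$, and since $t\le 0$ each $T_j$ is Schur-concave, so Theorem 3.2 gives $\Delta R=0$ regardless of $\tau$.

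For cases (ii) and (iii), I would exhibit the identity heterogeneous allocation $r_{ii}=1$, $r_{ij}=0$ for $j\neq i$, which is admissible precisely because $N=M$. Each column $\mathbf{a}_j$ has a single $1$ at row $j$ and zeros elsewhere, so the inner softmax weight on the nonzero entry equals $\sigma(t,N)$ and $T_j(\mathbf{a}_j)=\sigma(t,N)$ for every $j$. Because the resulting task-score vector is constant, the outer softmax weights are uniform $1/N$, and the team reward equals $\sigma(t,N)$; hence $R_{\mathrm{het}}\ge \sigma(t,N)$. To compute $R_{\mathrm{hom}}$, note that any admissible homogeneous allocation has all rows equal to some $(c_1,\ldots,c_N)$ with $c_j\ge 0$ and $\sum_j c_j=1$ (by monotonicity we can saturate the budget); since every column has $N$ identical entries, its inner softmax weights are uniform and $T_j=c_j$. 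Therefore $R_{\mathrm{hom}}=\max_{c\in \Delta^{N-1}} S_\tau(c)$, where the argument is now a vector on the standard simplex.

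The remaining step is to maximize $S_\tau$ on the simplex in the two regimes. In case (ii), $\tau\le 0$ makes $S_\tau$ Schur-concave; every point in the simplex majorizes $(1/N,\ldots,1/N)$, so the max equals $S_\tau(1/N,\ldots,1/N)=1/N$, giving the lower bound $\sigma(t,N)-1/N$. In case (iii), $\tau\ge 0$ makes $S_\tau$ Schur-convex; every simplex point is majorized by a vertex, so the max is attained at $(1,0,\ldots,0)$ and equals $\sigma(\tau,N)$, giving $\sigma(t,N)-\sigma(\tau,N)$; combined with the trivial bound $\Delta R\ge 0$ (a homogeneous allocation is a special heterogeneous one) this yields $\max\{\sigma(t,N)-\sigma(\tau,N),0\}$. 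The main obstacle is establishing the Schur-curvature of $S_t$ itself: all three cases reduce to it, and while the paper states it in passing, the formal verification goes through Schur's criterion $(x_i-x_j)\bigl(\partial_i S_t-\partial_j S_t\bigr)\ge 0$ (with the reversed inequality for Schur-concavity), which requires carefully signing the partial derivatives given that both the softmax weights and their normalizer depend on the same variables; the sign turns out to track that of $t$, but the calculation is not entirely routine. Once this curvature lemma is in hand, the identity-allocation computation, the uniform-weights observation for homogeneous rows, and the simplex-maximization step are all direct.
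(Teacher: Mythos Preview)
Your proposal is correct and follows essentially the same route as the paper: invoke Schur-concavity of $T_j$ for $t\le 0$ to get (i), reduce the homogeneous problem to maximizing $S_\tau$ on the simplex (yielding $1/N$ or $\sigma(\tau,N)$ according to the sign of $\tau$), and use the identity/spread allocation to lower-bound $R_{\mathrm{het}}$ by $\sigma(t,N)$. The only cosmetic difference is that the paper also records the trivial allocation to obtain $R_{\mathrm{het}}\ge\sigma(\tau,N)$ directly, whereas you recover the same $\max\{\cdot,0\}$ via the generic bound $\Delta R\ge 0$; both are equivalent.
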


\autoref{tab:param-agg-extended} contains more examples of aggregation operators parameterized by $t$. These families provide a search space for potential reward functions, allowing us to sweep smoothly from $\Delta R = 0$ to $\Delta R > 0$ reward regimes.  As $t \to \pm \infty$, most such aggregators converge to either $\min$ or $\max$, and often reduce to the arithmetic mean for certain parameter choices, motivating us to ask what the \heterogeneitygap is when the outer and inner aggregator belong to the set $\{\min, \text{mean}, \max\}$. These aggregators are of special interest, since ``$\min$'' can be seen as a ``maximally'' Schur-concave function, ``$\max$'' can be seen as a ``maximally'' Schur-convex function, and ``mean'' is both Schur-convex and Schur-concave. Hence, it is worth asking what the \heterogeneitygap is when the outer and inner aggregator belong to the set $\{\min, \text{mean}, \max\}$. We derive these gains in two cases: continuous allocations where $r_{ij} \in [0,1]$, and discrete effort allocations where $r_{ij} \in \{0,1\}$. The results are summarized in \autoref{fig:deltaR-vs-softmax}, lefthand side (formal derivation available in \autoref{appendix:formal_analysis}).


\begin{figure}[t]            
  \centering
  \begin{minipage}{0.5\textwidth}
    \centering
    \footnotesize
    \captionof*{table}{\small Discrete and continuous heterogeneity gains}
    \begin{adjustbox}{max width=\linewidth}
      \tiny                        
      \setlength{\tabcolsep}{3pt}  
      \renewcommand{\arraystretch}{1.15}
      \begin{tabular}{l|ccc}
        & $T=\min$ & $T=\text{mean}$ & $T=\max$\\\hline
        \multicolumn{4}{c}{\it Outer $U=\min$}\\\hline
        $\Delta R_{\mathrm{F}}$ & 0 & 0 & $(M-1)/M$\\
        $\Delta R_{\mathrm{D}}$ & 0 & $\lfloor N/M\rfloor/N$ &
          $\mathbf 1_{\{N\ge M\}}$\\\hline
        \multicolumn{4}{c}{\it Outer $U=\text{mean}$}\\\hline
        $\Delta R_{\mathrm{F}}$ & 0 & 0 & $(M-1)/M$\\
        $\Delta R_{\mathrm{D}}$ & 0 & 0 & $(\min\{M,N\}-1)/M$\\\hline
        \multicolumn{4}{c}{\it Outer $U=\max$}\\\hline
        $\Delta R_{\mathrm{F}}$ & 0 & 0 & 0\\
        $\Delta R_{\mathrm{D}}$ & 0 & 0 & 0\\\hline
      \end{tabular}
    \end{adjustbox}
  \end{minipage}
  \hfill
  \begin{minipage}{0.5\textwidth}
    \centering
    \includegraphics[width=\linewidth]{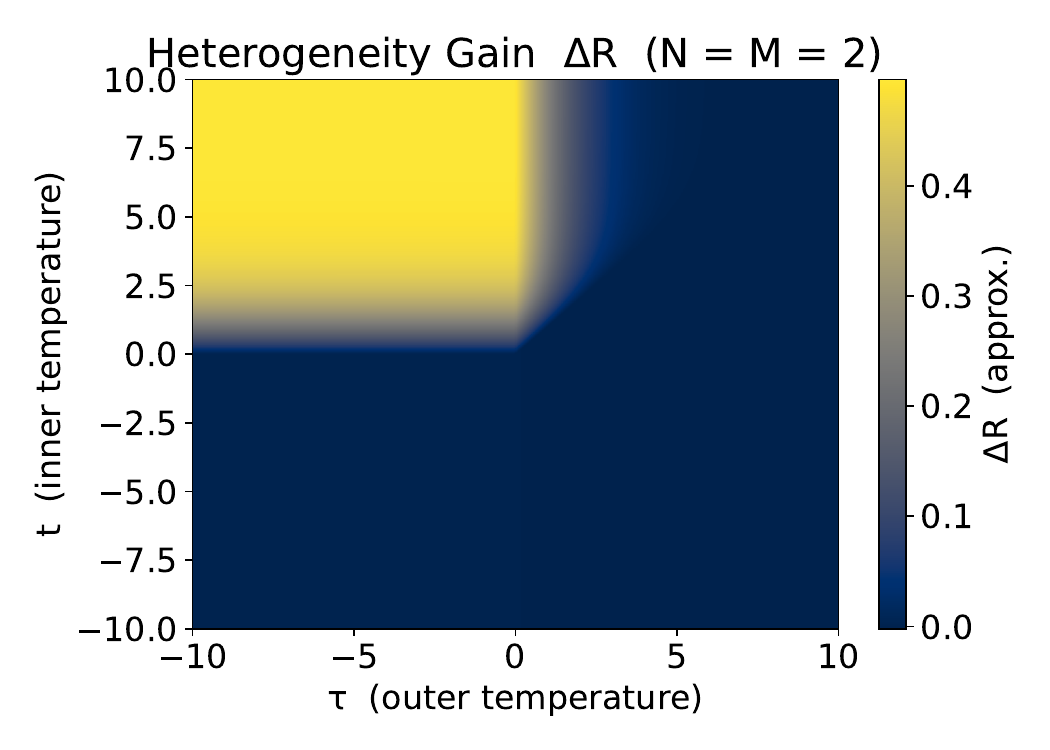}
  \end{minipage}

  \caption{\textbf{Left:} Discrete ($\Delta R_{\mathrm D}$) and continuous-allocation
           ($\Delta R_{\mathrm F}$) heterogeneity gains for all
           $U,T\!\in\!\{\min,\text{mean},\max\}$. The indicator
           $\mathbf 1_{\{N\ge M\}}$ equals 1 if \(N\ge M\) and 0 otherwise. 
           \textbf{Right:} We plot the parametrized heterogeneity gains $\Delta R(t,\tau;N)$ when $U$ and $T$ are soft-max aggregators.  }
  \label{fig:deltaR-vs-softmax}
\end{figure}

\section{Heterogeneity Gain Parameter Search (HetGPS)}
\label{sec:HetGPS}
In complex scenarios where theory might be less applicable, we study  heterogeneity through algorithmic search. We consider the setting of a Parametrized Dec-POMDP (PDec-POMDP, defined in \autoref{app:pdecpomdp}). A PDec-POMDP represents a Dec-POMDP~\citep{oliehoek2016concise}, where the observations, transitions, or reward  are conditioned on parameters $\theta$. Hence, 
the return obtained by the agents, $G^\theta(\mathbf{\pi})$, can be differentiated with respect to $\theta$: $\nabla_\theta G^\theta(\mathbf{\pi})=\frac{\partial}{\partial_\theta} G^\theta(\mathbf{\pi})$.
In particular, computing this gradient in a differentiable simulator allows us to back-propagate through time and optimize $\theta$ via gradient ascent\footnote{Although the same approach can train policies \citep{xu2022accelerated,song2024learning}, HetGPS instead optimizes environment parameters and policies separately, using standard zeroth-order policy-gradient methods, to avoid being trapped in local minima.}.






\begin{algorithm}[ht]
   \caption{Heterogeneity Gain Parameter Search (HetGPS)}
   \label{alg:HetGPS}
\begin{algorithmic}[1]
   \INPUT Environment parameters $\theta$, environment learning rate $\alpha$, heterogeneous agent policy $\mathbf{\pi}_\mathrm{het}$, homogeneous agent policy $\mathbf{\pi}_\mathrm{hom}$
   \FOR{$i$ in iterations}
        \STATE Batch$_\mathrm{het}^\theta$= Rollout($\theta$,$\mathbf{\pi}_\mathrm{het}$) 
        \COMMENT{{\color{blue}rollout het policies in environment $\theta$}}
        \STATE Batch$_\mathrm{hom}^\theta$= Rollout($\theta$,$\mathbf{\pi}_\mathrm{hom}$) \COMMENT{{\color{blue}rollout hom policies in environment $\theta$}}
        \STATE HetGain$^\theta$ = ComputeGain(Batch$_\mathrm{het}^\theta$,Batch$_\mathrm{hom}^\theta$)
        \IF{\lstinline[language=Python]{train_env}($i$)}
            \STATE $\theta \leftarrow \theta + \alpha\nabla_\theta  \mathrm{HetGain}^\theta$ \COMMENT{{\color{blue} train environment via backpropagation}}
        \ENDIF
        \IF{train\_agents($i$)}
            \STATE  $\mathbf{\pi}_\mathrm{het} \leftarrow$ MarlTrain($\mathbf{\pi}_\mathrm{het}$,Batch$_\mathrm{het}^\theta$) \COMMENT{{\color{blue}train het policies via MARL}}
            \STATE  $\mathbf{\pi}_\mathrm{hom} \leftarrow$ MarlTrain($\mathbf{\pi}_\mathrm{hom}$,Batch$_\mathrm{hom}^\theta$) \COMMENT{{\color{blue}train hom policies via MARL}}
        \ENDIF
   \ENDFOR
   \OUTPUT final environment configuration $\theta$, policies $\mathbf{\pi}_\mathrm{het},\mathbf{\pi}_\mathrm{hom}$
\end{algorithmic}
\end{algorithm}

\headline{Heterogeneity Gain Parameter Search (HetGPS)}
We now consider the problem of learning the environment parameters $\theta$ to maximize the \textit{empirical} \heterogeneitygap.
The empirical \heterogeneitygap is defined as the difference in performance between heterogeneous and homogeneous teams in a given PDec-POMDP parametrization. We compare \textit{neurally heterogeneous} agents (independent parameters) with \textit{neurally homogeneous} agents (shared parameters).
We denote their policies as $\mathbf{\pi}_\mathrm{het}$ and $\mathbf{\pi}_\mathrm{hom}$.
Then, we can simply write the gain as: $\mathrm{HetGain}^\theta(\mathbf{\pi}_\mathrm{het},\mathbf{\pi}_\mathrm{hom}) =  G^\theta(\mathbf{\pi}_\mathrm{het}) -G^\theta(\mathbf{\pi}_\mathrm{hom})$, 
representing the return of heterogeneous agents minus that of homogeneous agents on environment parametrization $\theta$.
HetGPS, shown in \autoref{alg:HetGPS}, learns $\theta$ by performing gradient ascent to maximize the gain:
$\theta \leftarrow \theta + \alpha\nabla_\theta  \mathrm{HetGain}^\theta(\mathbf{\pi}_\mathrm{het},\mathbf{\pi}_\mathrm{hom})$.
The environment and the agents are trained in an iterative, bilevel optimization process. We discuss this process, and \textit{alternatives when the simulator is non-differentiable}, in \autoref{app:stability_of_bilevel_optimization}.
At every training iteration, HetGPS collects roll-out batches in the current environment $\theta$ for both heterogeneous and homogeneous teams, computing the \heterogeneitygap on the collected data.
Then, it updates $\theta$ to maximize the \heterogeneitygap.
Finally, to train the agents, it uses MARL, with any on-policy algorithm (e.g., MAPPO~\citep{yu2022surprising}).
The functions \verb|train_env| and \verb|train_agents| determine when to train each of the components in HetGPS. We consider two possible training regimes: (1) \textit{alternated}: where HetGPS performs cycles of $x$ agent training iterations followed by $y$ environment training iterations and (2) \textit{concurrent}: where agents train at every iteration and the environment is updated every $x$ iterations. Note that by performing descent instead of ascent, HetGPS can also be used to \textit{minimize} the heterogeneity gain.

\section{Experiments}
\label{sec:experiments}
To empirically ground our theoretical analysis, we conduct a three-stage experimental study in cooperative MARL.  We first analyze a one-step, observation-free matrix game in which each agent allocates effort $r_{ij}$ over $M$ tasks, and consider reward structures defined by aggregator pairs $U,T\in\{\min,\mathrm{mean},\max\}$. We find that the agents' learned policies recover the exact \heterogeneitygaps derived in the theory (\autoref{fig:deltaR-vs-softmax}).  Next, we transfer the same reward structures into embodied, time-extended environments:  \capturetheflag, 2v2 Tag, and Football. We show that our curvature theory continues to be informative in these settings. We discuss the learning dynamics that result, and perform further experiments highlighting the difference between \textit{neural} and \textit{behavioral} heterogeneity~\citep{bettini2023hetgppo}, important for understanding our insights. 
Finally, to study HetGPS, we parametrize the reward structure of {\capturetheflag} using either parametrized Softmax or Power-Sum  aggregators (\autoref{appendix:parametrized_aggregator_table}), and run HetGPS to learn parameterizations that maximize the heterogeneity gain.
HetGPS learns the theoretically optimal aggregator instantiations, validating its effectiveness at discovering heterogeneous missions. Implementation details and visualizations are available in \autoref{app:code_details} and \autoref{app:ctf}. 

\headline{(i) Task Allocation}
We consider a one-step observationless matrix game where $N$ agents need to choose between $M$ tasks. Their actions are effort allocations $r_{ij}$ with $r_{ij}\geq 0$,$\sum_jr_{ij}=1 $,  composing matrix $\mathbf{A}$.
With aggregators taken from the set $U,T\in\{\min,\mathrm{mean},\max\}$, our goal is to empirically confirm the \heterogeneitygaps derived in the theory \textit{in a learning context}. 
Each time the game is played, all agents obtain the global reward $R(\mathbf{A})$ computed through the double aggregator.
We consider two setups: (1) \textit{Continuous} ($r_{ij}\in\R_{0\leqslant x\leqslant 1}$): agents can distribute their efforts across tasks, (2) \textit{Discrete} ($r_{ij}\in\{0,1\}$): agents choose only one task.
We train with $N=M=4$ for 12 million steps. \autoref{fig:matrix_game} shows the evolution of the \heterogeneitygaps. The final results match \textit{exactly} the theoretical predictions of \autoref{fig:deltaR-vs-softmax} and our curvature theory:  \textit{concave} outer and \textit{convex} inner aggregators favor heterogeneity. Additional details and results, e.g., for $N,M \in \{2,8,11\}$, are in \autoref{app:matrix_game}.

\begin{figure}[t] 
    \centering  
    \includegraphics[width=1\textwidth]{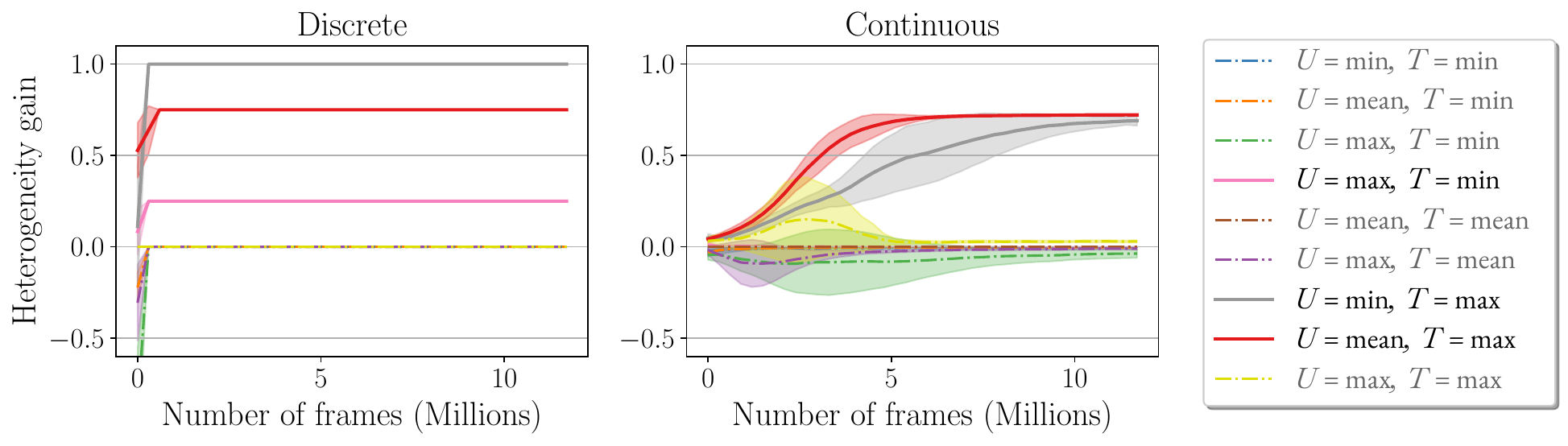}  
    \caption{\Heterogeneitygap for the discrete and continuous matrix games with $N=M=4$ over training iterations. We report mean and standard deviation after 12M frames over 9 random seeds. The final results match the theoretical predictions in the Table of \autoref{fig:deltaR-vs-softmax}. Solid lines indicate reward structures predicted by theory to have $\Delta R > 0$ in either the discrete or continuous setting; dashed lines indicate predicted no gain in both settings. Final gain values are reported in \autoref{tab:matrix_game_cont_4} and \autoref{tab:matrix_game_disc_4}.}  
    \label{fig:matrix_game_4_agents}  
\end{figure}

\headline{(ii-1) \capturetheflag}
Next, we investigate a time-extended, embodied scenario called {\capturetheflag}, based on multi-goal navigation missions   \citep{terry2021pettingzoo}. In {\capturetheflag}, agents need to navigate to goals, and efforts $r^t_{ij}$ are continuous scalars computed based on their proximity to these goals. We provide details in \autoref{app:ctf}. Our goal is to show that the results obtained in the matrix game still hold in this embodied, long-horizong setting.
We again consider aggregators $U,T\in\{\min,\mathrm{mean},\max\}$. After 30M training frames (Fig.~\ref{fig:ctf_gap}) the empirical \heterogeneitygaps differ, numerically, from those of the  static matrix-game because agents now realize their allocations $r_{ij}$ through time-extended motion.  \emph{Nonetheless, our curvature theory reliably predicts when there is a heterogeneity gain} (\autoref{fig:deltaR-vs-softmax}): it is positive \emph{only} for the concave–convex pairs $U=\min,\,T=\max$ and $U=\mathrm{mean},\,T=\max$.  We further explain these results (including the interesting presence of ``negative'' heterogeneity gains) in \autoref{app:ctf}.
Note that the aggregator pairs in this experiment are not contrived: they encode practically meaningful global objectives.  For example, $U=\max,T=\max$ implies ``at least one agent should go to at least one goal''; $U=\max,T=\min$ implies ``all agents should go to the same goal'', and so on.
$U=\min,T=\max$, a concave-convex setting shown by our theory to favor heterogeneity, implies ``each agent should go to a different goal and all goals should be covered'' which is a natural goal for this scenario.
This is because $T=\max$ encodes a task that needs just one agent to be completed (e.g., find an object), while $U=\min$ encodes that all tasks should be attended (i.e., agents need to diversify their choices).


\begin{figure}[!tbp]
  \centering
 \begin{subfigure}{0.5\textwidth}
        \centering
    \includegraphics[width=\textwidth]{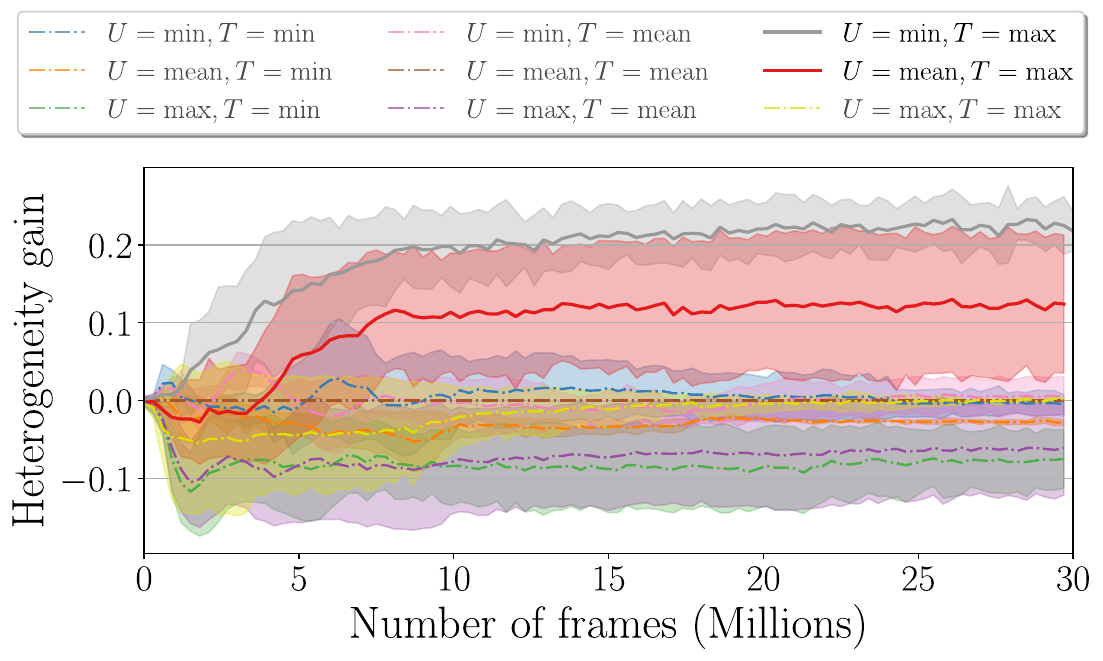}
    \caption{\capturetheflag.}
    \label{fig:ctf_gap}
  \end{subfigure}%
   \begin{subfigure}{0.5\textwidth}
   \centering
     \includegraphics[width=\textwidth]{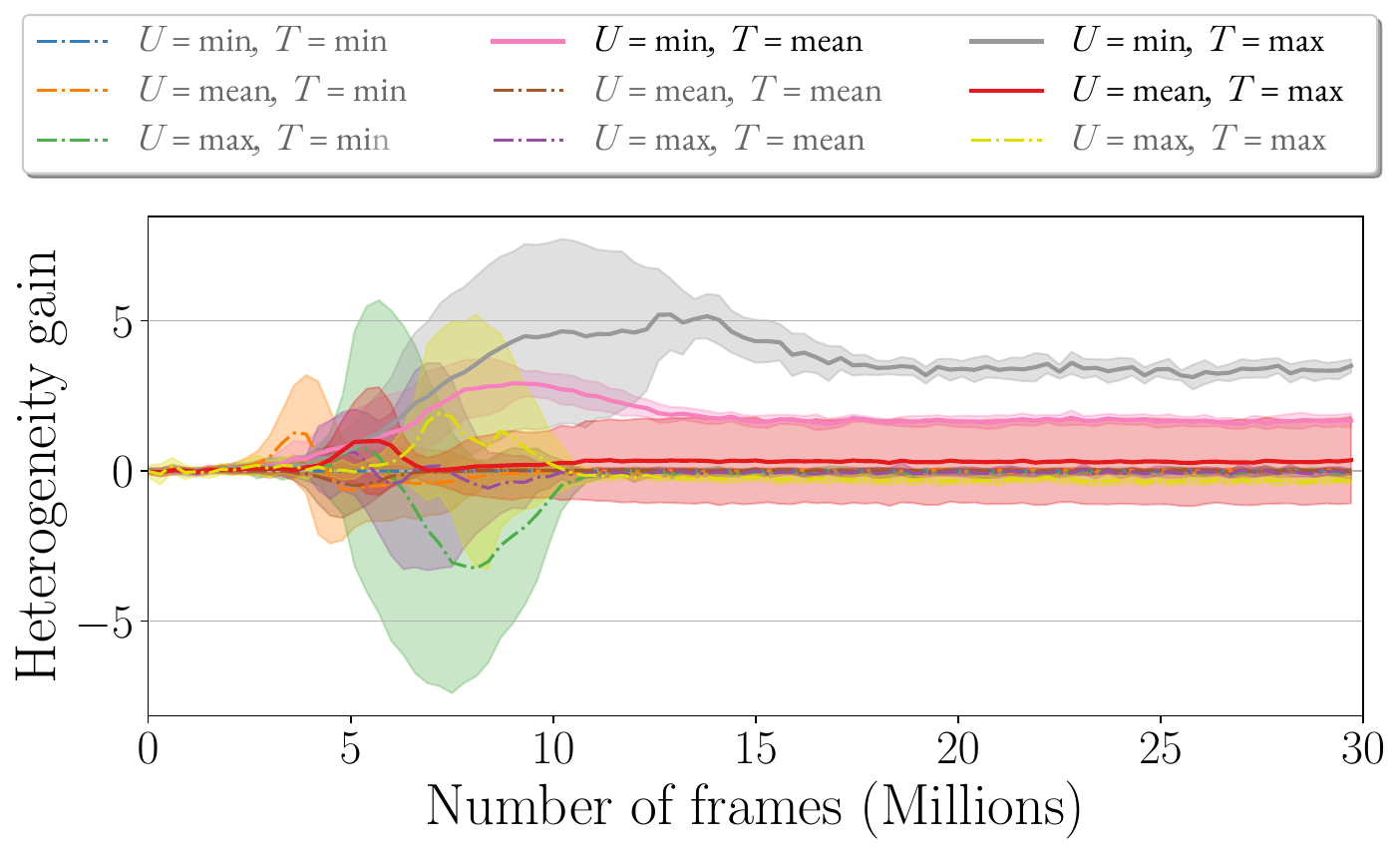} 
    \caption{2v2 Tag.
    }  
    \label{fig:tag_gains} 
  \end{subfigure}
  \caption{\Heterogeneitygap for \capturetheflag and 2v2 Tag  throughout training. We report mean and standard deviation for 30 million training frames over 9 random seeds. Final gain values are reported in \autoref{tab:ctf_gap_tab} and \autoref{tab:tag_gain_tab}.}
\end{figure}

\textit{Observability-Heterogeneity Trade-Off:} To understand our theoretical results, it is important to solidify the difference between \textit{neural heterogeneity} (agents having different neural networks) and \textit{behavioral heterogeneity} (agents acting differently). Our insights concern behavioral heterogeneity, which need not be neurally induced. We show this in \autoref{app:observehettradeoff}, showing that: as the observability of neurally homogeneous agents increases (allowing them to sense each other), these agents can become behaviorally heterogeneous, and thus optimize the heterogeneity gain. This result is visualized \href{{\websiteurlctf}}{here}.

\headline{(ii-2) 2v2 Tag} In our tag experiment, two learning chasers pursue two heuristic escapees in a randomized obstacle field. We define the effort $r_{ij}^t$ to be $1$ if chasing agent $i$ manages to capture escaping agent $j$ by time $t$, and $0$ otherwise. Whereas \capturetheflag had continuous effort allocations, here they are \textit{discrete}. The global reward is again computed with aggregators $U, T \in \{\min, \text{mean}, \max\}$, and is awarded at every time step. This is a \textit{sparse} reward signal only awarded upon mission success. For example, $U=\min, T=\max$ pays out only if both escapees are each caught by a chaser, encouraging heterogeneity. Training outcomes are summarized in \autoref{fig:tag_gains}. We again see that our theoretical results in \autoref{fig:deltaR-vs-softmax} (discrete efforts) predict \textit{exactly} which aggregators will exhibit $\Delta R > 0$. More details, visuals, and experiments with greater number of agents are available in \autoref{app:tag} and \href{{\websiteurltag}}{here}.

\headline{(ii-3) Football} We evaluate our theory in a complex continuous control football game to explore what happens when our reward structure $R(A)$ is only part of a global cooperative reward. To this end, we design a drill in the VMAS Football environment \citep{bettini2022vmas}, where one agent is tasked to score, while the other has to block the incoming opponent. \autoref{app:football} shows that, also in this case, our theory is highly predictive, with visuals available \href{{\websiteurlfootball}}{here}.

\begin{figure}[t!]
    \centering
    \begin{subfigure}{\textwidth}
        \centering
        \includegraphics[width=\linewidth]{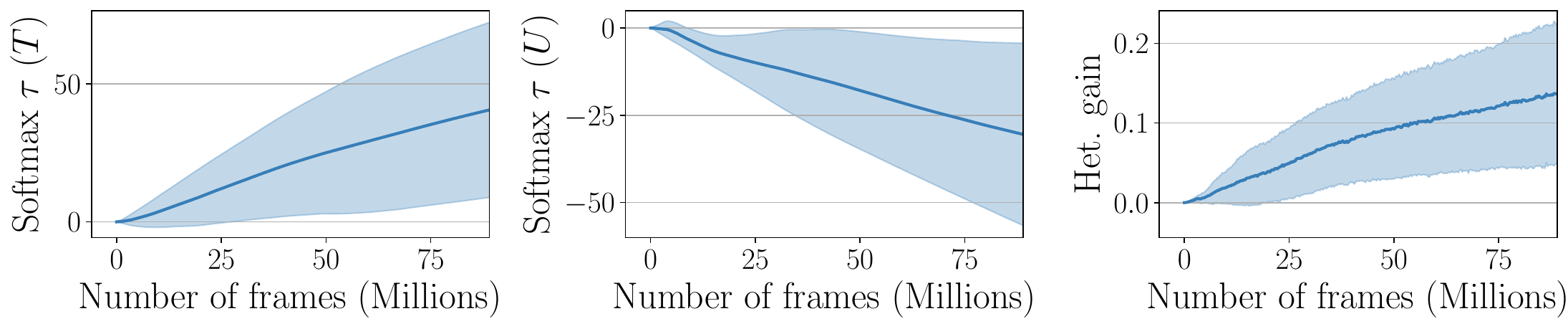}
        \caption{HetGPS with Softmax aggregators ($\tau\in \R$). $\tau \geq 0$ is Schur-convex; $\tau \leq 0$ is Schur-concave.}
        \label{fig:HetGPS_softmax}
    \end{subfigure}
    \begin{subfigure}{\textwidth}
        \centering
        \includegraphics[width=\linewidth]{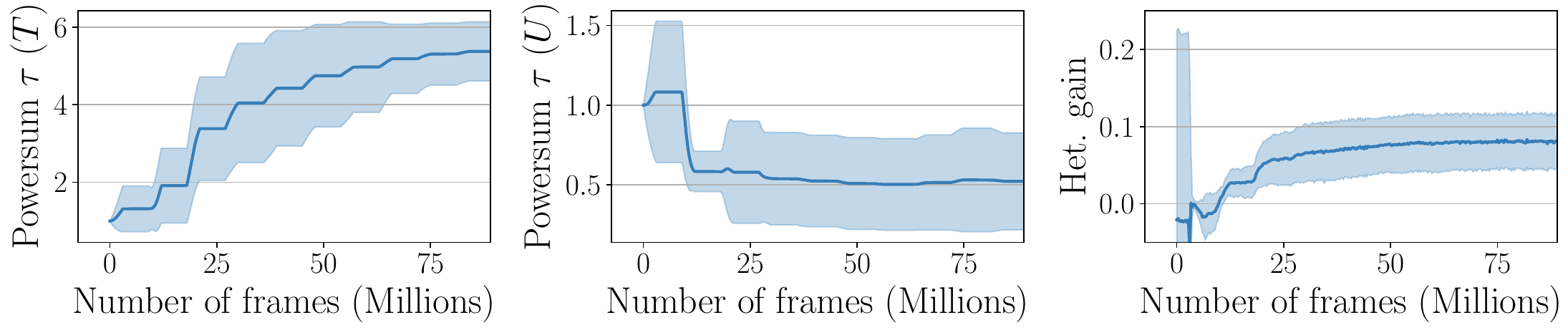}
        \caption{HetGPS with Power-Sum aggregators ($\tau\in [0.3,6]$). $\tau \geq 1$ is Schur-convex; $\tau \leq 1$ is Schur-concave.}
        \label{fig:HetGPS_powersum}
    \end{subfigure}
    \caption{HetGPS results in  {\capturetheflag}. The two leftmost columns report the evolution of aggregator parameters through training, while the rightmost column shows the obtained \heterogeneitygap.
    This result empirically demonstrates that HetGPS rediscovers the reward structure predicted by our theory to maximize the gain, \textit{making the inner aggregator convex, and the outer aggregator concave}. We report mean and standard deviation for 90M training frames over 13 random seeds.}
    \label{fig:HetGPS_experiments}
\end{figure}

\headline{(iii) Heterogeneous Reward Design}
We apply HetGPS to {\capturetheflag}, and ask whether it finds the same aggregator parameterizations predicted by our theory to maximize the \heterogeneitygap.
We turn the environment into a PDec-POMDP by parameterizing the reward as $ {R}^\theta (\mathbf{A}^t) = \bigoplus_{j=1}^M \!^{\theta} \bigoplus_{i=1}^N \!^{\theta} r^t_{ij}$, with parametrized inner and outer aggregators $U^\theta = \bigoplus^{\theta}, T^\theta=\bigoplus^{\theta}$.
Our goal is to learn the parameters $\theta=(\tau_1,\tau_2)$, parametrizing $T$ and $U$ respectively, that maximize the \heterogeneitygap.
We consider two parametrized aggregators from \autoref{tab:param-agg-extended}: Softmax and Power-Sum.
\textit{Softmax}:
we parameterize both $U^\theta$ and $T^\theta$
using Softmax. We initialize $\tau_1 = \tau_2 = 0$, so $U$ and $T$ are initially \textit{mean}, and run HetGPS (in \autoref{app:adverseinitializations} we show that HetGPS is robust to adversarial initializations).
In \autoref{fig:HetGPS_softmax}, we show that, to maximize the heterogeneity gain, HetGPS learns to maximize $\tau_1$, making $T$ Schur-convex, while minimizing $\tau_2$, making $U$ Schur-concave. Hence, it rediscovers the theoretically optimal reward function. The large variance in final parameters  occurs because the Softmax aggregator saturates for large magnitudes (e.g., $|\tau| > 5$); HetGPS \textit{correctly} identifies this, leading seeds to converge to arbitrary large values within it.
\textit{Power-Sum}: we parametrize both aggregators with  Power-Sum. We initialize both functions to $\tau_1 = \tau_2 = 1$, representing \textit{sum}, and run HetGPS. We constrain $\tau_{1,2} \in [0.3,6]$ to stabilize learning.
In \autoref{fig:HetGPS_powersum}, we show that HetGPS learns to maximize $\tau_1$, making $T$ Schur-convex, while minimizing $\tau_2$, making $U$ Schur-concave; again rediscovering the optimal parametrization our theory predicts. 
These results simultaneously validate HetGPS and our curvature theory, since each arrives at the same reward structure independently.  

\section{Discussion}
\label{sec:discussion}

This work introduces tools for both \emph{diagnosing} and \emph{designing} reward functions that incentivize heterogeneity in cooperative MARL. In task allocation settings, our theory shows that the advantage of behavioral diversity is a predictable consequence of reward \emph{curvature}: if the inner aggregator is Schur-convex, amplifying inequality, and the outer aggregator is Schur-concave, amplifying uniformity, heterogeneous policies are strictly superior; reversing the curvature removes the benefit. Complementing this analysis, and covering settings where our theory doesn't apply, the proposed HetGPS algorithm automatically steers underspecified environments to either side of the diversity boundary, letting us encourage or suppress heterogeneity and providing a sandbox for studying its advantages. Together, these results help turn the choice of heterogeneity from an ad-hoc heuristic into a controllable design dimension, and help reconcile past mixed results on parameter sharing. 

A key remaining open question concerns how the environment’s \emph{transition dynamics} interact with reward curvature to shape heterogeneity gains.  We expand on this open question, other directions for future work, and our scope/limitations, in \autoref{appendix:limitations}.

\subsubsection*{Reproducibility Statement}

Our supplementary website contains the source code we used to produce all results in this work, including the code used to train the agents, our implementation of HetGPS, and the code used to produce all plots in the paper (see Appendix \ref{app:code_details}). The \texttt{readme} contains detailed instructions on how to use this code. All mathematical claims made in this work are fully proven in the Appendix, and our assumptions are described in detail in Section \ref{sec:problemsetting} (``Problem Setting'').  Appendices \ref{app:adverseinitializations} and \ref{app:stability_of_bilevel_optimization} address potential questions readers may have regarding the stability of the bilevel optimization process used in HetGPS and similar environment design algorithms in the literature. Finally, Appendix \ref{appendix:limitations} addresses the assumptions and scope of our work, and outlines some remaining open questions.



\subsubsection*{Acknowledgments}
This work is supported by European Research Council (ERC) Project 949940 (gAIa) and ARL DCIST CRA W911NF-17-2-0181. We gratefully acknowledge their support.

\bibliographystyle{iclr2026_conference}
\bibliography{bibliography}  

\newpage
\appendix

\section{Computational Resources Used}
\label{appendix:computeused}
For the realization of this work, we have employed computational resources that have gone towards: experiment design, prototyping, and running final experiment results.
Simulation and training are both run on GPUs, no CPU compute has been used. Results have been stored on the WANDB cloud service.
We estimate:
\begin{itemize}
    \item  300 compute hours on an NVIDIA GeForce RTX 2080 Ti GPU. 
    \item  500 compute hours on an NVIDIA L40S GPU.
\end{itemize}

Simply reproucing our results using the available code will take considerably less compute hours (around a day).

\section{Code and Data Availability}
\label{appendix:code_availability}

We provide the link to the code in the supplementary \href{\websiteurl}{website}.
The code contains instructions on how to reproduce the experiments in the paper and dedicated YAML files containing the hyperparameters for each experiment presented. The YAML files are structured according to the HYDRA~\citep{yadan2019hydra} framework which allows smooth reproduction as well as systematic and standardized configuration. We further attach all scripts to reproduce the plots in the paper from the experiment results.

\section{Implementation Details}
\label{app:code_details}
For all experiments, we use the MAPPO MARL algorithm~\citep{yu2022surprising}.
Environments are implemented in the multi-agent environment simulator VMAS~\citep{bettini2022vmas}, and trained using TorchRL~\citep{boutorchrl}. Both the actor and critic are two-layer MLPs with 256 neurons per layer and Tanh activation. Further details, such as hyperparameter choices, are available in the attached code and YAML configuration files.

Experiments were run on a single Nvidia L40 GPU. In the HetGPS experiments (\autoref{sec:HetGPS}), a standard MARL training iteration (60,000 frames) takes approx. 15s; including the environment backpropagation increases this to approximately 20s.

\section{Use of LLMs}
\label{appendix:useofllms}
We used LLMs (ChatGPT 4o and Gemini 2.5 Pro) to improve some parts of the writing, e.g., make wording suggestions. We verified and take responsibility for all LLM-related outputs in this work.

\section{Colonel Blotto \& Level-Based Foraging}
\label{appendix:examples_of_marl_environments}

We describe how two well-known settings from the literature fit into our theoretical framework, and check what our theoretical results say about their  \heterogeneitygap. 

\subsection{Team Colonel Blotto (fixed adversary)}
The \textit{Colonel Blotto game} is a well-known allocation game studied in both game theory and MARL \citep{Roberson2006,noel2022reinforcementcolonelblotto}. It is used to model election strategies and other resource-based competitions. In the team variant with fixed adversary,  \(N\) friendly colonels (agents) each distribute a (fixed and equal)  budget of troops \(r_{ij}\!\ge 0, \sum_{j=1}^M r_{ij} = 1\) across \(M\) battlefields \(j\in\{1,\dots,M\}\) (our tasks).  
A fixed adversary selects a \emph{stochastic} opposing allocation
strategy, i.e.\ a distribution \(\pi_{\mathrm{adv}}\) over vectors
\(a=(a_1,\dots,a_M)\) which is fixed throughout
training and evaluation.  
Let \(s_j=\sum_{i=1}^{N} r_{ij}\) denote the team force committed to
battlefield \(j\). Our agents win against the adversary if the troops they allocate to a given field surpass the troops allocated by the adversary. The
expected value secured on battlefield \(j\) is therefore 

\[
T_j(\mathbf a_j)\;=\;v_j\,
\mathbb E_{a\sim\pi_{\mathrm{adv}}}\!\bigl[\mathbf 1\![\,s_j>a_j\,]\bigr]
   \;=\;v_j\,\Pr_{a\sim\pi_{\mathrm{adv}}}\!\bigl[s_j>a_j\bigr],
\]

where $1[x > y]$ denotes the indicator function. This is a \textbf{thresholded-sum} that remains symmetric and
coordinate-wise non-decreasing in every agent’s contribution \(r_{ij}\).
Aggregating across battlefields with a value-weighted sum yields the team
reward

\[
R(\mathbf A)\;=\;\sum_{j=1}^{M} T_j(\mathbf a_j)
           \;=\;\underbrace{\sum_{j=1}^{M}}_{U}\;
               T_j\!\Bigl(\underbrace{\sum_{i=1}^{N}}_{\inneragg}
                          r_{ij}\Bigr),
\]

so the game fits the double-aggregation
structure \(R(\mathbf A)=\outeragg_{j}\inneragg_{i}r_{ij}\) assumed in
our analysis.  

\underline{\HeterogeneityGap:} This is a continuous allocation game, and the inner aggregator $T_j$ is an indicator function over the sum of troop allocations to battlefield $j$. This function is Schur-concave (and Schur-convex at the same time!). Hence, by \autoref{thm:heterogeneity-gap-schurconcave}, heterogeneous colonel teams, where each colonel has a distinct troop allocation strategy, have no advantage over homogeneous teams, where all colonels employ the same allocation strategy: $\Delta R = 0$. This makes sense, as it makes no difference whether two different colonels allocate $x/2$ troops to a battlefield, or one colonel allocates $x$ troops to the battlefield.

Our analysis also tells us what happens when we change $T_j$: this provides insights for generalizations of the Colonel Blotto game. For example, maybe the troops of different colonels don't cooperate as well with each other, such that two colonels allocating $x/2$ troops to a battlefield results in a lower $T_j$-value than a single colonel allocating $x$ troops. In this case, $T_j$ becomes strictly Schur-convex, and \autoref{thm:heterogeneity-gap-schurconvex} tells us that $\Delta R > 0$ as long as the optimal allocation is non-trivial. Hence, heterogeneous teams are advantaged.

\subsection{Level-Based Foraging} 
The well-known \textit{level-based foraging} (LBF) benchmark, based on the  knapsack problem \citep{garey1990guidenpcompleteknapsack}, is a deceptively challenging, embodied MARL environment, where \(N\) agents are placed on a grid with \(M\) food items, and are tasked with collecting them. Each item \(j\) has an integer level \(L_j\) that must be met or exceeded by the combined skills of the agents standing on that cell before it can be collected  \citep{papoudakis2021benchmarking_multilevelforaging}.  
Let agent \(i\)’s skill be \(e_i\).  At a given step the binary variable  

\[
r_{ij}\;\in\;\{0,e_i\},
\qquad 
\text{with}\; \sum_{j=1}^{M} r_{ij}\le e_i ,
\]

denotes whether \(i\) contributes its skill to item \(j\).  In our setting, we assume all agents are equally skilled, so $e_i = 1\;\forall i$. Collecting these variables thus yields an allocation matrix \(\mathbf A=[r_{ij}] \in \{0,e_i\}^{N\times M}\), which again matches our framework.

\paragraph{Inner aggregator.}  
A food item is harvested if the summed skill on its cell reaches the threshold, so  

\[
T_j(\mathbf a_j)\;=\;L_j\;
\mathbf 1\!\bigl[\textstyle\sum_{i=1}^{N} r_{ij}\;\ge\;L_j\bigr],
\qquad 
\mathbf a_j=(r_{1j},\dots,r_{Nj})^\top .
\]

This \textbf{threshold–sum} is symmetric and monotone, depending only on the
sum of its arguments and therefore simultaneously Schur-convex and
Schur-concave.

\paragraph{Outer aggregator.}  
The stepwise team reward is the sum of harvested item
values,

\[
R(\mathbf A)
\;=\;
\sum_{j=1}^{M} T_j\!\Bigl(\sum_{i=1}^{N} r_{ij}\Bigr)
\;=\;
\underbrace{\sum_{j=1}^{M}}_{U}\;
      T_j\!\bigl(\underbrace{\sum_{i=1}^{N}}_{\inneragg} r_{ij}\bigr)
\;=\;
\outeragg_{j=1}^{M}\inneragg_{i=1}^{N} r_{ij},
\]

so LBF also conforms to the
double-aggregation form \(R(\mathbf A)=\outeragg_{j}\inneragg_{i}r_{ij}\).

\paragraph{MARL Environment Reward.}  In the level-foraging environment, items that are picked up either disappear; replace themselves with different items; or replace themselves with the same item (possibly at a different cell). In all of these cases we can represent the cumulative reward as \(\sum_{t= 0}^T\gamma^{t}\,R_t\bigl(\mathbf{A}_{t}\bigr)\) for some sequence $(R_t)_{t=1,\ldots T}$ of rewards adhering to the above reward structure.

\underline{\HeterogeneityGap:} We analyze the heterogeneity gap of a specific stepwise reward $R$. 

Because this is an embodied environment where each agent can either stand on an item (\(r_{ij}=1\)) or not
(\(r_{ij}=0\)), effort allocations are \emph{discrete}.  Our continuous
curvature test therefore does not apply directly, but the discrete
analysis in \autoref{fig:deltaR-vs-softmax} (left panel) does.

The table in \autoref{fig:deltaR-vs-softmax} tells us something about the case where all items have level \(L_j=1\). In this case, since we assumed \(e_i=1\) for all agents, the inner aggregator reduces to
\[
T_j(\mathbf a_j)=\mathbf 1\!\Bigl[\textstyle\sum_{i=1}^{N} r_{ij}\ge 1\Bigr]
            =\max_i r_{ij},
\]
while the outer aggregator is an unnormalized sum, which becomes the \textit{mean} when divided by $M$. Hence \(R(\mathbf A)=\sum_j\max_i r_{ij}\), which, up to the constant
\(1/M\), is exactly the case
\(U=\text{mean},\,T=\text{max}\) of
\autoref{fig:deltaR-vs-softmax}.  That table shows
\[
\frac{1}{M} \Delta R \;=\;\frac{\min\{M,N\}-1}{M},
\]
so the heterogeneity gap is \emph{strictly positive} whenever the team
could in principle cover more than one item (\(\min\{M,N\}>1\)).
Intuitively, a homogeneous team can only collect one item per step
(all agents flock to the same cell), whereas heterogeneous agents may
spread out and capture up to \(\min\{M,N\}\) items simultaneously.

This analysis can be extended to the case where all items have the same level \(L>1\) and \(L\mid N\) by grouping agents into \(\tilde N:=N/L\) \emph{agent teams}, each bundle contributing exactly \(L\) units of skill. This yields
\[
\frac{1}{ML} \Delta R \;=\;\frac{\min\{M,\tilde N\}-1}{M}.
\]
(We omit the formal analysis, which is not difficult). Thus, if the team can form at least two such bundles
(\(\tilde N>1\)), heterogeneity is again advantageous. If it cannot, then $\Delta R = 0$, and there is no advantage to heterogeneity. 

When the levels \(\{L_j\}\) differ, an exact closed form is harder, but in general we expect $\Delta R > 0$ whenever there is some combination of items that the heterogeneous team can collect, which in total is worth more than the largest single item that can be collected if all $N$ agents stand on its cell.

In LBF, therefore, our theory suggests that behavioral diversity is often advantageous. Note that (unlike the Colonel Blotto game) since LBF is an embodied, time-extended MARL environment, this analysis does not \textit{formally guarantee} an advantage to RL-based heterogeneous agent teams: rather, it identifies that there are  effort allocation strategies that will give these teams an advantage over homogeneous teams. The agents must still \textit{learn} and be able to execute these strategies to gain this advantage (e.g., they must learn how to move to attain the desired allocations).

\section{Sum-Form Aggregators}
\label{appendix:sumform_aggregators}

Many useful reward functions are  \textit{sum-form aggregators}:


\begin{definition}[Sum-Form Aggregator]
\label{def:sum_task_agg}
A task-level aggregator $f:  \mathbb{R}^N \to \mathbb{R}$ for task $j$ is a \textbf{sum-form aggregator} if it can be written as:
\(
f(\mathbf{x}_j) 
\;=\;
\sum_{i=1}^{N} g(x_{j}),
\)
where $g_{j}:\mathbb{R}\to\mathbb{R}$ is differentiable. We say $f$ is (strictly) \emph{convex or concave} if $g$  is (strictly) convex or concave, respectively.
\end{definition}

\autoref{tab:param-agg-extended} contains examples. When our aggregators have this form, Schur-convexity (concavity) is determined by whether $g$ is convex (concave)--a simple computational test.   This is because of the following \textit{known} connection between sum-form aggregators and Schur-convexity/concavity:

\begin{lemma}[Schur Properties of Sum-Form Aggregators \citep{peajcariaac1992convex}]
\label{lemma:schur_sum_form}
Given sum-form task-level aggregator $f(\mathbf{x}) = \sum_{i=1}^{N} g(x_i)$, the following holds: \textbf{\emph{(i)}} if $g$ is (strictly) convex, then $f$ is (strictly) Schur-convex; and \textbf{\emph{(ii)}}  if $g$ is (strictly) concave, then $f$ is (strictly) Schur-concave.
\end{lemma}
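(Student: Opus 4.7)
The plan is to use the T-transform (Robin Hood) characterization of majorization due to Hardy--Littlewood--P\'olya: $x \succ y$ holds iff $y$ can be obtained from $x$ by a finite sequence of transfers, where each transfer picks indices $i,j$ with $x_i > x_j$ and replaces the pair $(x_i, x_j)$ with $(x_i - \varepsilon, x_j + \varepsilon)$ for some $0 < \varepsilon \le (x_i - x_j)/2$, leaving all other coordinates fixed. Since $f$ is invariant under permutations of its arguments and is built coordinate-wise by $g$, it suffices to verify that a single T-transform weakly decreases $f$ under convexity of $g$; iterating along the decomposition then gives $f(x) \ge f(y)$ for any $x \succ y$.

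For the single-transfer step, let $x'$ denote the vector after the T-transform and define $h(t) := g(x_i - t) + g(x_j + t)$ for $t \in [0, (x_i - x_j)/2]$. All coordinates other than $i,j$ contribute identically to $f(x)$ and $f(x')$, so $f(x) - f(x') = h(0) - h(\varepsilon)$. Differentiating, $h'(t) = g'(x_j + t) - g'(x_i - t)$. Since $x_i - t \ge x_j + t$ throughout the range, convexity of $g$ (equivalently, monotonicity of $g'$) yields $h'(t) \le 0$, so $h(\varepsilon) \le h(0)$ and therefore $f(x') \le f(x)$. This proves part (i) in the non-strict case, and part (ii) follows by the same argument with all inequalities reversed (concavity of $g$ makes $g'$ non-increasing, so $h'(t) \ge 0$).

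For the strict statements, I would argue that if $g$ is strictly convex then $g'$ is strictly increasing, so $h'(t) < 0$ whenever $x_i - t > x_j + t$, and hence each \emph{nondegenerate} T-transform (one with $\varepsilon > 0$ and $x_i > x_j$) strictly decreases $f$. The remaining obligation is to show that whenever $x \succ y$ but $x$ and $y$ are not permutations of each other, the Hardy--Littlewood--P\'olya decomposition can be chosen to contain at least one such nondegenerate transfer; this is immediate because if every transfer were degenerate then $y$ would be a permutation of $x$, contradicting the hypothesis. Chaining the strict inequality through that one step with weak inequalities elsewhere delivers $f(x) > f(y)$.

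The main delicate point is the bookkeeping for strictness, because one must ensure that the strict decrease at a single transfer survives the rest of the (possibly long) chain. An alternative route that avoids this entirely is the Schur--Ostrowski criterion: a symmetric $C^1$ function $f$ is (strictly) Schur-convex iff $(x_i - x_j)\bigl(\partial_i f - \partial_j f\bigr) \ge 0$ (respectively $> 0$ whenever $x_i \ne x_j$). For sum-form aggregators $\partial_i f = g'(x_i)$, so the criterion collapses to $g'$ being (strictly) monotone, which is exactly (strict) convexity of $g$; the concave case is symmetric. Either route suffices, and I would present the T-transform argument as primary since it is self-contained and directly reveals why convexity of the scalar summand $g$ is the right hypothesis.
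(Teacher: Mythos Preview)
The paper does not prove this lemma itself; it is stated as a known result with a citation to the convexity literature, so there is no in-paper argument to compare against. Your proposal is correct and follows a standard textbook route. Both the T-transform decomposition and the Schur--Ostrowski alternative are valid here---the paper's Definition~\ref{def:sum_task_agg} explicitly assumes $g$ is differentiable, so your reliance on $g'$ is justified---and your strictness bookkeeping (one nondegenerate transfer in the chain suffices, with weak inequalities elsewhere) is sound. If anything, the Schur--Ostrowski route is the cleaner presentation for this paper, since it collapses the entire lemma to the one-line observation that $(x_i-x_j)(g'(x_i)-g'(x_j))\ge 0$ is exactly monotonicity of $g'$, i.e., convexity of $g$.
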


This lemma simplifies checking the conditions of our \heterogeneitygap results. For example, the following corollary can be used to establish $\Delta R > 0$ for many of the aggregators in \autoref{tab:param-agg-extended}:

\begin{corollary}[Convex-Concave Positive \HeterogeneityGap]
\label{prop:concavity_forces_multi_task}
Let $N,M \ge 2$. Let $g:[0,1]\to\mathbb{R}_{\ge 0}$ be a non-negative strictly convex function satisfying $g(0)=0$, and let $h:\mathbb{R}_{\ge 0}\to\mathbb{R}$ be a strictly concave, increasing function satisfying $h(0)=0$. If each task-level aggregator is a strictly convex sum-form aggregator  
\(
  T_j(\mathbf{a}_j)
  \;=\;
  \sum_{i=1}^N g\bigl(r_{ij}\bigr)
\), and the outer aggregator is a strictly concave sum-form aggregator $U(\mathbf{y}) = \sum_{j=1}^M h(y_j)$, then 
\(
  \Delta R > 0
\).

\end{corollary}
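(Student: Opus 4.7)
The plan is to invoke \autoref{thm:heterogeneity-gap-schurconvex} and then rule out its ``trivial'' escape clause. First I would verify the two hypotheses of that theorem. Each $T_j=\sum_i g(r_{ij})$ is strictly Schur-convex by \autoref{lemma:schur_sum_form} applied to the strictly convex $g$. The outer aggregator $U(\mathbf{y})=\sum_j h(y_j)$ is coordinate-wise strictly increasing because a strictly concave, (weakly) increasing $h$ is in fact \emph{strictly} increasing --- a zero derivative anywhere, together with strict concavity, would force the derivative to turn negative later, contradicting monotonicity. So both preconditions of \autoref{thm:heterogeneity-gap-schurconvex} are in place.

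It therefore remains to show that no trivial homogeneous allocation is optimal; the theorem then immediately gives $\Delta R>0$. A trivial allocation puts $c_{j^\star}=1$ and $c_j=0$ for $j\neq j^\star$, yielding reward $R_{\mathrm{hom,triv}} = h(Ng(1))+(M-1)h(0)=h(Ng(1))$. Using $M\ge 2$, I would compare this against the concrete non-trivial homogeneous allocation $c_{j^\star}=c_{j'}=\tfrac{1}{2}$, $c_j=0$ otherwise (for any $j'\neq j^\star$), whose reward is $2\,h\!\bigl(Ng(\tfrac{1}{2})\bigr)$.

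The key chain of inequalities I would then establish is: (a) strict convexity of $g$ together with $g(0)=0$ gives $2g(\tfrac{1}{2})>g(0)+g(1)=g(1)$, so $2Ng(\tfrac{1}{2})>Ng(1)$; (b) strict monotonicity of $h$ yields $h\!\bigl(2Ng(\tfrac{1}{2})\bigr) > h(Ng(1))$; and (c) concavity of $h$ with $h(0)=0$ implies subadditivity, $h(a)+h(b)\ge h(a+b)$ for $a,b\ge 0$, so in particular $2h\!\bigl(Ng(\tfrac{1}{2})\bigr)\ge h\!\bigl(2Ng(\tfrac{1}{2})\bigr)$. Chaining (c) and (b) gives $2h(Ng(\tfrac{1}{2}))>h(Ng(1))$, so the trivial allocation is strictly dominated by a non-trivial one and cannot be optimal. \autoref{thm:heterogeneity-gap-schurconvex} then delivers $\Delta R>0$.

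The main obstacle is the second and third paragraphs: one must exhibit a specific non-trivial homogeneous allocation that provably beats every trivial one for all admissible $g,h$ in the class. The two-task equal split is the natural candidate, but making it work requires the interplay between strict convexity of $g$ at the midpoint (a strict inequality) and subadditivity of $h$ (coming from concavity plus $h(0)=0$, which is only a weak inequality). The strictness on the $g$ side is what ultimately drives the strict gap; everything else is bookkeeping.
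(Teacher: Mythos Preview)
Your inequality (a) has the sign reversed. Strict convexity of $g$ says the graph lies \emph{below} the chord: $g(\tfrac{1}{2}) < \tfrac{1}{2}\bigl(g(0)+g(1)\bigr)$, so $2g(\tfrac{1}{2}) < g(1)$, not $>$. With the correct sign the chain in your third paragraph collapses: $2Ng(\tfrac{1}{2}) < Ng(1)$, whence $h\bigl(2Ng(\tfrac{1}{2})\bigr) < h\bigl(Ng(1)\bigr)$ by monotonicity, while subadditivity only gives $2h\bigl(Ng(\tfrac{1}{2})\bigr) \ge h\bigl(2Ng(\tfrac{1}{2})\bigr)$, which points the wrong way to close the gap.

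Worse, the strategy itself cannot be rescued: under the stated hypotheses the trivial allocation \emph{can} be the unique homogeneous optimum, so no non-trivial homogeneous competitor will dominate it. Take $g(x)=x^{3}$, $h(y)=\sqrt{y}$, $N=M=2$. The homogeneous reward at $(c,1-c)$ is $\sqrt{2c^{3}}+\sqrt{2(1-c)^{3}}=\sqrt{2}\bigl(c^{3/2}+(1-c)^{3/2}\bigr)$, which is maximized at the corners $c\in\{0,1\}$ (value $\sqrt{2}$) and strictly smaller in the interior (your two-task split gives $1$). So the escape clause of \autoref{thm:heterogeneity-gap-schurconvex} cannot be eliminated by staying inside the homogeneous class.

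The paper handles the trivial case differently: if the homogeneous optimum is trivial with value $h\bigl(Ng(1)\bigr)$, it compares directly to a \emph{heterogeneous} allocation in which each agent places its full effort on a distinct task. Each covered task then scores $g(1)$ (other agents contribute $g(0)=0$), and strict concavity of $h$ with $h(0)=0$ gives $h\bigl(Ng(1)\bigr) < N\,h\bigl(g(1)\bigr)$ (more generally $h\bigl(Ng(1)\bigr) < \sum_{j} h\bigl(n_{j}g(1)\bigr)$ when $N>M$, by strict subadditivity across the $M\ge 2$ tasks). This yields $R_{\mathrm{het}} > R_{\mathrm{hom}}$ directly in the trivial case; the non-trivial case is already covered by \autoref{thm:heterogeneity-gap-schurconvex}.
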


\begin{proof}[Proof of \autoref{prop:concavity_forces_multi_task}]
We will apply Theorem~\ref{thm:heterogeneity-gap-schurconvex} by verifying its conditions:

First, by Lemma~\ref{lemma:schur_sum_form}, since $g$ is strictly convex, the (identical) task-level aggregators $T_j(\mathbf{x}) = \sum_{i=1}^{N} g(x_i)$ are strictly Schur-convex, satisfying condition (i) of Theorem~\ref{thm:heterogeneity-gap-schurconvex}.

Second, the outer aggregator $U(y_1,\ldots,y_M)$ is strictly increasing at every coordinate by definition, satisfying condition (ii).

Hence, the conditions of \autoref{thm:heterogeneity-gap-schurconvex} apply. To establish $\Delta R > 0$, it remains to verify that the optimal allocation is non-trivial: it distributes effort across at least two tasks. In any  admissible \emph{homogeneous} solution, each of the $N$ agents chooses the same effort‐distribution $(c_1,\dots,c_M)$ on tasks, with $\sum_j c_j=1$.  
Then task $j$’s reward is $T_j=N\,g(c_j)$, so
\(
  R(\mathbf{A})
  \;=\;
  \sum_{j=1}^M h\!\bigl(N\,g(c_j)\bigr).
\)
The trivial, \emph{all-agent single‐task allocation} uses $(c_j=1,c_{k\neq j}=0)$.  Its reward is therefore 
\(
  R_{\text{corner}}
  =
  h\!\bigl(N\,g(1)\bigr)
  + 
  \sum_{k\neq j} h\bigl(N\,g(0)\bigr)
  =
  h\!\bigl(N\,g(1)\bigr)
\)
since $g(0)=0$ and $h(0)=0$.

Strict concavity of $h$ implies that $h\!\bigl(N\,g(1)\bigr) < N \cdot h(g(1))$. Hence, agents can attain a better reward by allocating effort $1$ to $N$ different tasks rather than a single task. This shows that  the best solution \emph{must} use at least two nonzero $c_j$, completing the proof. 
\end{proof}

\section{Formal Analysis}
\label{appendix:formal_analysis}
\subsection{Proof of \autoref{thm:heterogeneity-gap-schurconvex}}
\begin{proof}[Proof of \autoref{thm:heterogeneity-gap-schurconvex}] Let $\mathbf{A}_{\mathrm{hom}}$ be an optimal homogeneous allocation (i.e., $R(A_{hom}) = R_{hom}$), whose $i$th row is the vector 
\[
  \mathbf{c} \;=\; (c_1,\dots,c_M) \quad \text{with} \quad \sum_{j=1}^M c_j = 1.
\]
Then each column $j$ of $\mathbf{A}_{\mathrm{hom}}$ is the uniform vector 
\(
  \mathbf{u}_j \;=\; (c_j,c_j,\dots,c_j)^\top \;\in \mathbb{R}^N.
\)
Hence the task‐level reward is $T_j(\mathbf{u}_j)$, and the overall reward is
\[
R\bigl(\mathbf{A}_{\mathrm{hom}}\bigr) = U\!\bigl(T_1(\mathbf{u}_1),\dots,T_M(\mathbf{u}_M)\bigr).
\]

Because $\sum_j c_j=1$, there is at least one task $j$ with $c_j > 0$. We construct a heterogeneous allocation $A_{het}$ such that each column $x_j$ in $A_{het}$ has the same sum as the corresponding column in $A_{hom}$. 

The total effort allocated to a task $j$ can be expressed as $\lfloor N c_j \rfloor + f_k$, where  $0 \leq f_j < 1$. First, we assign $\lfloor N c_j \rfloor$ agents to allocate effort $1$ to task $j$, for every task $j$. These agents are all distinct. This leaves us with $\sum_j f_j~=~N~-~\sum_j \lfloor N c_j \rfloor$ agents that have not allocated any effort yet. Let $i$ be the first of those agents. We have agent $i$ allocate $f_1$ effort to task $1$, $f_2$ effort to task $2$, and so on, until we arrive at a task $k$ such that $f_1 + \ldots + f_k = 1 + s$, for some $s > 0$. We have $i$ allocate $f_k - s$ to this task $k$. Then, we move to agent $i+1$, and allocate the remaining fractional efforts in the same manner (and in particular, allocating $s$ effort to task $k$), until agent $i+1$ overflows. Then we move to agent $i+2$, and so on. This ensures that we have allocated $N$ effort in total across the agents, and that every agent's effort allocation sums exactly to $1$, so is feasible.

Let $x_j$ be the $j$th column of $A_{het}$. We note the following fact: any non-uniform vector whose sum is $Nc_j$ majorizes the uniform vector $u_j$. Hence, $T_j(x_j) \geq T_j(u_j)$, with equality only if $x_j = u_j$.  This means that if $A_{het} \neq A_{hom}$, then 

\[
  R\bigl(\mathbf{A}_{\mathrm{het}}\bigr)
  \;=\;
  U\!\bigl(T_1(\mathbf{x}_1),\dots,T_M(\mathbf{x}_M)\bigr)
  \;>\;
  U\!\bigl(T_1(\mathbf{u}_1),\dots,T_M(\mathbf{u}_M)\bigr)
  \;=\;
  R\bigl(\mathbf{A}_{\mathrm{hom}}\bigr).
\]

We note that $A_{hom} = A_{het}$ only if $A_{hom}$ is a trivial allocation, as $A_{het}$ contains at least one agent allocating effort $1$ to some task, and $A_{hom}$'s agents only allocate fractional efforts, if it is non-trivial. Otherwise, since $R(A_{hom}) = R_{hom}$, the above inequality implies \(\Delta R = R_{\mathrm{het}} - R_{\mathrm{hom}} > 0 \). This completes the proof.  
\end{proof}

\subsection{Proof of \autoref{thm:heterogeneity-gap-schurconcave}}

\begin{proof}[Proof of \autoref{thm:heterogeneity-gap-schurconcave}]
Let $\mathbf{A}$ be an arbitrary feasible allocation, and let $\mathbf{A}_{\mathrm{hom}}$ be a \emph{homogeneous} allocation with the same column sums.  Concretely, for each column $j$, define
\[
  s_j
  \;=\;
  \sum_{i=1}^N r_{ij}
  \quad\text{and}\quad
  \mathbf{u}_j
  \;=\;
  \bigl(\tfrac{s_j}{N},\tfrac{s_j}{N},\dots,\tfrac{s_j}{N}\bigr)^\top,
\]
so $\mathbf{u}_j$ is the \emph{uniform} distribution of total mass $s_j$ across $N$ agents.  
Then construct
\[
  \mathbf{A}_{\mathrm{hom}}
  \;=\;
  \begin{pmatrix}
  \tfrac{s_1}{N} & \cdots & \tfrac{s_M}{N} \\
  \vdots & \ddots & \vdots\\
  \tfrac{s_1}{N} & \cdots & \tfrac{s_M}{N}
  \end{pmatrix},
\]
which is clearly \emph{homogeneous} (each row is the same), and respects each column sum $s_j$. Since $\sum_j s_j = N$, each row sums to $1$, hence the allocation is feasible.  By Schur‐concavity of $T_j$, for each column $j$ we have
\[
  \mathbf{a}_j \;\succ\; \mathbf{u}_j
  \quad\Longrightarrow\quad
  T_j(\mathbf{a}_j) 
  \;\leq\;
  T_j(\mathbf{u}_j),
\]
unless $\mathbf{a}_j$ is $\mathbf{u}_j$.  
In other words, \emph{any} deviation from the uniform vector with the same sum \(\sum_{i=1}^N a_{ji} = s_j\) will not increase $T_j(\mathbf{a}_j)$ under Schur‐concavity. Hence for each column $j$ of $\mathbf{A}$, 
\(
  T_j(\mathbf{a}_j) 
  \;\le\;
  T_j(\mathbf{u}_j),
\).
Since $U$ is non-decreasing in each coordinate, 
\[
  R\bigl(\mathbf{A}\bigr)
  \;=\;
  U\!\bigl(T_1(\mathbf{a}_1),\dots,T_M(\mathbf{a}_M)\bigr)
  \;\le\;
  U\!\bigl(T_1(\mathbf{u}_1),\dots,T_M(\mathbf{u}_M)\bigr)
  \;=\;
  R\bigl(\mathbf{A}_{\mathrm{hom}}\bigr).
\]
 This implies 
\(
  \Delta R 
  \;=\;
  0.
\)
\end{proof}

\subsection{Proof of \autoref{thm:no-gap-schurconvex-outer-detailed}}

\begin{proof}[Proof of \autoref{thm:no-gap-schurconvex-outer-detailed}]

By hypothesis, the components of the task score vector 
\[
  \textbf{T}(\textbf{A}) = \Bigl(T_1(\mathbf{a}_1),\,T_2(\mathbf{a}_2),\,\dots,\,T_M(\mathbf{a}_M)\Bigr)
\]
always  sum to $C$. By strict Schur-convexity, the maximum value of $U$ over such vectors is attained precisely at an extreme point of the $C$-simplex, i.e.\ at some permutation of $(C,0,\ldots,0)$. Hence, we seek to find an allocation of efforts, $\mathbf{A}_{\mathrm{corner}}$, that causes $\textbf{T}(\textbf{A})$ to equal this vector.

Let each agent $i$ invest \emph{all} of its effort into task~1. This is the trivial allocation.  Then the first column of $\mathbf{A}_{\mathrm{corner}}$ is
\((1,1,\dots,1)^\top\),
and all other columns $\mathbf{a}_j$ are zero.  Since task scores sum to $C$, we get 
\(
  T_1\bigl(\mathbf{a}_1\bigr)
  =
  C,
  \quad
  T_j\bigl(\mathbf{a}_j\bigr)
  =
  0
  \;\;\text{for }j\neq 1.
\)
By assumption (2), we infer that the vector of task-level scores is indeed $(C,0,\dots,0)$.  

Notice that \emph{each row of $A_{corner}$ is the same} $(1,0,\dots,0)$, making $\mathbf{A}_{\mathrm{corner}}$ a \emph{homogeneous} allocation. Hence, we attained the maximum possible reward $R(\textbf{A})$ through a homogeneous allocation, implying $\Delta R = 0$. 
\end{proof}

\subsection{Proof of \autoref{thm:gap_NeqM_softmax_hetgap}}

Before proving the statement, let's write the expressions for homogeneous and heterogeneous optima. For each task \( j \), we defined
\[
T_j(\mathbf{A})
=
\sum_{i=1}^N 
\frac{\exp\bigl(t \cdot r_{i j}\bigr)}{\sum_{\ell=1}^N \exp\bigl(t \cdot r_{\ell j}\bigr)}
\; r_{i j},
\]

while defining the outer aggregator to be 
\[
U\bigl(T_1(\mathbf{a}_1) , \ldots T_M(\mathbf{a}_m) )
=
\sum_{j=1}^M 
\frac{\exp\bigl(\tau \cdot T_j(\mathbf{A})\bigr)}
     {\sum_{\ell=1}^M \exp\bigl(\tau \cdot T_\ell(\mathbf{A})\bigr)}
\; T_j(\mathbf{A}),
\]

where \(t, \tau \in \mathbb{R}\) are temperature parameters. In the \textbf{homogeneous setting}, where all agents share the same allocation \(\mathbf{c} = (c_1,\dots,c_M)\), we therefore have
  \(
  T_j(\mathbf{A})
  =
  \sum_{i=1}^N 
    \frac{\exp\bigl(t\,c_j\bigr)}{\sum_{\ell=1}^N \exp\bigl(t\,c_j\bigr)}
  \; c_j
  =
  c_j.
  \)
  Thus,
  \[
  R_{\mathrm{hom}}
  =
  \max_{\mathbf{c} \,\in\, \Delta^{M-1}}
  \quad
  \sum_{j=1}^M 
  \frac{\exp\bigl(\tau\,c_j\bigr)}{\sum_{\ell=1}^M \exp\bigl(\tau\,c_\ell\bigr)}
  \; c_j
  \]

where $\Delta^{M-1}$ is the simplex of all admissible allocations. 

  In the general \textbf{heterogeneous setting}, each row \((r_{i1},\dots,r_{iM})\) can be
  different. Then
  \[
  T_j(\mathbf{A})
  \;=\;
  \sum_{i=1}^N 
  \frac{\exp\bigl(t\,r_{i j}\bigr)}{\sum_{\ell=1}^N \exp\bigl(t\,r_{\ell j}\bigr)} \;
  r_{ij},
  \]
  and we choose \(\mathbf{A}\in(\Delta^{M-1})^N\) to maximize
  \[
  R_{\mathrm{het}} 
  \;=\;
  \max_{\mathbf{A}} 
  \sum_{j=1}^M 
  \frac{\exp\bigl(\tau \,T_j(\mathbf{A})\bigr)}{\sum_{k=1}^M \exp\bigl(\tau\,T_k(\mathbf{A})\bigr)}
  \; T_j(\mathbf{A}).
  \]

Keeping these expressions in mind, we proceed with the proof of \autoref{thm:gap_NeqM_softmax_hetgap}.

\underline{Reminder:} assuming $N = M \geq 2$, we want to prove $\Delta R(t,\tau;N)=0$ when $t \leq 0$, and 
\[
\boxed{%
\Delta R(t,\tau;N)\geq
\begin{cases}
\sigma(t,N)-\dfrac1N, & t>0,\;\;\tau\le 0,\\[10pt]
\max\!\bigl\{\sigma(t,N)-\sigma(\tau,N),\,0\bigr\}, & t>0,\;\;\tau\ge 0.
\end{cases}}
\] otherwise, where 
\(\sigma(t,N)\;:=\;\frac{e^{t}}{e^{t}+N-1}.
\)


\begin{proof}[Proof of \autoref{thm:gap_NeqM_softmax_hetgap}]

When $t\le0$, $T_j$ is Schur–concave, so $\Delta R = 0$ by \autoref{thm:heterogeneity-gap-schurconcave}. We assume $t > 0$ for the rest of the proof.

\textit{Homogeneous optimum.}
If every row of $\mathbf A$ equals the same allocation
$\mathbf c\in\Delta^{N-1}$, then $T_j(\mathbf A)=c_j$. 
$U$ is Schur–concave for $\tau\le0$, and Schur–convex for $\tau\ge0$, hence it is maximized by the uniform distribution in the former case, and by a $1$-hot vector in the latter case, yielding:

\[
  R_{\mathrm{hom}}
  \;=\;
  \max_{\mathbf c\in\Delta^{N-1}}U(\mathbf c)
  \;=\;
  \begin{cases}
     \dfrac1N,          & \tau\le0,\\[6pt]
     \sigma(\tau,N),    & \tau>0.
  \tag{H}
  \end{cases}
\]

\textit{Lower bound on $R_{het}$.}
The \emph{trivial} allocation, where every agent works on the same task, produces 
$R_{\mathrm{trivial}}=\sigma(\tau,N)$.
The \emph{spread} allocation, where agent $i$ works exclusively on task $i$,
makes each column ``one-hot''; this gives
$T_j=\sigma(t,N)$ for all $j$, and plugging this into $U$, we get 
$R_{\mathrm{spread}}=\sigma(t,N)$.
Consequently

\[
  R_{\mathrm{het}}\;\ge\;\max\{\sigma(t,N),\,\sigma(\tau,N)\}.
\tag{L}
\]

Combining (H) and (L) gives the desired lower bound.
\end{proof}

\section{Deriving the $\{\min, \textnormal{mean}, \max\}$ heterogeneity gains in the \autoref{fig:deltaR-vs-softmax} table}

We derive these heterogeneity gain case-by-case. \autoref{tab:continuous-minmeanmax} summarizes the derivation for continuous allocations ($r_{ij} \in [0,1]$), and  \autoref{tab:discrete-minmeanmax} does the same for discrete effort allocations ($r_{ij} \in \{0,1\}$).

\begin{table}[h!]
\centering
\renewcommand{\arraystretch}{1.1}
\setlength{\tabcolsep}{4pt}
\scriptsize
\caption{All nine extreme cases of inner/outer aggregators belonging to the set  $\{\min, \text{mean}, \max\}$. 
In each cell, we show the best possible outcome for Heterogeneous vs.\ Homogeneous 
allocations and the resulting \(\Delta R\).}
\label{tab:continuous-minmeanmax}
\begin{tabular}{l|p{3.8cm}p{3.8cm}p{3.8cm}}

 & $T = \min$ & $T = \text{mean}$ & $T = \max$ \\
\hline
\multicolumn{1}{c|}{$U = \min$} 
& \textbf{Inner:} $T_j=\min_i r_{ij}.$

\textbf{Best $R_{het}$, $R_{hom}$:}
All must have $r_{ij}\ge x$ to push $\min_i r_{ij}\!=\!x$,
so $x\le 1/M.$

$\implies T_j=1/M.$

\textbf{Outer:} $\min_j T_j=1/M \implies R=1/M.$

\textbf{Gap: }0.
& \textbf{Inner:} $T_j=\tfrac{1}{N}\sum_i r_{ij}$ (avg over $i$).

\textbf{Maximize $\min_j T_j$:}
Both $R_{het}$, $R_{hom}$ must make $T_j$ all equal (for best min),
so $T_j=1/M.$

\textbf{Outer:} $\min_j T_j=1/M \implies R=1/M.$

\textbf{Gap: }0.
& \textbf{Inner:} $T_j=\max_i r_{ij}.$

\textbf{Outer:} picks $\min_j T_j$.

$R_{het}$: $\min_j T_j=1 \implies R=1.$

$R_{hom}$: $\min_j T_j=1/M \implies R=1/M.$

\textbf{Gap: }$1 - \tfrac{1}{M} = \tfrac{M-1}{M}.$
\\
\hline

\multicolumn{1}{c|}{$U = $ mean} 
& \textbf{Inner:} $T_j = \min_i r_{ij} = 1/M.$

\textbf{Outer:} simple avg $\tfrac{1}{M}\sum_j T_j.$  

Since $\sum_j T_j = M \cdot (1/M) = 1 \implies R=1/M.$

Both $R_{het}$, $R_{hom}$ same $\implies \Delta R=0.$

\textbf{Gap: }0.

& \textbf{Inner:} $T_j = \tfrac{1}{N}\sum_i r_{ij}.$  

Then $\sum_j T_j=1.$

\textbf{Outer:} avg $\!=\!\tfrac{1}{M}\sum_j T_j.$

Hence $R = \tfrac{1}{M}\cdot 1 = \tfrac{1}{M}.$

Same for $R_{het}$, $R_{hom}$.

\textbf{Gap: }0.
& \textbf{Inner:} $T_j = \max_i r_{ij}.$  



\textbf{Outer:} avg $\!=\!\tfrac{1}{M}\sum_j T_j.$  

$R_{het}$: sum $= M \implies R=1.$

$R_{hom}$: sum $=1 \implies R=1/M.$

\textbf{Gap: }$1 - \tfrac{1}{M} = \tfrac{M-1}{M}.$
\\
\hline

\multicolumn{1}{c|}{$U = \max$} 
& \textbf{Inner:} $T_j = \min_i r_{ij}$ can be made $1$ for one task.

\textbf{Outer:} picks $\max_j T_j = 1 \implies R=1$

Same for $R_{het}$, $R_{hom}$.

\textbf{Gap: }0.
& \textbf{Inner:} $T_j =$ avg over $i.$

\textbf{Outer:} picks $\max_j T_j.$  

Both $R_{het}$, $R_{hom}$ can put all effort into one task 
to get $T_j=1,$ so $R=1.$

\textbf{Gap: }0.
& \textbf{Inner:} $T_j = \max_i r_{ij}.$

\textbf{Outer:} picks $\max_j T_j.$  

Both $R_{het}$, $R_{hom}$ can achieve $\max_j=1 \implies R=1.$

\textbf{Gap: }0.
\\
\hline
\end{tabular}

\end{table}

\begin{table}[h!]
\centering
\renewcommand{\arraystretch}{1.1}
\setlength{\tabcolsep}{4pt}
\scriptsize
\caption{
A “9 extreme cases” table for \emph{discrete, one-task-per-agent} allocations. }
\label{tab:discrete-minmeanmax}
\begin{tabular}{l|p{4.1cm}p{4.1cm}p{4.1cm}}
\hline
& $\min$ & mean & $\max$\\
\hline

\multicolumn{1}{c|}{$\min$} 
& 
\textbf{Inner:} 
\[
T_j \!\to\!
\begin{cases}
1, & \text{if all agents pick $j$},\\
0, & \text{otherwise}.
\end{cases}
\]

\textbf{Outer:} 
 $\min_j T_j$. 
To get $R > 0$, must have $T_j > 0$ for \emph{every} $j$ 
(i.e.\ all agents pick \emph{all} tasks, impossible).

Hence 
$R_{\mathrm{het}}=R_{\mathrm{hom}}=0$ typically,
$\Delta R=0.$

&
\textbf{Inner:}
$T_j = \frac{|\mathcal{I}_j|}{N}$

\textbf{Outer:} 
 $\min_j T_j$.

$R_{het} = \lfloor N / M \rfloor / N$. $R_{hom} = 0$. $\Delta R = \lfloor N / M \rfloor / N$.



&
\textbf{Inner:}
\[
T_j \!\to\!
\begin{cases}
1, & \text{if at least 1 agent picks $j$},\\
0, & \text{if no agent picks $j$}.
\end{cases}
\]

\textbf{Outer:} 
$\min_j T_j.$

- \emph{Heterogeneous} can choose $s$ distinct tasks. 
  If want $\min_j=1$, must pick \emph{all} $M$ tasks. 
  That requires $N\ge M$. 
  Then $R=1.$ 
- \emph{Homogeneous} covers only 1 task $\implies \min_j=0$ for $M>1 \implies R=0.$

$\Delta R=1$ if $N\ge M$, else $0.$
\\
\hline

\multicolumn{1}{c|}{mean} 
& 
\textbf{Inner:}
$T_j=1$ only if all pick $j$, else 0.

Summation $\sum_j T_j$ is number of tasks chosen by \emph{all} agents.  
Usually 0 or 1.

\textbf{Outer:}
Average across $j$.  
$R = \frac{1}{M}\sum_j T_j.$

$\implies R=1/M,$ $\Delta R=0.$
&
\textbf{Inner:}
$T_j = \tfrac{|\mathcal{I}_j|}{N}$.

\textbf{Outer:}
Average across tasks: 
$
R=\frac{1}{M}\sum_{j=1}^M \frac{|\mathcal{I}_j|}{N}
= \frac{1}{M}.
$
No matter how agents are distributed, $\sum_{j=1}^M |\mathcal{I}_j|=N$.  
Hence $R_{\mathrm{het}}=R_{\mathrm{hom}}=\tfrac{1}{M},$ 
$\Delta R=0.$

&
\textbf{Inner:}
$T_j=1$ if chosen by at least 1 agent, else 0.

\textbf{Outer:}
Average across $j$: $\tfrac{1}{M}\sum_j T_j.$  
This is $\tfrac{1}{M}\cdot\bigl(\#\text{ of tasks chosen}\bigr).$

- \emph{Heterogeneous} can pick up to $\min(M,N)$ tasks, so $R= \frac{\min(M,N)}{M}.$  
- \emph{Homogeneous} covers exactly 1 task $\implies R=1/M.$

$\Delta R=\frac{\min(M,N)-1}{M}.$
\\
\hline

\multicolumn{1}{c|}{$\max$} 
& 
\textbf{Inner:} 
$T_j=1$ only if all pick $j$, else 0.

\textbf{Outer:}
 $\max_j T_j.$

$\Delta R=0.$

&
\textbf{Inner:}
$T_j=|\mathcal{I}_j|/N.$

\textbf{Outer ($\tau\to +\infty$):}
$\max_j T_j$. 
We can place \emph{all} agents on one task, get $T_j=1.$ Then $R=1.$  
Same for homogeneous or heterogeneous.  
$\Delta R=0.$

&
\textbf{Inner:}
$T_j=1$ if at least 1 picks $j$, else 0.

\textbf{Outer:}
$\max_j T_j=1$ if any agent picks $j$.  
Even a single task yields $R=1.$  
So $R_{\mathrm{hom}}=R_{\mathrm{het}}=1,$ 
$\Delta R=0.$
\\
\hline
\end{tabular}
\end{table}
\label{sec:theory_softmax}

\section{Parametrized Families of Aggregators}
\label{appendix:parametrized_aggregator_table}
The Table in this section illustrates several families of \textit{generalized aggregators} that the analysis in this paper applies to. The scalar $t$ parametrizes each family of aggregators, continuously shifting the aggregators from Schur-concave to Schur-convex. 

\begin{table}[!h]
\centering
\caption{Illustrative families of parametric (and one nonparametric) aggregators $f_t(\mathbf{x})$. 
Changing the real parameter $t$ can switch between Schur-convex and Schur-concave behaviors (on nonnegative inputs), 
or control how strongly the aggregator favors ``peaked'' vs.\ ``uniform'' distributions. 
As $t\to\pm\infty$ or $t\to 0$, many reduce to well-known extremes such as $\max$, $\min$, or the arithmetic mean. }
\label{tab:param-agg-extended}
\resizebox{1.0\textwidth}{!}{%
\begin{tabular}{@{}l p{7cm} p{6cm}@{}}
\toprule
\textbf{Name} 
& 
\textbf{Definition} 
& 
\textbf{Schur Property \& Limits}\\
\midrule

\textbf{Power-Sum} 
& 
\[
f_{t}(\mathbf{x})
~=~
\sum_{i=1}^N \bigl(x_i\bigr)^{t},
\quad
x_i \ge 0,
\quad
t > 0 
\]
& 
\begin{itemize}
  \item Strictly \emph{Schur-convex} for \(t > 1\).
  \item Strictly \emph{Schur-concave} for \(0 < t < 1\).
  \item At \(t=1\), it is linear (both Schur-convex and Schur-concave).
  \item Undefined at \(t \le 0\) if any \(x_i=0\), though one can extend with limits.
\end{itemize}
\\

\textbf{Power-Mean} 
& 
\[
M_{t}(\mathbf{x})
~=~
\biggl(\frac{1}{N}\sum_{i=1}^N (x_i)^{t}\biggr)^{\!\!1/t},
\quad
x_i \ge 0,\;
t \neq 0
\]
& 
\begin{itemize}
  \item Strictly \emph{Schur-convex} for \(t > 1\).
  \item Strictly \emph{Schur-concave} for \(0 < t < 1\).
  \item Reduces to arithmetic mean at \(t=1\).
  \item As \(t \to \infty\), converges to \(\max_i x_i\); 
    as \(t \to -\infty\), converges to \(\min_i x_i\).
\end{itemize}
\\

\textbf{Log-Sum-Exp (LSE)} 
& 
\[
\mathrm{LSE}_t(\mathbf{x})
~=~
\frac{1}{t}\ln\!\Bigl(\sum_{i=1}^N e^{\,t\,x_i}\Bigr),
\quad
t \neq 0
\]
& 
\begin{itemize}
  \item Strictly \emph{Schur-convex} for \(t > 0\).
  \item Strictly \emph{Schur-concave} for \(t < 0\).
  \item As \(t \to \infty\), approaches \(\max_i x_i\);
    as \(t \to -\infty\), approaches \(\min_i x_i\).
\end{itemize}
\\

\textbf{Softmax Aggregator} 
& 
\[
\mathrm{Softmax}_t(\mathbf{x})
~=~
\sum_{i=1}^N 
\frac{e^{\,t\,x_i}}{\sum_{j=1}^N e^{\,t\,x_j}}
\,x_i,
\quad
t \in \mathbb{R}
\]
& 
\begin{itemize}
  \item Strictly \emph{Schur-convex} for \(t > 0\).
  \item Strictly \emph{Schur-concave} for \(t < 0\).
  \item As \(t \to \infty\), converges to \(\max_i x_i\);
    as \(t \to -\infty\), converges to \(\min_i x_i\).
  \item At \(t=0\), each weight is \(\tfrac{1}{N}\), so 
    \(\mathrm{Softmax}_0(\mathbf{x}) = \tfrac{1}{N}\sum_i x_i\).
\end{itemize}
\\

\bottomrule
\end{tabular}
}
\end{table}

\section{Additional Results in the Multi-Agent Multi-Task Matrix Game}
\label{app:matrix_game}

We report further details and results on the \heterogeneitygaps obtained in the multi-agent multi-task matrix game.

\subsection{Game formulation}
In \autoref{tab:matrix_game_formulation} we provide an example of the pay-off matrix in this game for $N=M=3$.

\begin{table}[!h]
    \centering
     \caption{Example of a Multi-Agent Multi-Task matrix game for $N=M=3$. Agents choose their actions $A=(r_{ij})$ and receive the global reward $R(\mathbf{A}) = \outeragg_{j=1}^M \inneragg_{i=1}^N r_{ij}$.}
    \label{tab:matrix_game_formulation}
\begin{tabular}{llccc} 
\toprule 
& & \multicolumn{3}{c}{Tasks} \\ 
& & {1} & {2} & {3} \\ 
\midrule 
\multirow{3}{*}{\hfill\rotatebox[origin=c]{90}{Agents}\hfill} 
& 1 & $r_{11}$ & $r_{12}$ & $r_{13}$ \\ 
& 2 & $r_{21}$ & $r_{22}$ & $r_{23}$ \\ 
& 3 & $r_{31}$ & $r_{32}$ & $r_{33}$ \\ 
\bottomrule 
\end{tabular}
\end{table}

\subsection{$N=M=2$}

We train with $N=2,M=2$ for 100 training iterations (each consisting of 60,000 frames). We report the results for the \textbf{continuous} case in \autoref{tab:matrix_game_cont} and for the \textbf{discrete} case in \autoref{tab:matrix_game_disc}. The evolution of the \heterogeneitygaps over training is shown in \autoref{fig:matrix_game}.

\begin{table}[!h]
    \centering
     \caption{\Heterogeneitygap $\Delta R \in \R_{0\leqslant x \leqslant 1}$ of the \textbf{continuous} matrix game with $N=M=2$. The results match the theoretical analysis in the Table of \autoref{fig:deltaR-vs-softmax}. We report mean and standard deviation after 6 million training frames over 9 different random seeds.}
    \label{tab:matrix_game_cont}
\begin{tabular}{llccc} 
\toprule 
& & \multicolumn{3}{c}{$T$} \\ 
& & {Min} & {Mean} & {Max} \\ 
\midrule 
\multirow{3}{*}{\hfill\rotatebox[origin=c]{0}{$U$}\hfill} 
& Min & $-0.002 \pm 0.002$ & $0.000 \pm 0.003$ & $\boldsymbol{0.504} \pm 0.007$ \\ 
& Mean & $-0.002 \pm 0.002$ & $0.000 \pm 0.000$ & $\boldsymbol{0.496} \pm 0.001$ \\ 
& Max & $-0.003 \pm 0.002$ & $-0.001 \pm 0.001$ & $0.003 \pm 0.001$ \\ 
\bottomrule 
\end{tabular}
\end{table}

\begin{table}[!h]
    \centering
       \caption{\Heterogeneitygap $\Delta R \in \R_{0\leqslant x \leqslant 1}$ of the \textbf{discrete} matrix game with $N=M=2$. The results match the theoretical analysis in \autoref{fig:deltaR-vs-softmax}. We report mean and standard deviation after 6 million training frames over 9 different random seeds.}
    \label{tab:matrix_game_disc}
\begin{tabular}{llccc} \toprule & & \multicolumn{3}{c}{$T$} \\ & & {Min} & {Mean} & {Max} \\ \midrule \multirow{3}{*}{\hfill\rotatebox[origin=c]{0}{$U$}\hfill} & Min & $0.0 \pm 0.0$ & $\boldsymbol{0.5} \pm 0.0$ & $\boldsymbol{1} \pm 0.0$ \\ & Mean & $0.0 \pm 0.0$ & $0.0 \pm 0.0$ & $\boldsymbol{0.5} \pm 0.0$ \\ & Max & $0.0 \pm 0.0$ & $0.0 \pm 0.0$ & $0.0 \pm 0.0$ \\ \bottomrule \end{tabular}
\end{table}

\begin{figure}[!ht] 
    \centering  
    \includegraphics[width=\textwidth]{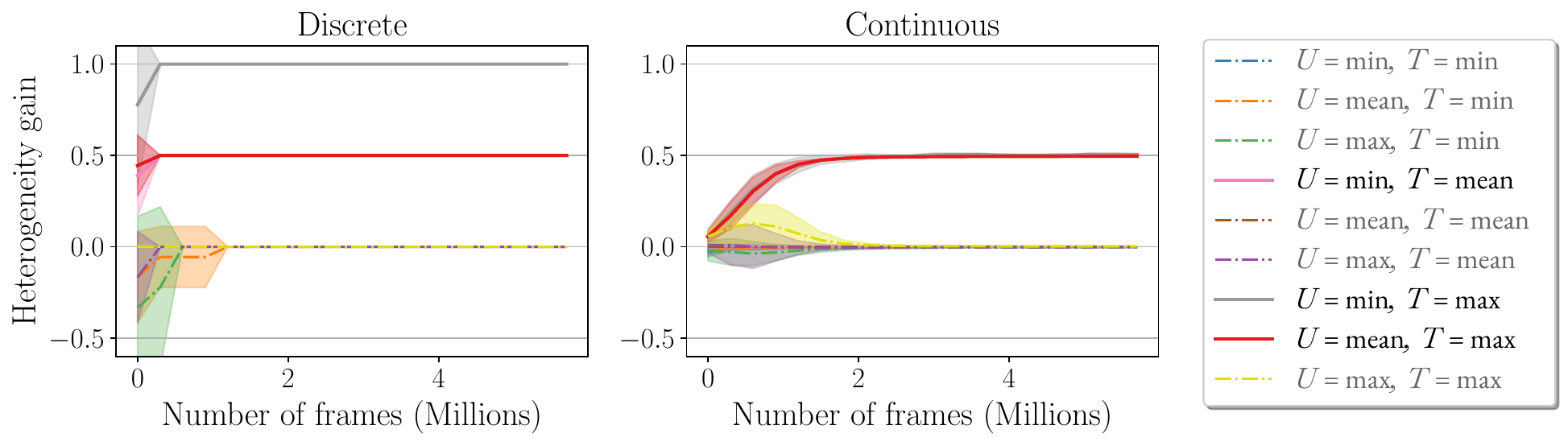}  
    \caption{\Heterogeneitygap for the discrete and continuous matrix games with $N=M=2$ over training iterations. We report mean and standard deviation after 6 million training frames over 9 different random seeds. The final results match the theoretical predictions in \autoref{fig:deltaR-vs-softmax}.}  
    \label{fig:matrix_game}  
\end{figure}

\subsection{$N=M=4$}

In the case $N = M = 4$, the evolution of the \heterogeneitygaps during training is shown in \autoref{fig:matrix_game_4_agents}.
We further report the final obtained gains for the \textbf{continuous} case in \autoref{tab:matrix_game_cont_4} and for the \textbf{discrete} case in \autoref{tab:matrix_game_disc_4}.

\begin{table}[!ht]
    \centering
     \caption{\Heterogeneitygap $\Delta R \in \R_{0\leqslant x \leqslant 1}$ of the \textbf{continuous} matrix game with $N=M=4$. The results match the theoretical analysis in \autoref{fig:deltaR-vs-softmax}. We report mean and standard deviation after 12 million training frames over 9 different random seeds.}
    \label{tab:matrix_game_cont_4}
\begin{tabular}{llccc} 
\toprule 
& & \multicolumn{3}{c}{$T$} \\ 
& & {Min} & {Mean} & {Max} \\ 
\midrule 
\multirow{3}{*}{\hfill\rotatebox[origin=c]{0}{$U$}\hfill} 
& Min & $-0.003 \pm 0.002$ & $0.000 \pm 0.001$ & $\boldsymbol{0.690} \pm 0.026$ \\ 
& Mean & $-0.002 \pm 0.000$ & $0.000 \pm 0.000$ & $\boldsymbol{0.722} \pm 0.002$ \\ 
& Max & $-0.037 \pm 0.023$ & $-0.009 \pm 0.005$ & $0.029 \pm 0.006$ \\ 
\bottomrule 
\end{tabular}
\end{table}

\begin{table}[!h]
    \centering
       \caption{\Heterogeneitygap $\Delta R \in \R_{0\leqslant x \leqslant 1}$ of the \textbf{discrete} matrix game with $N=M=4$. The results match the theoretical analysis in the Table of \autoref{fig:deltaR-vs-softmax}. We report mean and standard deviation after 12 million training frames over 9 different random seeds.}
    \label{tab:matrix_game_disc_4}
\begin{tabular}{llccc} \toprule & & \multicolumn{3}{c}{$T$} \\ & & {Min} & {Mean} & {Max} \\ \midrule \multirow{3}{*}{\hfill\rotatebox[origin=c]{0}{$U$}\hfill} & 
Min & $0.0 \pm  0.0$ & $\boldsymbol{0.25} \pm 0.0$ & $\boldsymbol{1.0} \pm 0.0$ \\ & 
Mean & $0.0 \pm 0.0$ & $0.0 \pm 0.0$ & $\boldsymbol{0.75} \pm 0.0$ \\ & 
Max & $0.0 \pm 0.0$ & $0.0 \pm 0.0$ & $0.0 \pm 0.0$ \\ \bottomrule \end{tabular}
\end{table}

\subsection{$N=M=8$ and $N=12, M=2$}

To test scalability, we further report results for the discrete matrix game with $N=M=8$ in \autoref{tab:matrix_game_disc_8}, $N=11, M=2$ in \autoref{tab:matrix_game_disc_11_2}. In the case of $N=11, M=2$, results match the predictions of Table \ref{fig:deltaR-vs-softmax} (discrete rewards) precisely.

For $N=M=8$, we match the predictions precisely except for two cells which have negative empirical $\Delta R$. As discussed in \autoref{sec:experiments}, due to the larger agent scale and action dimensionality (8 agents and 8 tasks), for some reward structures the empirical heterogeneity gain is negative, since the neurally heterogeneous agents we train require more time to discover the optimal policy, hence lag behind their homogeneous counterparts under a fixed training budget. Increasing the number of training steps steers the negative $\Delta R$ values to $0$. This nuance aside, what is important is that the results still follow our theory exactly in terms of which reward structures yield a \textit{positive} heterogeneity gain. For such reward structures, our theory also precisely predicts the numerical value of $\Delta R$ (\autoref{fig:deltaR-vs-softmax}) . 

\begin{table}[!h]
    \centering
       \caption{\Heterogeneitygap $\Delta R \in \R_{0\leqslant x \leqslant 1}$ of the \textbf{discrete} matrix game with $N=M=8$. We report mean and standard deviation after 12 million training frames over 8 different random seeds.}
    \label{tab:matrix_game_disc_8}
\begin{tabular}{llccc} \toprule & & \multicolumn{3}{c}{$T$} \\ & & {Min} & {Mean} & {Max} \\ \midrule \multirow{3}{*}{\hfill\rotatebox[origin=c]{0}{$U$}\hfill} & 
Min & $0.0 \pm  0.0$ & $\boldsymbol{0.125} \pm 0.0$ & $\boldsymbol{1.0} \pm 0.0$ \\ & 
Mean & $-0.094 \pm 0.068 $ & $0.0 \pm 0.0$ & $\boldsymbol{0.875}\pm 0.0$ \\ & 
Max & $-0.75 \pm 0.5$ & $0.0 \pm 0.0$ & $0.0 \pm 0.0$ \\ \bottomrule \end{tabular}
\end{table}

\begin{table}[!h]
    \centering
       \caption{\Heterogeneitygap $\Delta R \in \R_{0\leqslant x \leqslant 1}$ of the \textbf{discrete} matrix game with $N=11,M=2$. We report mean and standard deviation after 12 million training frames over 8 different random seeds. The results match the predictions of Table \ref{fig:deltaR-vs-softmax} (discrete rewards) precisely.}
    \label{tab:matrix_game_disc_11_2}
\begin{tabular}{llccc} \toprule & & \multicolumn{3}{c}{$T$} \\ & & {Min} & {Mean} & {Max} \\ \midrule \multirow{3}{*}{\hfill\rotatebox[origin=c]{0}{$U$}\hfill} & 
Min & $0.0 \pm  0.0$ & $\boldsymbol{0.454545 \approx 5/11} \pm 0.0$ & $\boldsymbol{1.0} \pm 0.0$ \\ & 
Mean & $-0.094 \pm 0.0$ & $0.0 \pm 0.0$ & $\boldsymbol{0.5}\pm 0.0$ \\ & 
Max & $-0.75 \pm 0.5$ & $0.0 \pm 0.0$ & $0.0 \pm 0.0$ \\ \bottomrule \end{tabular}
\end{table}

\section{\textsc{\capturetheflag}}
\label{app:ctf}

In {\capturetheflag}, agents need to navigate to goals. Each agent observes the relative position to the goals, and agent actions are continuous 2D forces that determine their direction of motion. 
The entries $r^t_{ij}$ of matrix $\mathbf{A}^t$ at time $t$ represent the local reward of agent $i$ towards goal $j$,  computed as $r^t_{ij} =\left (1-d^t_{ij}/\sum_{j=1}^M d^t_{ij}\right )/(M-1),$
where $d^t_{ij}$ is the distance between agent $i$ and goal $j$. This makes it so that $ \sum_{j=1}^M r^t_{ij} = 1$ and $r^t_{ij}\geqslant0$. At each step, the agents receive the global reward $R(\mathbf{A}^t)$, with aggregators $U,T\in\{\min,\mathrm{mean},\max\}$.

Our results, shown in \autoref{fig:ctf_gap} and \autoref{tab:ctf_gap_tab}, show that our curvature theory reliably predicts when there is a heterogeneity gain (\autoref{fig:deltaR-vs-softmax}): it is positive \emph{only} for the concave–convex pairs $U=\min,\,T=\max$ and $U=\mathrm{mean},\,T=\max$.  The heterogeneity gain is smaller in the latter case because learning dynamics matter: with $U=\min,\,T=\max$ the best homogeneous policy is unique (every agent must steer to the midpoint between the two goals) so homogeneous learners seldom find it, leaving room for heterogeneous policies to excel (see \autoref{app:ctf}). By contrast, $U=\mathrm{mean},\,T=\max$ admits a continuum of good homogeneous policies, which homogeneous teams execute more easily.  For $U=\max,\,T=\min$ and $U=\max,\,T=\mathrm{mean}$, the theoretical $\Delta R$ is $0$, yet the empirical  heterogeneity gap is negative (\autoref{fig:ctf_gap}). This occurs because the reward peaks only when all agents coordinate on the same goal. Neurally heterogeneous teams learn this uniform behavior slower than homogeneous teams, so they underperform within the fixed training budget. Additional training would  close this gap to $\Delta R = 0$.  

\begin{table}[!h]
    \centering
     \caption{\Heterogeneitygap at the end of training for the  \capturetheflag experiments in \autoref{fig:ctf_gap}.  We report mean and standard deviation after 30 million training frames over 9 different random seeds.}
    \label{tab:ctf_gap_tab}
\begin{tabular}{llccc} 
\toprule 
& & \multicolumn{3}{c}{$T$} \\ 
& & {Min} & {Mean} & {Max} \\ 
\midrule 
\multirow{3}{*}{\hfill\rotatebox[origin=c]{0}{$U$}\hfill} 
& Min  & $0.0 \pm 0.02$ & $0.01 \pm 0.02$ & $\boldsymbol{0.21} \pm 0.03$ \\
& Mean & $-0.03 \pm 0.01$ & $0.0 \pm 0.0$ & $\boldsymbol{0.12} \pm 0.09$ \\
& Max  & $-0.08 \pm 0.06$ & $-0.07 \pm 0.08$ & $0.0 \pm 0.0$ \\
\bottomrule 
\end{tabular}
\end{table}

In \autoref{fig:ctf_traj} we juxtapose two representative  $N = M = 2$ roll-outs of the \textsc{\capturetheflag} environment for \emph{homogeneous} teams (top row) and \emph{heterogeneous} teams (bottom row) when~$U~=~\min,\,T=\max$. Consistent with the discussion in \autoref{sec:experiments}, homogeneous agents steer to the geometric midpoint between the two goals, producing almost overlapping paths--this is suboptimal, as they cannot cover both goals. On the other hand,  heterogeneous agents exaggerate their differences, taking sharply diverging trajectories and ensuring one goal each.

\begin{figure}[!htb]
  \centering
  \begin{subfigure}[t]{0.35\textwidth}
    \centering
    \includegraphics[width=\linewidth]{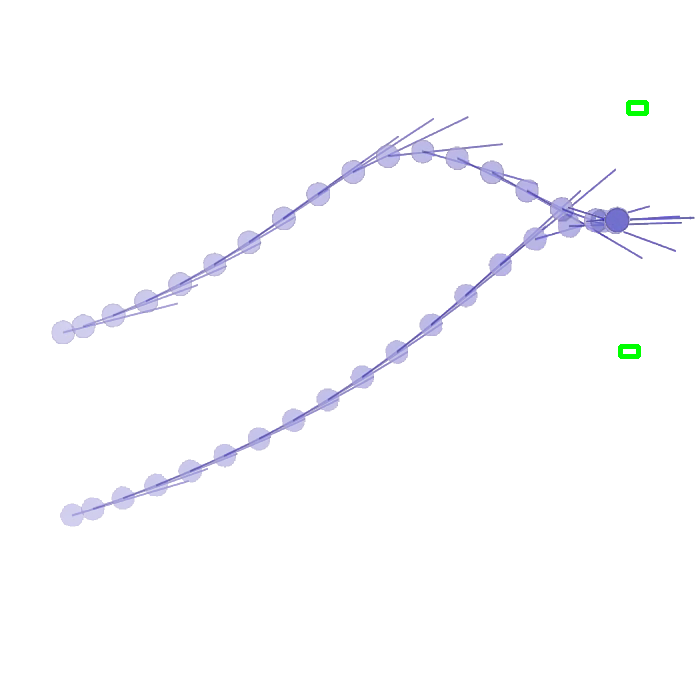}
    \caption{Homogeneous run 1}
  \end{subfigure}
  \begin{subfigure}[t]{0.35\textwidth}
    \centering
    \includegraphics[width=\linewidth]{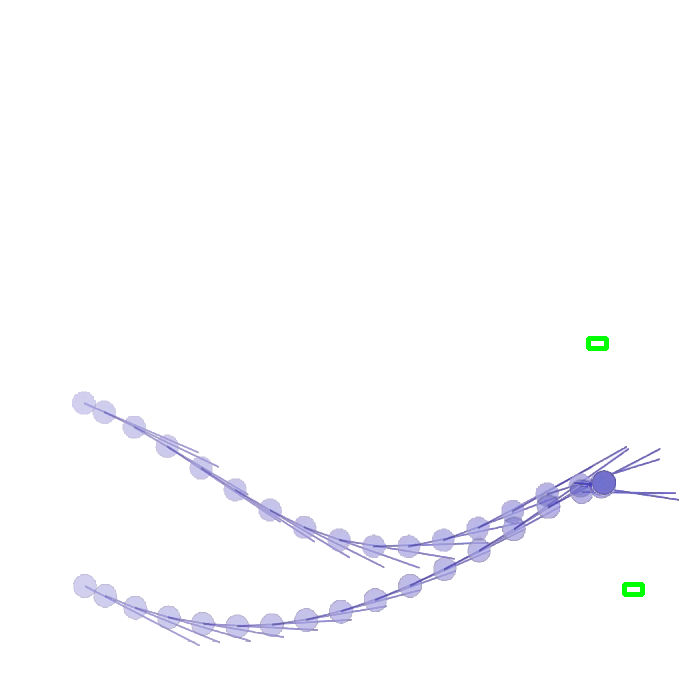}
    \caption{Homogeneous run 2}
  \end{subfigure}\\
  \begin{subfigure}[t]{0.35\textwidth}
    \centering
    \includegraphics[width=\linewidth]{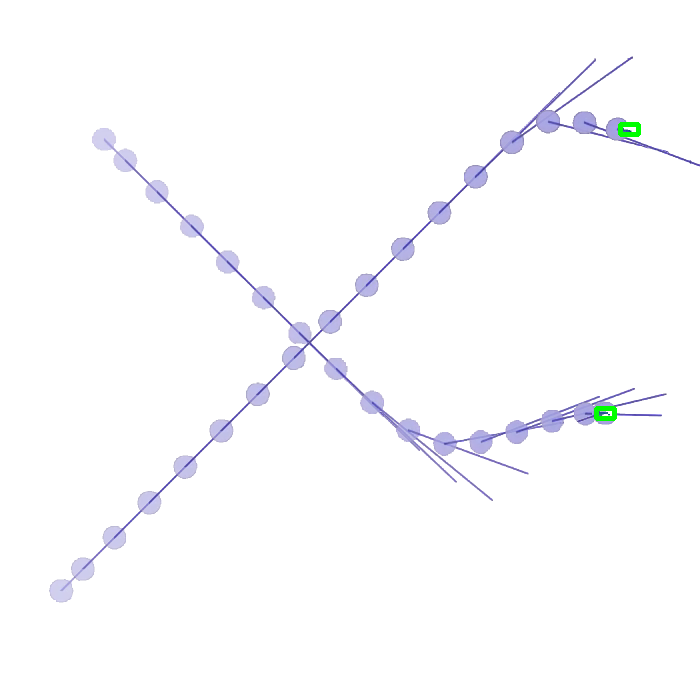}
    \caption{Heterogeneous run 1}
  \end{subfigure}
  \begin{subfigure}[t]{0.35\textwidth}
    \centering
    \includegraphics[width=\linewidth]{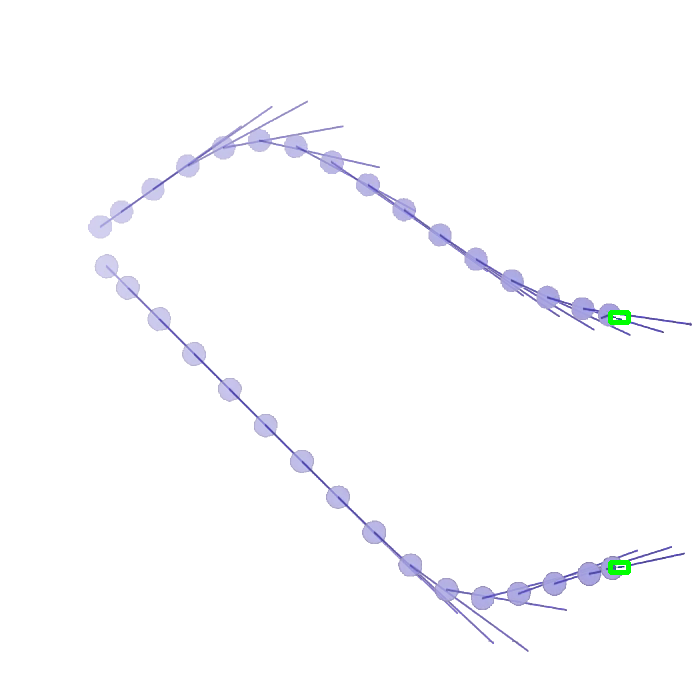}
    \caption{Heterogeneous run 2}
  \end{subfigure}\hfill
  \caption{\textbf{Behaviour under the concave–convex aggregator $U=\min,\,T=\max$.}
  Each dot is an agent position; line segments indicate instantaneous velocity;
  green squares mark goal locations.
  Homogeneous policies collapse to a single ``mid-point'' route, while heterogeneous
  policies split and follow distinct paths to cover both goals. Note how the heterogeneous agents \textit{exaggerate} the difference in their trajectories, rather than head directly to the goal: this is an outcome of the reward structure, which encourages maximal diversity.}
  \label{fig:ctf_traj}
\end{figure}



\section{2v2 Tag Experiments}
\label{app:tag}

\begin{figure}[ht]
    \centering
    \setlength{\tabcolsep}{2pt}
    \renewcommand{\arraystretch}{0}

    \begin{tabular}{ccccc}
        \begin{adjustbox}{valign=t}
        \includegraphics[width=0.22\textwidth]{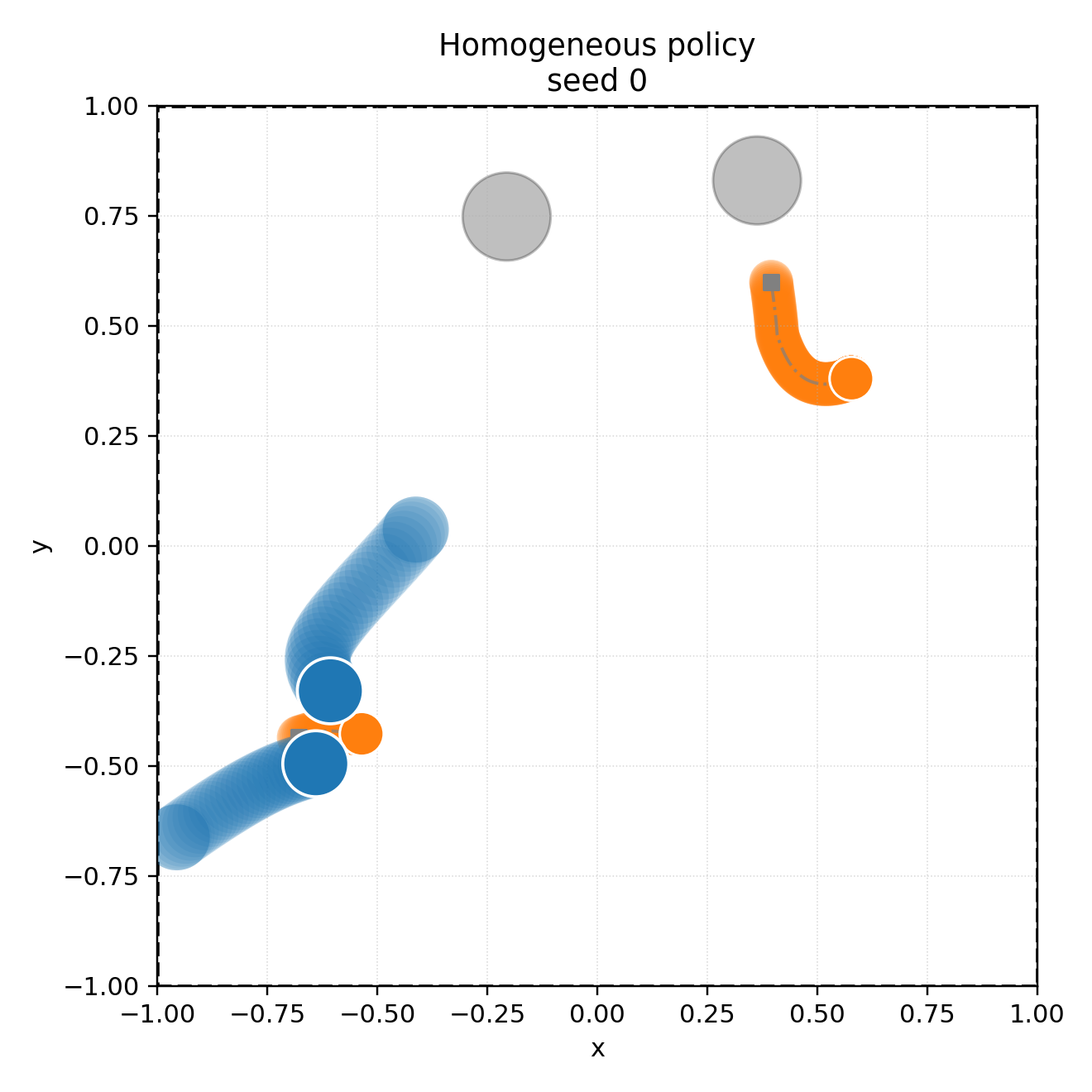}
        \end{adjustbox} &
        \begin{adjustbox}{valign=t}
        \includegraphics[width=0.22\textwidth]{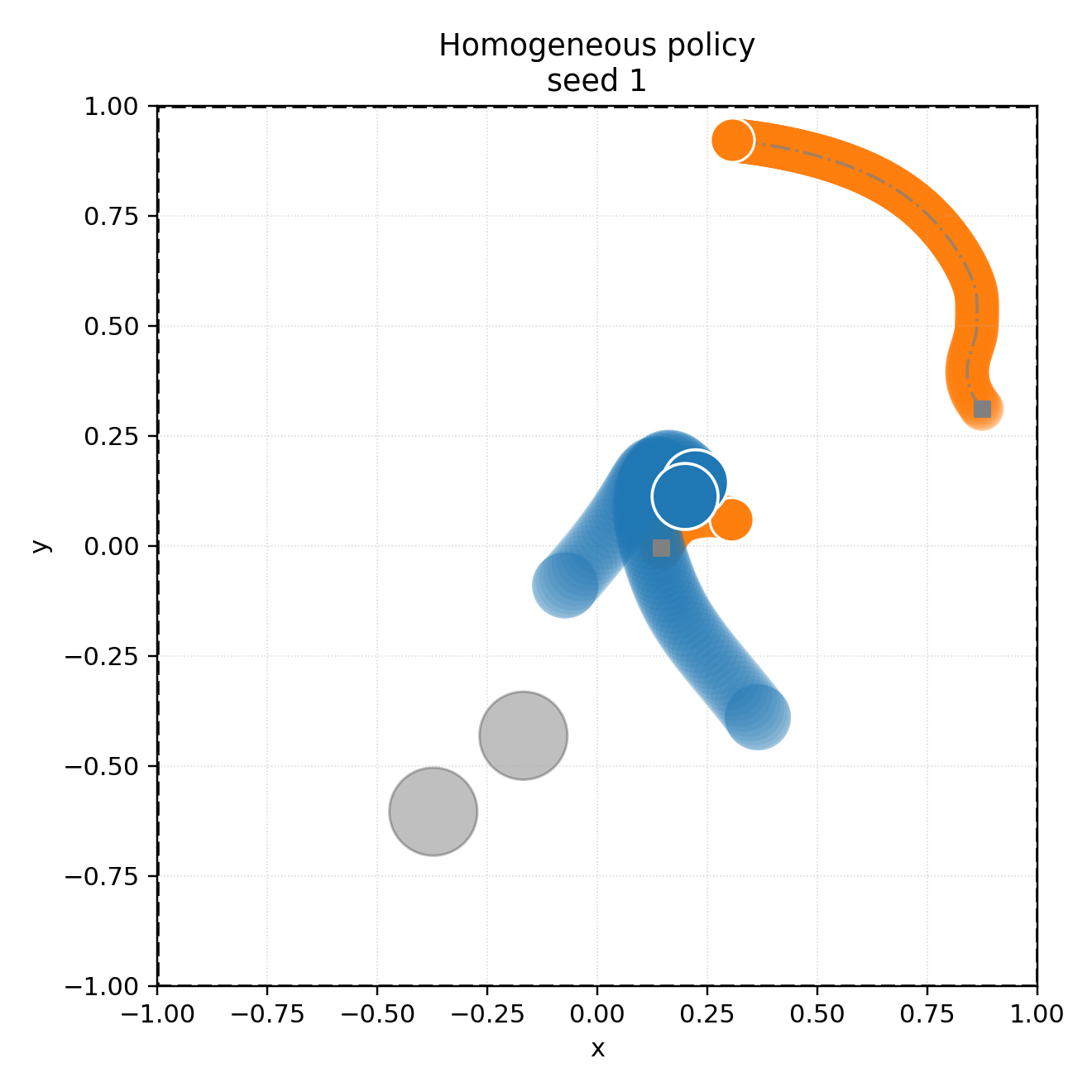}
        \end{adjustbox} &
        \begin{adjustbox}{valign=t}
        \includegraphics[width=0.22\textwidth]{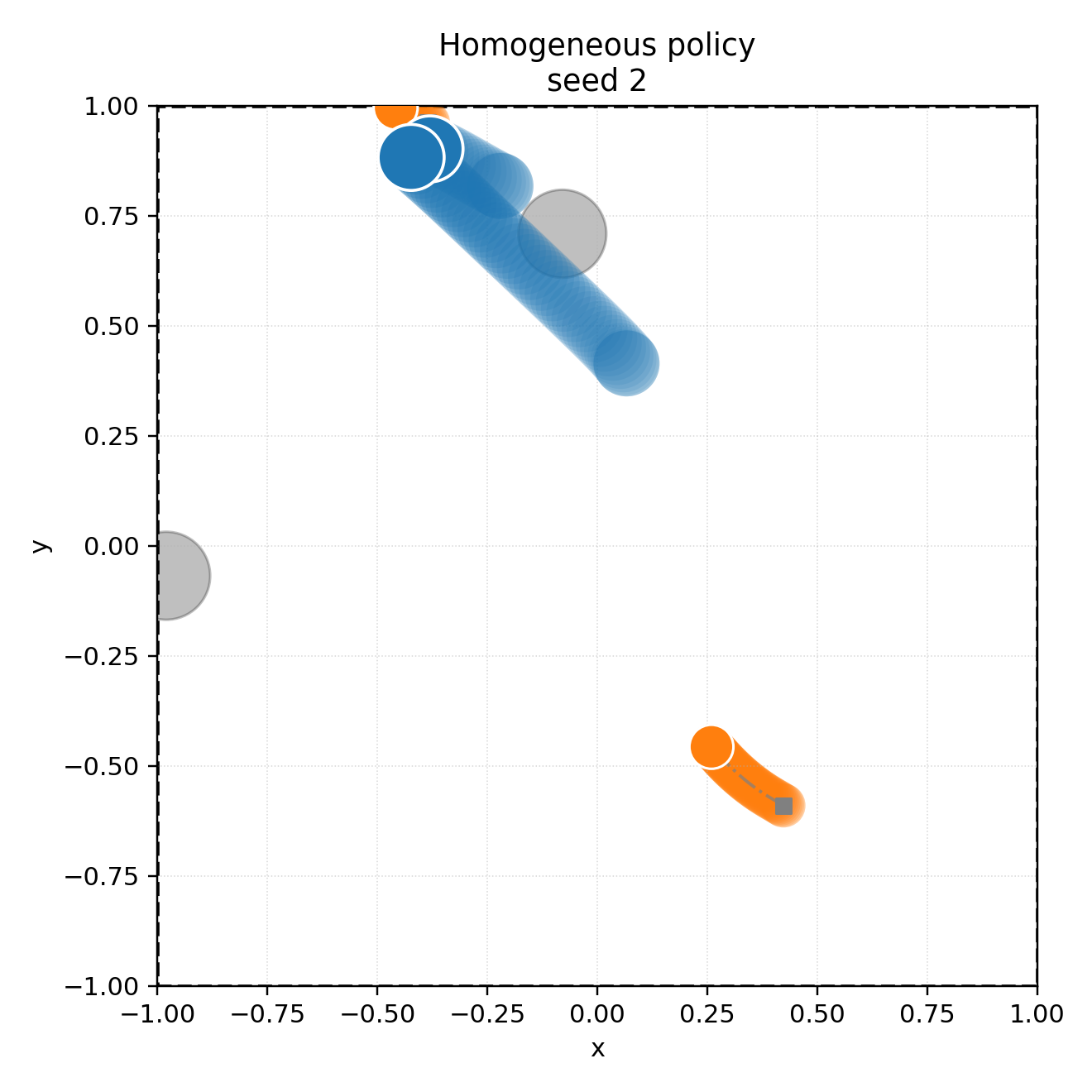}
        \end{adjustbox} &
        \begin{adjustbox}{valign=t}
        \includegraphics[width=0.22\textwidth]{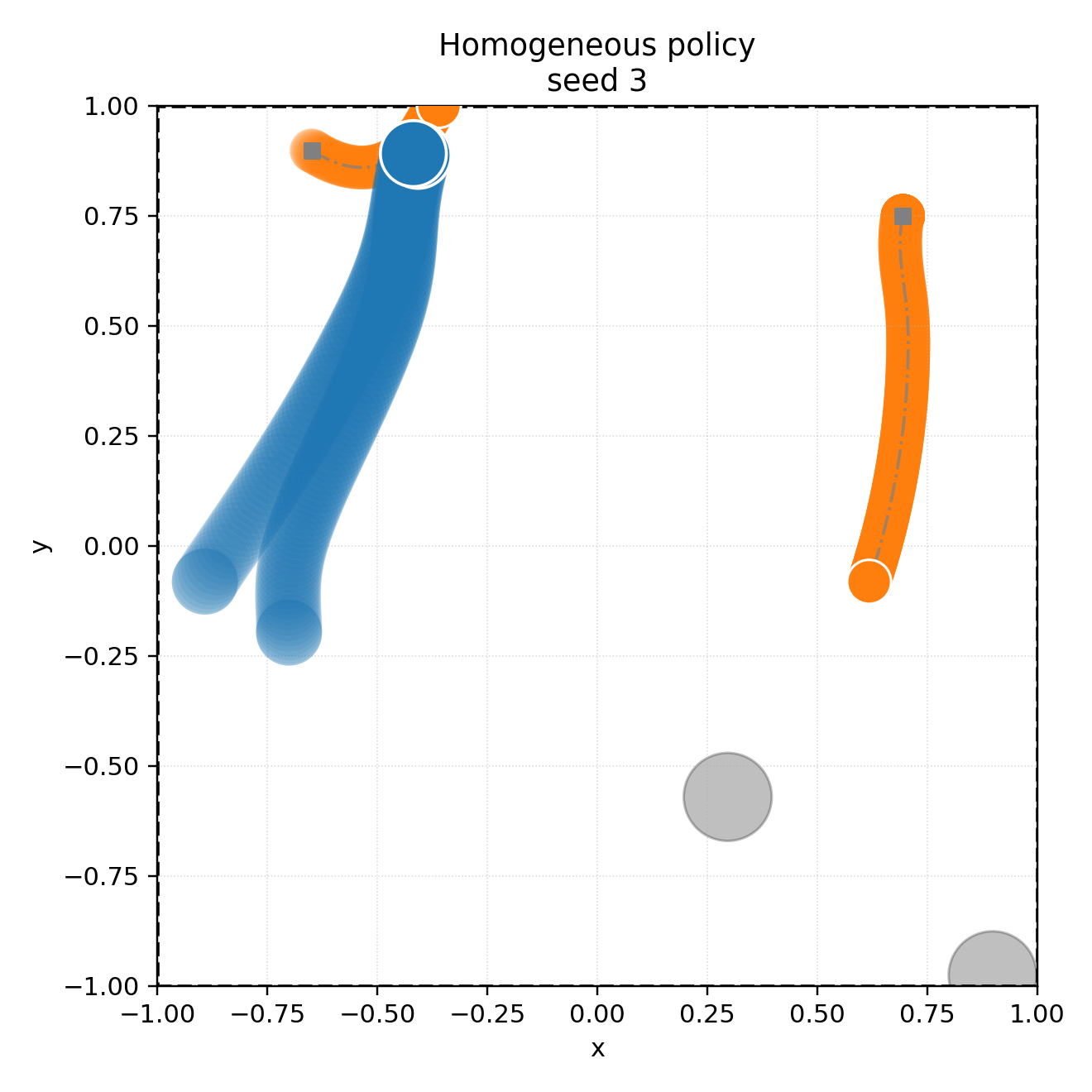}
        \end{adjustbox} &
        \begin{adjustbox}{valign=t}
        \end{adjustbox} \\

        \begin{adjustbox}{valign=t}
        \includegraphics[width=0.22\textwidth]{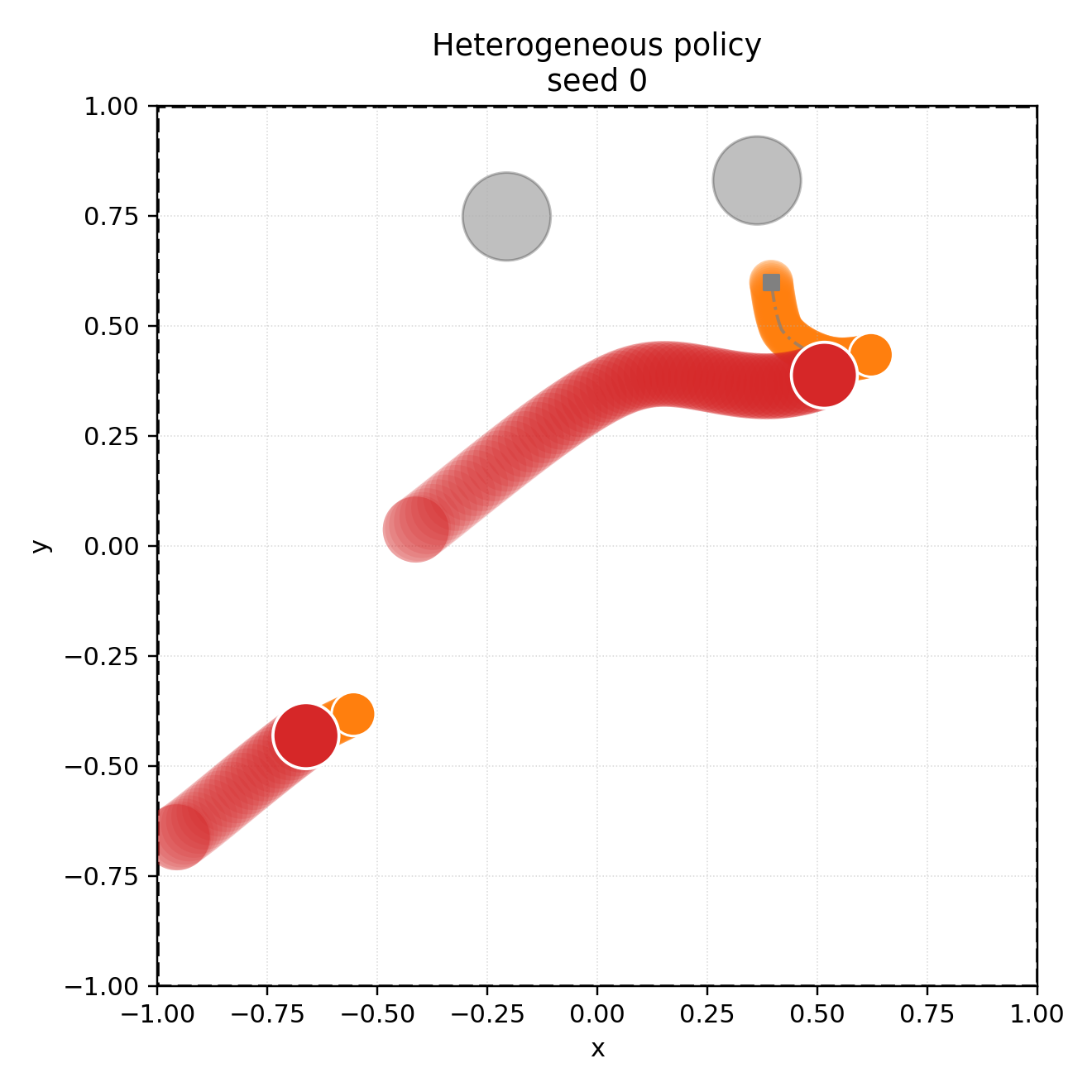}
        \end{adjustbox} &
        \begin{adjustbox}{valign=t}
        \includegraphics[width=0.22\textwidth]{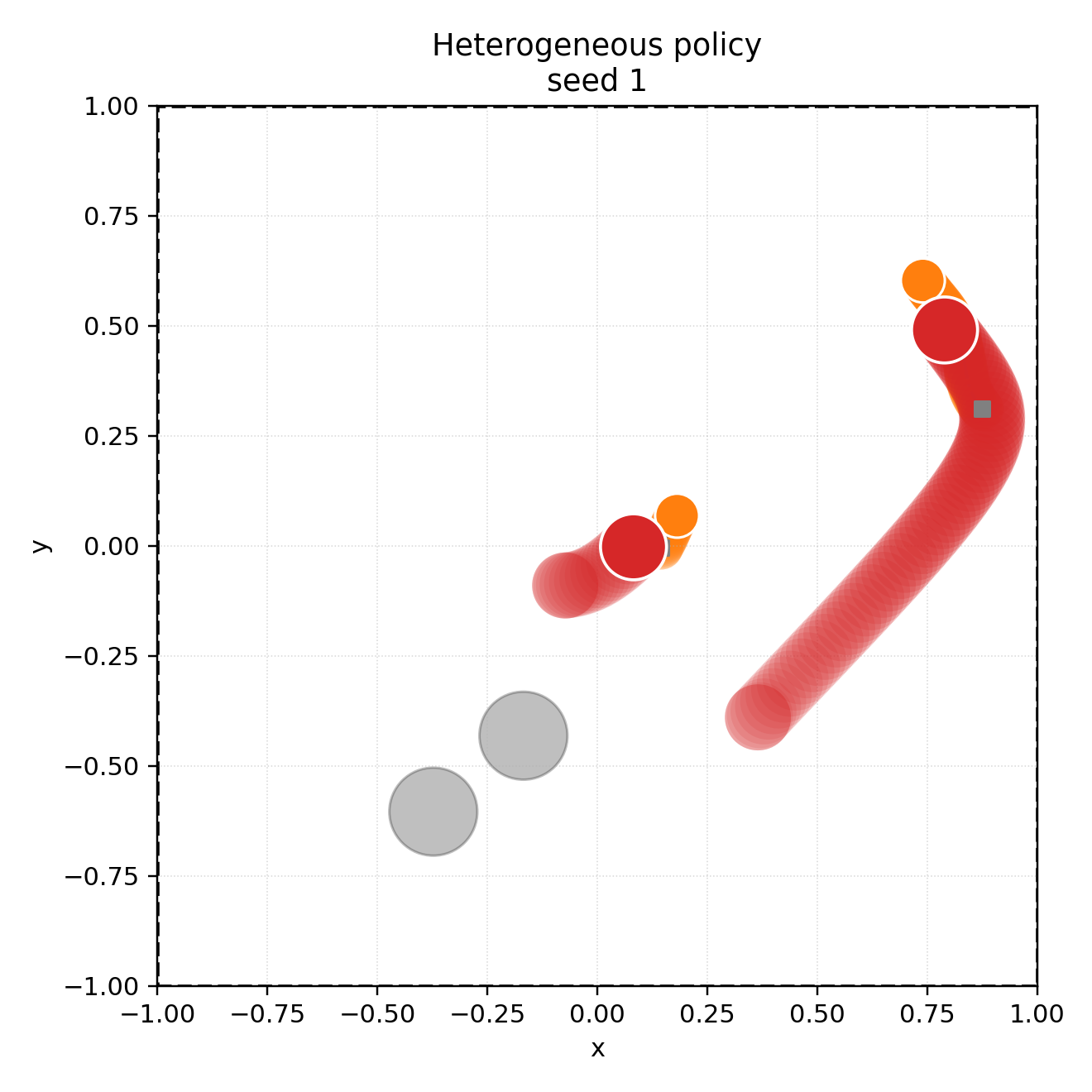}
        \end{adjustbox} &
        \begin{adjustbox}{valign=t}
        \includegraphics[width=0.22\textwidth]{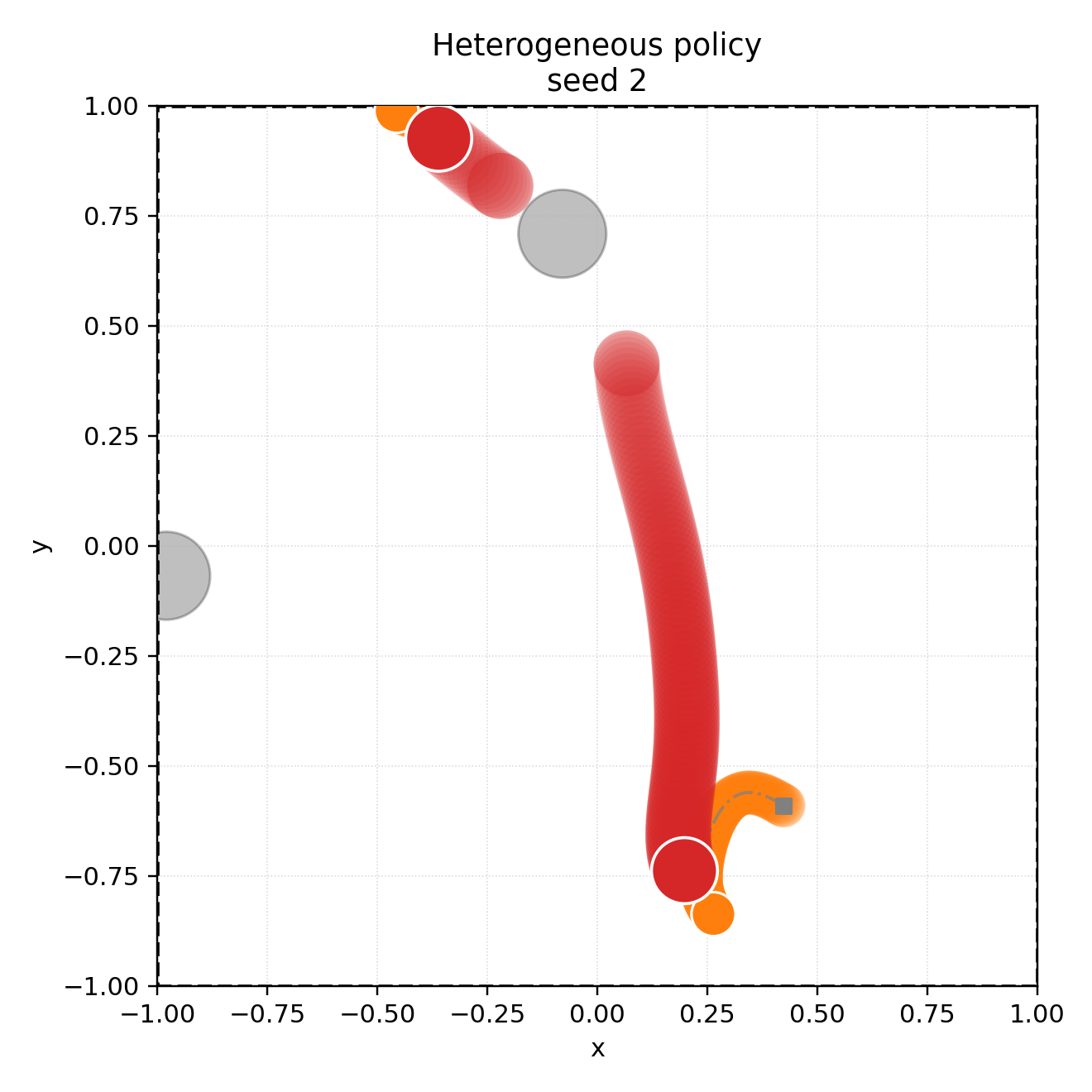}
        \end{adjustbox} &
        \begin{adjustbox}{valign=t}
        \includegraphics[width=0.22\textwidth]{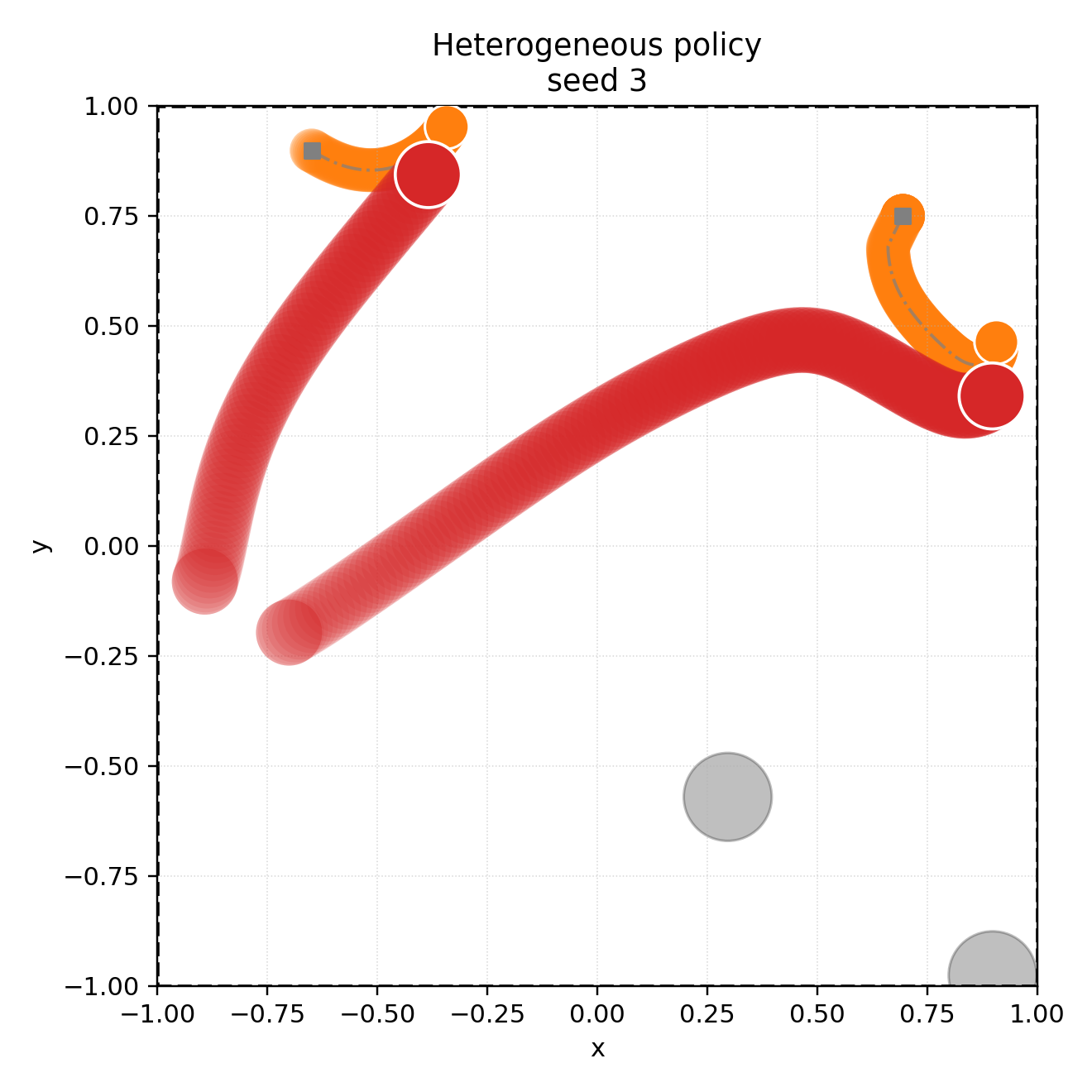}
        \end{adjustbox} &
        \begin{adjustbox}{valign=t}
        \end{adjustbox}
    \end{tabular}

    \caption{Comparison of homogeneous (top row) and heterogeneous (bottom row) 2v2 tag policies for chaser agents, trained with the reward structure $U=\min, T=\max$ across different initializations. Every column shows the trajectory of the homogeneous (top) and heterogeneous (bottom) policies. (Note that trajectories here are smoothened; agents don't go over obstacles in actual execution). The heterogeneous policies prioritize capturing both agents, whereas the homogeneous policies focus on just one. In the $U=\min, T=\max$ setting, this gives heterogeneous agents greater reward, hence $\Delta R > 0$. Please find more visualizations on \href{\websiteurltag}{our website}.}
    \label{fig:tag_hom_vs_het}
\end{figure}

The goal of our tag experiment is to showcase that our theoretical results, which predict the value of $\Delta R$ based on the curvature of the aggregators, hold for discrete, sparse rewards. Specifically, our results for discrete efforts in \autoref{fig:deltaR-vs-softmax} predict that only $(U,T) = (\min,\max), (\min, \text{mean}), (\text{mean}, \max)$ will have positive heterogeneity gain, with $(\min,\max)$ maximizing the gain. We show in \autoref{fig:tag_gains} and \autoref{tab:tag_gain_tab} that this holds in 2v2 tag, despite the fact that this is a challenging, embodied, long-horizon, whereas our formal results are for instantaneous allocation games\footnote{The gain for $(\text{mean},\max)$ is small compared to the other two aggregator combinations, but still positive at $\Delta R \approx 0.37$. This is also significantly higher than aggregator combinations for which we predict $\Delta R$ is not positive, the largest of which attained $\Delta R < 0.01$.}. Note that this is a highly interpretable result: a $(\min,\max)$ means that agents are only awarded when \textit{both} escapers are caught, incentivizing heterogeneous strategies where chaser agents split their behaviour so that the chasing efforts $r^t_{ij}$ are equally distributed between both escapers. \autoref{fig:tag_hom_vs_het} visualizes the trajectories learned by agents trained under this $(\min,\max)$  reward structure, showing distinct emergent pursuit strategies emerging depending on whether the agents are neurally heterogeneous or neurally homogeneous. 

To test the robustness of our predictions to greater number of agents, we also ran an experiment with $11$ agents: $8$ chasers and $3$ escapers. We trained the agents over $500$ episodes of length $1000$ each (this episode length is more than twice as long as our other experiments, indicating that our predictions are stable over longer horizons). Note that we still collect the same amount of total frames (30M) as we reduce the number of environments sampled in parallel.

Due to the high computational cost associated with these experiments, we selected $2$ aggregator combinations for which Table \ref{fig:deltaR-vs-softmax} predicts a positive $\Delta R$: $(U,T)=(\textrm{min},\textrm{mean})$ and $(U,T)=(\textrm{min},\textrm{max})$, and we further selected three ``control'' combinations for which we expect $\Delta R \leq 0$. The results, shown in Figure \ref{fig:tag8v3} (discrete rewards), illustrate that our predictions still hold in this case. The final $\Delta R$ values are $(U=\text{min}, T=\text{mean})=1.243 \pm
   0.615$, $(U=\text{min}, T=\text{max})=0.112 \pm 0.084$,
  $(U=\text{mean}, T=\text{mean})=-0.196 \pm 0.053$, $(U=\text{mean},
  T=\text{max})=-0.990 \pm 1.025$, and $(U=\text{max},
  T=\text{max})=-3.709 \pm 2.491$.

It is important to note that, while our theoretical predictions regarding when $\Delta R > 0$ hold for any number of agents $N$, they specifically tell us what happens when agents allocate their efforts optimally. The empirical heterogeneity gain $\Delta R$ crucially depends on the quality of strategy agents learn in practice, which in turn also depends on the transition dynamics of the environment, complicating things. When the number of agents or complexity of the task is very large, we may eventually witness a divergence between the empirical heterogeneity gain and the theoretical predictions, for this reason. This does not indicate a problem with our theoretical predictions. Rather, it is a limitation of learning-based methods; using better methods in environments which enable optimal effort allocation will lead to empirical results that more closely mirror our predictions.

\begin{figure}[ht]
    \centering
    \includegraphics[width=0.7\textwidth]{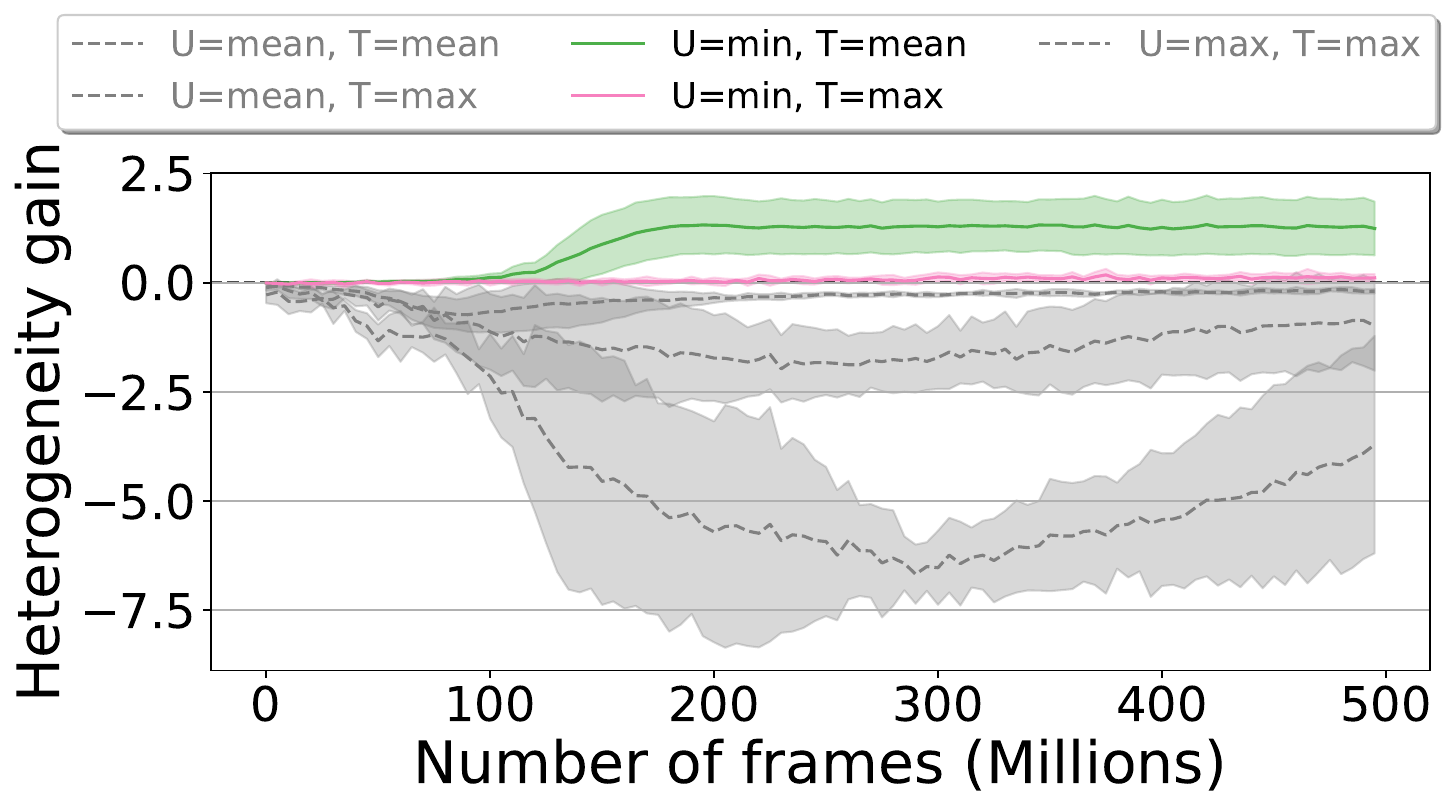}
    \caption{Heterogeneity gain for 8v3 Tag throughout training. We report mean and standard deviation for 30 million training frames over 8 random seeds. Positive aggregator combinations are colored and follow the predictions of Table \ref{fig:deltaR-vs-softmax} for discrete rewards. Final gain values are reported in the text of Appendix \ref{app:tag}.}
    \label{fig:tag8v3}
\end{figure}


\begin{table}[htb]
    \centering
    \caption{\Heterogeneitygap at the end of training for the Tag experiments in \autoref{fig:tag_gains}. We report mean and standard deviation after 30 million training frames over 9 different random seeds.}
    \label{tab:tag_gain_tab}
\begin{tabular}{llccc}
\toprule
& & \multicolumn{3}{c}{$T$} \\
& & {Min} & {Mean} & {Max} \\
\midrule
\multirow{3}{*}{\hfill\rotatebox[origin=c]{0}{$U$}\hfill}
& Min  & $0.0 \pm 0.0$ & $\boldsymbol{1.68} \pm 0.24$ & $\boldsymbol{3.47} \pm 0.23$ \\
& Mean & $-0.03 \pm 0.04$ & $-0.02 \pm 0.06$ & $\boldsymbol{0.36} \pm 1.44$ \\
& Max  & $-0.02 \pm 0.09$ & $-0.11 \pm 0.18$ & $-0.30 \pm 0.15$ \\
\bottomrule
\end{tabular}
\end{table}
\section{Football Experiments}
\label{app:football}

\begin{table}[t]
\centering
\caption{Football heterogeneity gains across different reward formulations. Results obtained after 500 training iterations of 240k frames each (6 seeds). Opponent speed annealed from 0\% to 100\%.}
\label{tab:reward_football}
\begin{tabular}{@{}lccp{0.35\textwidth}@{}}
\toprule
Reward & $\Delta R$ & Theory $\Delta R>0$? & Reward meaning \\ \midrule
$U=\min,\; T=\max$ & $\boldsymbol{1.76} \pm 0.72$  & Yes & One agent should attend the ball, the other the opponent; reward capped by the less-covered task. \\
$U=\text{mean},\; T=\max$ & $\boldsymbol{1.18} \pm 0.11$ & Yes & Similar to $(\min,\max)$, but reward is dictated by average task performance. \\
$U=\text{mean},\; T=\text{mean}$ & $0.01 \pm 0.07$ & No & Agents should attend both the opponent and the ball. \\
$U=\min,\; T=\min$ & $-0.08 \pm 0.73$ & No & At least one agent should attend at least the opponent or the ball. \\ \bottomrule
\end{tabular}
\end{table}

In some environments, the reward structure might not entirely follow the double-generalized-aggregator structure we study in this work, but at least some part of the reward function might obey this structure. In our study of the VMAS football scenario \citep{bettini2022vmas}, we ask what happens when this is the case.

Football is a complex, embodied, long-horizon scenario that  requires the agents to learn low-level dribbling skills as well as high-level strategy purely from a shared cooperative reward. The VMAS scenario uses reward shaping to enable agents to learn such behaviors. We \textit{add} a reward structure $U(T(r_{11}^t,r_{21}^t),T(r_{12}^t,r_{22}^t))$ on top of this and ask how this affects heterogeneity. 

In our experimental scenario, two learning agents spawn at midfield. A ball is located between them and the goal to the right; a heuristic defender spawns to their left and chases the ball. Agents receive a global reward that increases when the ball moves toward the goal and the defender stays away from it. Additionally, we reuse the reward structure from our Multi‑Goal‑Capture to define rewards for two tasks: tackling the ball, and tackling the opponent.

The effort at time $t$ is:
$$r_{ij}^t=(1-\frac{d_{ij}^t}{\sum_j d_{ij}^t})/d_{ij}^t,$$
where $d_{ij}$ is distance of agent $i$ to ball or opponent). The global reward given to all agents is then computed as: $$R^t=U(T(r_{11}^t,r_{21}^t),T(r_{12}^t,r_{22}^t))+\beta[(d_{ball,goal}^{t-1}-d_{ball,goal}^t)-(d_{opp,ball}^{t-1}-d_{opp,ball}^t)],$$
where $\beta$ weighs the global football reward. 

Since this reward structure does not follow our theory entirely, we ask whether, when $U,T$ are, respectively, strictly Schur-concave and strictly Schur-convex, we should expect $\Delta R > 0$ as in our other scenarios. We test this for $U=\min, T=\max$ and $U=\text{mean},T=\max$. To control for the possibility that football is heterogeneous ``by default'', we also test the aggregator combinations $U=\text{mean}, T=\text{mean}$ and $U=\min,T=\min$ as controls. 

We report heterogeneity gains after training homogeneous and heterogeneous policies in \autoref{tab:reward_football}.
This shows that our curvature test predicts the heterogeneity gain of different reward structures, despite only being a component in the overall reward structure. This insight is important, as it may indicative our theoretical insights (the curvature test) may extend beyond environments that strictly follow our task allocation setting. However, this is just one possible scenario, and as such, this possibility requires further verification in future work.

The resulting policies are reported in \autoref{fig:football} with videos \href{\websiteurlfootball}{here}.

\begin{figure}
    \centering
    \includegraphics[width=1.0\linewidth]{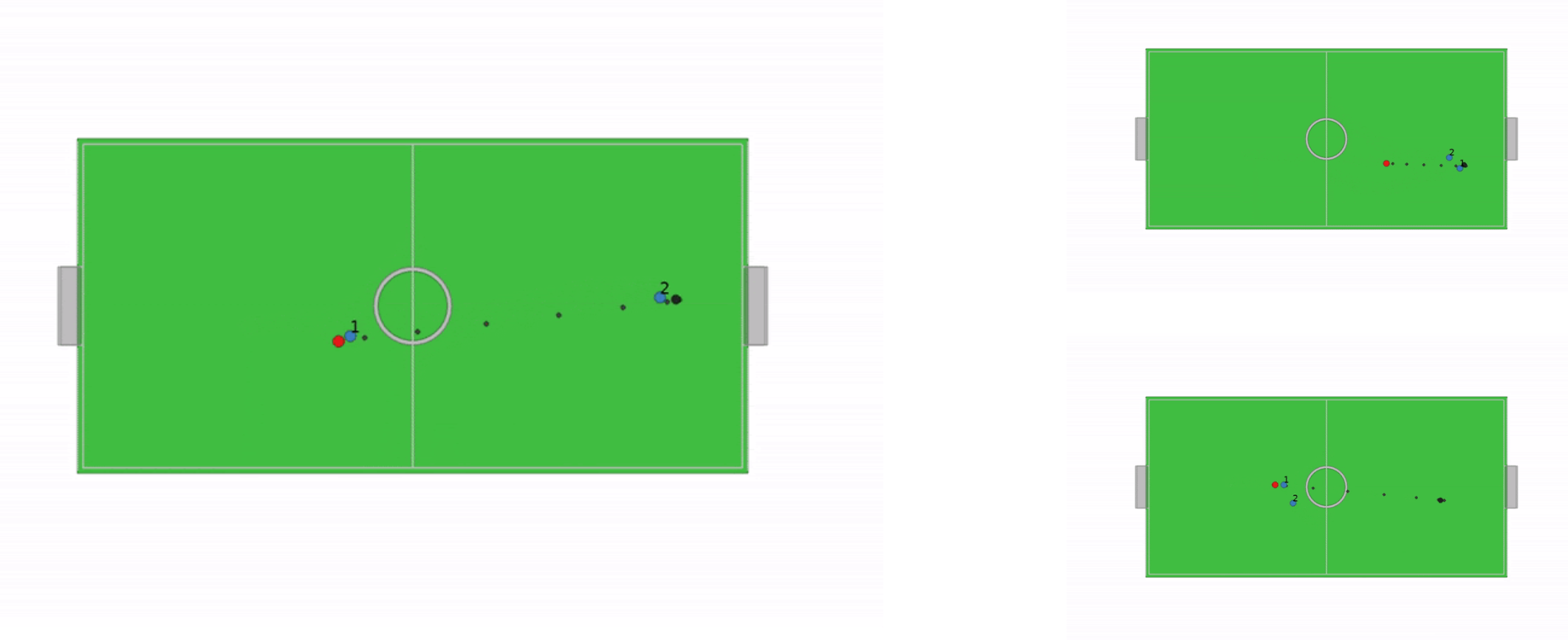}
    \caption{\textbf{Left:} Results of training a heterogeneous policy on VMAS Football where agents are trained with $U=\min, T=\max$ aggregators. Our learning agents are drawn in blue; the heuristic opponent in red; and the ball in black. The learning agents split their efforts, tackling both the ball and the opponent. \textbf{Right:} Results of training a homogeneous policy. Agents are unable to split their efforts, so either they both tackle the ball, or both tackle the opponent. This results in lower reward, hence $\Delta R > 0$. These policies are visualized on \href{\websiteurlfootball}{our website}.}
    \label{fig:football}
\end{figure}

\section{Observability-Heterogeneity Trade-Off}
\label{app:observehettradeoff}


\begin{figure}[t]
  \centering
   \centering
     \includegraphics[width=0.6\textwidth]{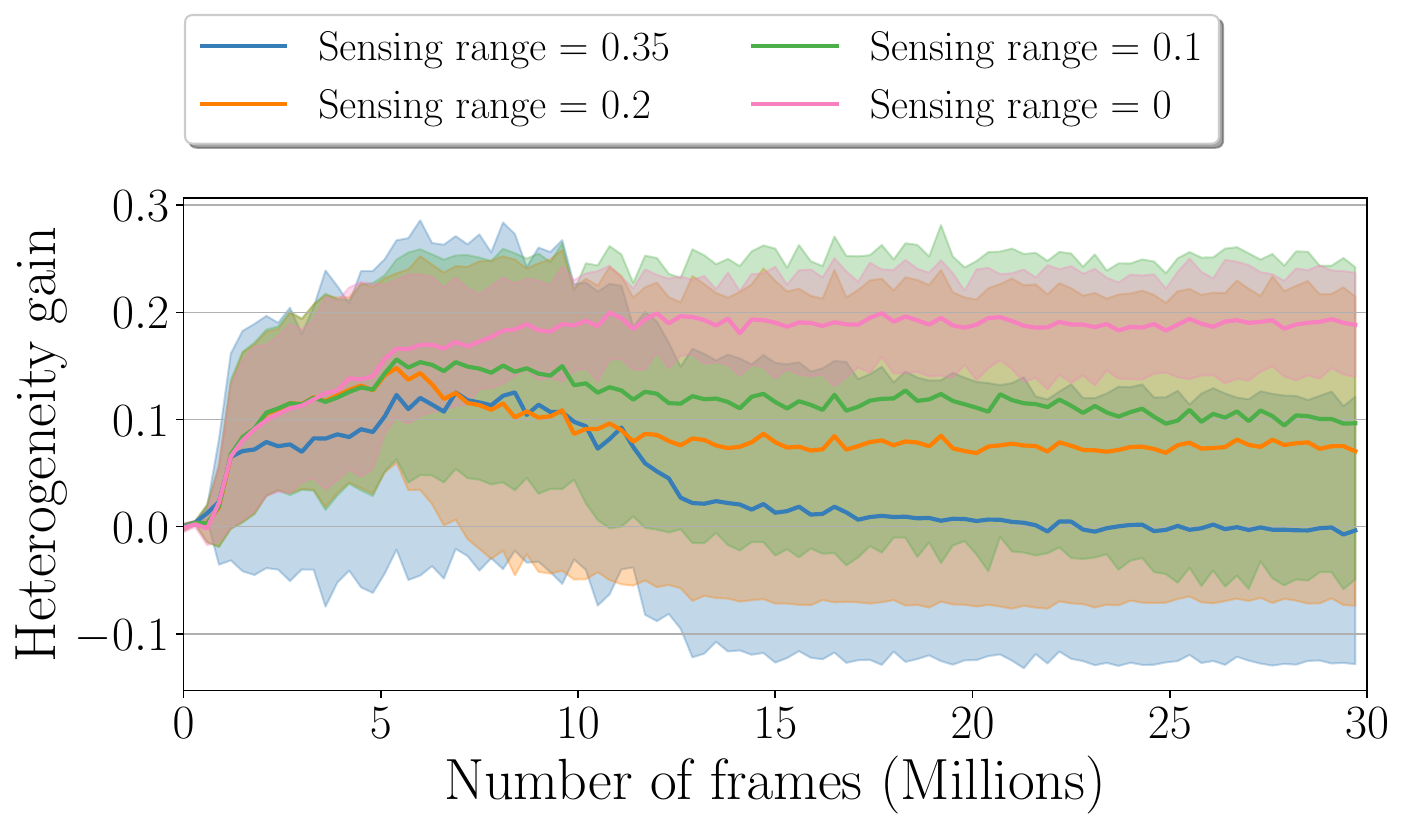} 
    \caption{Gain w.r.t. observability when $U=\min,T=\max$.
    }  
    \label{fig:ctf_gap_lidar} 
  \caption{\Heterogeneitygap for {\capturetheflag} throughout training when the agents' observation range is gradually increased from $0$ to $0.35$ over 4 random seeds (4 random seeds suffice as this phenomenon is established in the literature \citep{bettini2023hetgppo}, and we only wish to show its emergence in the context of our work.)}
\end{figure}

In this Appendix, we crystallize the relationship between environment observability and empirical \heterogeneitygaps. It is well known that neurally homogeneous agents (i.e., sharing the same parameters) can achieve behavioral heterogeneity by conditioning their actions on diverse input contexts (behavioral typing). 
This can be achieved by naively appending the agent index to its observation~\citep{gupta2017cooperative} or by providing relevant observations that allows the agents to infer their role~\citep{bettini2023hetgppo}.
Behavioral typing is impossible in matrix games, as these games are observationless.
However, it is possible in more complex games, such as our {\capturetheflag} scenario.
We augment agents in the positive gain scenario ($U=\min,T=\max$) with a range sensor, providing proximity readings for other agents within a radius.
In \autoref{fig:ctf_gap_lidar}, we show that the \heterogeneitygap decreases as the agent visibility increases (higher sensing radius).
This is because, with a higher range, homogeneous agents can sense each other and coordinate to pursue different goals.
This result highlights the tight interdependence between the \heterogeneitygap and agents' observations. 

\section{Parametrized Dec-POMDP}
\label{app:pdecpomdp}
A Parametrized Decentralized Partially Observable Markov Decision Process (PDec-POMDP) is defined as a tuple
$$\left \langle \mathcal{N}, \mathcal{S}, \left \{ \mathcal{O}_i \right \}_{i \in \mathcal{N}}, \left \{ \sigma_i^\theta \right \}_{i \in \mathcal{N}},  \left \{ \mathcal{A}_i \right \}_{i \in \mathcal{N}}, \mathcal{R}^\theta , \mathcal{T}^\theta, \gamma, s_0^\theta \right \rangle_\theta,$$
where $\mathcal{N} = \{1,\ldots, n\}$ denotes the set of agents,
$\mathcal{S}$ is the state space, and,
$\left \{ \mathcal{O}_i \right \}_{i \in \mathcal{N}}$ and
$\left \{ \mathcal{A}_i \right \}_{i \in \mathcal{N}}$
are the observation and action spaces, with $\mathcal{O}_i \subseteq \mathcal{S}, \; \forall i \in \mathcal{N}$. 
Further, $\left \{ \sigma_i^\theta \right \}_{i \in \mathcal{N}}$ 
and
$\mathcal{R}^\theta $
are the agent observation and reward functions, such that
$\sigma_i^\theta : \mathcal{S} \mapsto \mathcal{O}_i$, and,
$\mathcal{R}^\theta: \mathcal{S} \times \left \{ \mathcal{A}_i \right \}_{i \in \mathcal{N}} \mapsto \R$.
$\mathcal{T}^\theta$ is the stochastic state transition model, defined as $\mathcal{T}^\theta : \mathcal{S} \times \left \{ \mathcal{A}_i \right \}_{i \in \mathcal{N}}   \mapsto  \Delta\mathcal{S}$, which outputs the probability $\mathcal{T}^\theta(s^t, \left \{ a^t_i \right \}_{i \in \mathcal{N}},s^{t+1})$ of transitioning to state $s^{t+1} \in \mathcal{S}$ given the current state $s^t \in \mathcal{S}$ and actions $\left \{ a^t_i \right \}_{i \in \mathcal{N}}$, with $a^t_i \in \mathcal{A}_i$. $\gamma$ is the discount factor.
Finally, $s_0^\theta \in \mathcal{S}$ is a the initial environment state.
A PDec-POMDP represents a set of traditional Dec-POMDPs~\citep{oliehoek2016concise}, where the observation function, the transition function, the reward function, and the initial state are conditioned on parameters $\theta$. This formalism is similar to the concepts of Underspecified POMDP~\citep{dennis2020emergent} and contextual MDP~\citep{modi2018markov}.

Agents are equipped with (possibly stochastic) policies $\pi_{i}(a_i|o_i)$, which compute an action given a local observation. 
Their objective is to maximize the discounted return:
$$
G^\theta(\mathbf{\pi}) = \mathbb{E}_\mathbf{\pi}\left [\sum_{t=0}^T \gamma^t\mathcal{R}^\theta\left (s^t,\mathbf{a}^t \right ) \middle \vert s^{t+1} \sim \mathcal{T}^\theta(s^t,\mathbf{a}^t), a_i^t \sim \pi_i(o_i^t), o_i^t = \sigma^\theta_i(s_t)  \right ],
$$
where $\mathbf{\pi}, \mathbf{a}$ are the vectors of all agents' policies and actions. 
$G^\theta(\mathbf{\pi})$ represents the expected sum of discounted rewards starting in state $s_0^\theta$ and following policy $\mathbf{\pi}$ in a PDec-POMDP parametrized by $\theta$.

\section{Stability of Bilevel Optimization in HetGPS}

\label{app:stability_of_bilevel_optimization}

The Heterogeneity Gain Parameter Search (HetGPS) algorithm employs a bilevel optimization framework to simultaneously optimize environment parameters and agent policies. This appendix discusses the structure of this optimization problem, its convergence properties, practical stability, and alternatives for non-differentiable environments.

\subsection{HetGPS as a Stackelberg Game}
HetGPS can be formalized as a Stackelberg game, a hierarchical optimization problem involving a leader and followers~\citep{simaan1973stackelberg}. In our setting:
\begin{enumerate}
    \item The \textbf{Leader} is the environment designer (the outer loop of HetGPS), which aims to maximize the heterogeneity gain $HetGain^{\theta}$ by adjusting the environment parameters $\theta$.
    \item The \textbf{Followers} are the homogeneous and heterogeneous multi-agent teams (the inner loop), which aim to maximize their respective returns $G^{\theta}(\pi)$ by optimizing their policies $\pi_{het}$ and $\pi_{hom}$ within the environment defined by $\theta$.
\end{enumerate}
The leader's objective function (the heterogeneity gain) depends on the optimized policies of the followers, which, in turn, depend on the parameters $\theta$ set by the leader. Formally, the objective is:
\begin{equation}
    \max_{\theta} \left[ G^{\theta}(\pi^*_{\text{het}}(\theta)) - G^{\theta}(\pi^*_{\text{hom}}(\theta)) \right]
\end{equation}
where $\pi^*(\theta)$ represents the optimized policies for a given environment configuration $\theta$.

\subsection{Convergence and Stability}

Generally speaking, multi-agent reinforcement learning is a concurrent optimization process that faces non-stationarity as agents constantly adapt to one another's evolving policies~\citep{BasarOverview}. HetGPS extends this challenge as agents must also adapt to a changing environment. Consequently, formal convergence guarantees to a \textbf{global} optimum remain an open question with regards to HetGPS in particular, but also MARL algorithms in general. However, recent theoretical work in environment co-design has established conditions under which convergence of bilevel optimization processes similar to HetGPS to \textbf{local} optima can be guaranteed, such as requiring sufficient smoothness of the environment dynamics and policy updates~\citep{gao2024codesign}. 

Despite the theoretical complexities inherent in multi-agent learning and bilevel optimization, as shown in \autoref{sec:experiments} and \autoref{app:adverseinitializations}, HetGPS demonstrates strong empirical stability even under adversarial initializations. This stability is expected, as it mirrors the practical success observed in related co-design and automated curriculum learning literature~\citep{dennis2020emergent, gao2024codesign}. However, we emphasize that empirical stability is not a guarantee of convergence. Although we did not identify such cases ourselves, it is possible that in some scenarios, HetGPS will oscillate rather than converge. 

\subsection{Advantage of Differentiable Simulation}

In our experiments, HetGPS increased training time by roughly 25\% compared to training agents in an environment with a fixed reward structure. Hence, it is highly efficient and does not impose much overhead. A key strength contributing to the efficiency of HetGPS is its use of differentiable simulation (e.g., VMAS~\citep{bettini2022vmas}). By leveraging backpropagation through the entire rollout, HetGPS computes the exact gradient $\nabla_{\theta}HetGain^{\theta}$. This approach is  more  sample-efficient than alternative methods that treat the environment design as a separate RL problem (e.g., PAIRED~\citep{dennis2020emergent} or Designer-RL~\citep{gao2024codesign, amir2025recodereinforcementlearningbaseddynamic}). Such methods rely on high-variance policy gradient estimates for the outer loop and often struggle with exploration inefficiency~\citep{parker-holder2021that, jiang2021replayguided, xu2022accelerated}. By utilizing exact gradients, HetGPS mitigates these issues.

\subsection{Handling Non-Differentiable Environments}
A requirement for the implementation of HetGPS presented in Alg. 1 is access to a differentiable simulator. When the environment involves non-smooth physics or black-box components, direct backpropagation is infeasible.

In such cases, the environment optimization step (Line 6 of \autoref{alg:HetGPS}) can be replaced with the gradient-free methods mentioned above, such as PAIRED \citep{dennis2020emergent}, the bilevel method from \citep{ gao2024codesign}, or evolutionary strategies~\citep{stanley2019designing}. While these methods have empirically been shown to be stable and robust in other co-design settings, and may enable the extension of HetGPS to non-differentiable settings, they typically require more samples and may exhibit more noise compared to the direct backpropagation approach utilized in this work.
\section{HetGPS Under Adversarial Initial Conditions}
\label{app:adverseinitializations}

To evaluate the robustness of HetGPS to initialization, we repeated the Softmax experiment in Multi-Goal-Capture (Fig. 5a) with adverse initialization. We initialized the outer aggregator $U$ with $\tau=5$ (making it convex) and the inner aggregator $T$ with $\tau=-5$ (making it concave), which is the opposite of the concave-convex configuration predicted by theory to maximize heterogeneity gain.

As shown in Table~\ref{tab:adverse_init}, HetGPS successfully overcomes the adverse initialization and converges towards the theoretically optimal parameters (large positive $\tau$ for T, large negative $\tau$ for U).

\begin{table}[h]
\centering
\caption{Convergence of HetGPS parameters ($\tau$) in the Softmax Multi-Goal-Capture experiment starting from adverse initialization ($\tau_T=-5, \tau_U=5$). Mean and standard deviation reported over 3 seeds.}
\label{tab:adverse_init}
\begin{tabular}{@{}lcccc@{}}
\toprule
\textbf{Frames (M)} & \textbf{0} & \textbf{50} & \textbf{75} & \textbf{100} \\
\midrule
$\tau$ of T (Inner Agg.) & $-5.0\pm0.0$ & $13.32\pm2.15$ & $18.88\pm3.96$ & $22.95\pm5.71$ \\
$\tau$ of U (Outer Agg.) & $5.0\pm0.0$ & $-10.26\pm1.70$ & $-14.59\pm2.24$ & $-17.16\pm2.43$ \\
\bottomrule
\end{tabular}
\end{table}
\section{Best Practices, Limitations, and Open Questions}
\label{appendix:limitations}

We list a number of limitations, open questions, and possible extensions. We then discuss best practices: where and how should our theoretical results be applied? What parameters should HetGPS optimize?

\subsection{Theoretical scope}
\begin{itemize}[leftmargin=1.5em]

\item \textbf{Beyond task-allocation RL domains.}  
      The benchmark domains we study  and the additional settings covered in \autoref{appendix:examples_of_marl_environments} all fit into our abstract task-allocation
      framework: we can interpret agents' state, such as goal proximity in \capturetheflag, or whether they captured an escaping agent in tag, abstractly as ``efforts'' $r_{ij}$ and represent the reward in terms of such efforts.  This is what enables us to make predictions about these environments. Although our framework is quite general, and accommodates environments that one might not traditionally view as ``task allocation'' (such as football and tag), several notable multi-agent RL domains, e.g., multi-robot manipulation, might not be representable within this  framework. Our heterogeneity analysis does not directly apply to these settings, and extending our results to them is important for getting a complete picture of the benefits of heterogeneity.
\end{itemize}

\subsection{Algorithmic assumptions}
\begin{itemize}[leftmargin=1.5em]
    \item \textbf{Differentiable simulation.}  
          HetGPS requires $\nabla_\theta G^\theta(\boldsymbol\pi)$, hence a simulator that is
          end-to-end differentiable and tractable to back-propagate through.  We assume differentiability mainly for considerations of training efficiency. However, many realistic environments still rely on non-smooth physics or black-box generators, requiring us to modify HetGPS for these settings. We note that there are good, established methods for learning environment parameters in non-differentiable settings (at the cost of efficiency/increased noise). We discuss these in detail in \autoref{app:stability_of_bilevel_optimization}. However, we did not test such variants of HetGPS, and leave these extensions to future work. 
\end{itemize}

\subsection{Open questions}
\begin{enumerate}[leftmargin=2em,label=\roman*.]
    \item \textbf{What is the connection between the transition function and heterogeneity?}  
          Our analysis is reward-centric: the curvature criterion reasons only about the team reward.  
          In a Dec-POMDP, however, heterogeneity can be beneficial purely because agents are constrained by \emph{state transitions}.  When do state transition dynamics benefit heterogeneity?

    \item \textbf{Learning dynamics vs.\ reward structure.}  
          The theory predicts whether a given reward structure \textit{enables} an advantage to heterogeneous teams,  not whether a
          particular learning algorithm will learn in response to it. This is connected to the difference between \textit{neural} and \textit{behavioral} heterogeneity that we emphasize throughout the paper. Our experiments suggest, empirically, that neurally heterogeneous agents will, in practice, learn to exploit heterogeneous reward structures (i.e., be behaviorally heterogeneous); but can a formal link be established between our reward structure insights and what reward the learning dynamics converge to in practice?
\end{enumerate}

Tackling these challenges would sharpen our understanding of \emph{when} and \emph{how}
diversity should be engineered in cooperative multi-agent learning.

\subsection{Scope of the Curvature Analysis}

The theoretical framework presented in Section \ref{sec:problemsetting} provides a precise characterization of the heterogeneity gain based on the curvature of reward aggregators. We provide an extended discussion of when we expect our theoretical predictions can, and cannot, be applied for deciding whether to use heterogeneous or homogeneous agent policies.  

\paragraph{Symmetry and Monotonicity.}
Our analysis hinges on the definition of generalized aggregators as symmetric and coordinate-wise non-decreasing. These assumptions are appropriate for studying emergent behavioral specialization among capability-identical agents. Symmetry ensures that agents (and tasks) are interchangeable \textit{ex-ante}, isolating how the reward structure drives specialization. If symmetry is violated (e.g., due to inherently heterogeneous agent capabilities), heterogeneity is often trivially necessary. Monotonicity ensures a rational cooperative setting where increased effort does not decrease the reward.

\paragraph{Effort Constraints.}
We define the feasible effort space over the closed unit simplex (where efforts sum $\le 1$). However, because both the inner aggregators $T_j$ and the outer aggregator $U$ are non-decreasing, any optimal allocation—whether homogeneous or heterogeneous—will necessarily saturate the budget constraint (efforts sum $= 1$). Therefore, our analysis focuses on this efficient frontier without loss of generality.

\paragraph{Constant-Sum Task Score Constraints.}
It is crucial to clarify that the assumption of constant-sum task scores ($\sum_j T_j(a_j) = C$) is specific only to Theorem 3.3, enabling the use of majorization to analyze the outer aggregator $U$. Theorems 3.1 and 3.2, and our sum-form aggregator analysis (\autoref{appendix:sumform_aggregators}), do not rely on this.

When this assumption is violated, the analysis of $\Delta R$ involves a trade-off between the distribution of scores (influenced by curvature) and their total magnitude. Despite this complexity, our empirical findings (Section 5) suggest that the curvature analysis remains a robust heuristic for predicting the heterogeneity gain even when the task scores are variable-sum.

\paragraph{Reward functions that do not precisely follow the theory.} Our football experiments show that even when only part of the reward function adheres to our curvature theory (e.g., it is a sum $R(A)=R_1(A)+R_2(A)$ where $R_1$ is concave-convex and $R_2$ is a function with unclear curvature), our theoretical results may still predict the heterogeneity gain. We make no formal claims about the robustness of our predictions in such scenarios, but it is valuable to keep in mind that even if the entire reward function does not perfectly follow the theory, it may still be worthwhile to see what the concave-convex curvature test says about the part of it that does.

\end{document}